\author{Christopher Shirley\thanks{The author thanks his supervisor Frédéric Klopp, for his advice and guidance over the course of the study}}
\title{Decorrelation estimates for some continuous and discrete random schr\"{o}dinger operators in
dimension one and applications to spectral statistics}
\newtheorem{theo}{Theorem}[section]
\newtheorem{prop}[theo]{Proposition}
\newtheorem{lem}[theo]{Lemma}
\newtheorem{rem}[theo]{Remark}
\numberwithin{equation}{section}
\newcommand{\R}{\mathbb{R}}
\newcommand{\N}{\mathbb{N}}
\newcommand{\Z}{\mathbb{Z}}
\newcommand{\C}{\mathbb{C}}
\begin{document}
\maketitle

\begin{abstract}
The purpose of the present work is to establish decorrelation estimates at distinct energies for some random Schrödinger operator in dimension one. In particular, we establish the result for some random operators on the continuum with alloy-type potential. These results are used to give a description of the spectral statistics.
\end{abstract}

\section{Introduction}
To  introduce our results, let us first consider one of the random operators that will be studied in the rest of this article. Let $(\omega_n)_{n\in\Z}$ be independent random variables, uniformly distributed on $[0,1]$ and define the random potential $V_\omega(x)=\omega_n$ on $(n,n+1)$. Consider the operator $H_\omega: L^2(\R)\rightarrow  L^2(\R)$ defined by the following equation

\begin{equation}\label{simpleop}
\forall \phi\in \mathcal{H}^2(\R), H_\omega \phi=-\Delta \phi+V_\omega \phi.
\end{equation}

We know that, with probability one, $H_\omega$ is self-adjoint. As $H_\omega$ is $\Z$-ergodic, we know that there exists a set $\Sigma$ such that, with probability one, the spectrum of $H_\omega$ is equal to $\Sigma$ (see for instance \cite{CL90}). One of the purposes of this article is to give a description of the spectral statistics of $H_\omega$. In this context, we study the restriction of $H_\omega$ to a finite box and study the diverse statistics when the size of the box tends to infinity. For $L\in\N$, let $\Lambda_L=[-L,L]$ and $H_\omega(\Lambda_L)$ be the restriction of $H_\omega$ to $L^2(\Lambda_L)$ with Dirichlet boundary conditions. The spectrum of $H_\omega(\Lambda_L)$ is discrete, accumulate at $+\infty$. We denote $(E_j)_{j\in\N}$ the eigenvalues of $H_\omega(\Lambda)$, ordered increasingly and repeated according to multiplicity. 
We know from the $\Z$-ergodicity that there exists a deterministic, nondecreasing function $N$ such that, almost surely, we have
\begin{equation}
N(E)=\lim_{L\to\infty}\dfrac{\sharp\{j,E_j<E\}}{|\Lambda_L|}.
\end{equation}
The function $N$ is the integrated density of state (abbreviated IDS from now on), and it is the distribution function of a measure $dN$.

In order to study the spectral statistics of $H_\omega(\Lambda)$ we use four results : the localization assumption, the Wegner estimates, the Minami estimates and the decorrelation estimates for distinct energies. They will be introduced in the rest of the section.

Let $\mathcal{I}$ be an open relatively compact subset of $\R$. We know from \cite{K14} that the operator satisfies the following localization assumption.

\textbf{(Loc): } for all $\xi\in(0,1)$, one has
\begin{equation}
\sup_{L>0} 
\underset{|f|\leq 1}{\underset{\text{supp }f\subset \mathcal{I}}\sup}
\mathbb{E}\left(\sum_{\gamma\in\Z^d}e^{|\gamma|^\xi}\|\textbf{1}_{[-1/2,1/2]}f(H_\omega(\Lambda_L))\textbf{1}_{[\gamma-1/2,\gamma+1/2]}\|_2\right)<\infty
\end{equation}
We know (see for instance \cite{CHK07}) that the following Wegner estimates hold on $\mathcal{I}$: 

\textbf{(W) :} There exists $C>0$,   such that for $J\subset \mathcal{I}$ and $L\in\N$ 
\begin{equation}
\mathbb{P}\Big[\text{tr} \left(\textbf{1}_J(H_\omega(\Lambda_L)) \right)\geq 1\Big ]\leq C |J||\Lambda_L|.
\end{equation}
This shows that the integrated density of state (abbreviated IDS from now on) $N(.)$ is Lipschitz continuous. As the IDS is a non-decreasing function, this implies that $N$ is almost everywhere differentiable and its derivative $\nu(.)$ is positive almost-everywhere on its essential support.

Let us now introduce the Minami estimates. We extract from \cite{K11} the
\begin{theo}[M]\label{mina-int}
Fix $J\subset\mathcal{I}$ a compact interval. For any $s'\in(0,1)$, $M>1$, $\eta>1$, $\rho\in(0,1)$, there exists $L_{s',M,\eta,\rho}>0$ and $C=C_{s',M,\eta,\rho}>0$ such that, for $E\in J$, $L\geq L_{s',M,\eta,\rho}$ and $\epsilon\in[L^{-1/s'}/M,ML^{-1/s'}]$ , one has
\begin{displaymath}
\sum_{k\geq2}\mathbb{P}\big(\textup{tr}[\textbf{1}_{[E-\epsilon,E+\epsilon]}(H_\omega(\Lambda_L))]\geq k\big)\leq C( \epsilon L )^{1+\rho}.
\end{displaymath}
\end{theo}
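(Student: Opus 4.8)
Write $N_\omega=N_\omega(E,\epsilon,L):=\mathrm{tr}[\mathbf 1_{[E-\epsilon,E+\epsilon]}(H_\omega(\Lambda_L))]$. Since $N_\omega$ is $\N$-valued, one has $\sum_{k\ge2}\mathbb P(N_\omega\ge k)=\mathbb E[(N_\omega-1)_+]$, so it suffices to bound $\mathbb E[(N_\omega-1)_+]$ by $C(\epsilon L)^{1+\rho}$. The plan is to deduce this from the three local inputs: the localization assumption (Loc), the Wegner estimate (W), and a Minami estimate on boxes of a size $\ell$ that grows only like a power of $\log L$ (the classical Minami estimate with its uniform constant in the discrete case; its continuum analogue for models whose single-site profiles satisfy the covering condition $\sum_n\mathbf 1_{(n,n+1)}\equiv1$ in the alloy-type case). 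The elementary fact driving every estimate below is that $s'<1$ forces $\epsilon L\asymp L^{1-1/s'}\to0$: the target exponent $1+\rho$ is therefore \emph{weaker} than the genuine Minami exponent $2$, which lets us absorb all polylogarithmic losses and replace any $(\epsilon L)^2$ by $(\epsilon L)^{1+\rho}$ once $L$ is large.

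\emph{Step 1 (decoupling via localization).} Fix $\xi\in(0,1)$ and let $\ell=\ell(L)$ be a sufficiently large power of $\log L$ (larger than twice the localization length produced here). Applying (Loc) with $f=\mathbf 1_{[E-\epsilon,E+\epsilon]}$, together with a Chebyshev argument and the standard passage from eigenfunction-correlator bounds to eigenfunction localization, one gets an event $\mathcal B_L$ with $\mathbb P(\mathcal B_L)\le L^{-p}$ — for every fixed $p$, once $L$ is large — off which every normalized eigenfunction $\psi$ of $H_\omega(\Lambda_L)$ with eigenvalue in $J$ has a localization centre $x(\psi)$ with $\psi$ decaying like $e^{-c|x-x(\psi)|^{\xi'}}$ ($\xi'<\xi$) away from $x(\psi)$. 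A partition-of-unity (IMS) computation then shows that, on $\mathcal B_L^{\,c}$, the restriction of such a $\psi$ to the box $\Lambda(x(\psi),\ell/2)$ of side $\ell$ is a quasi-mode of the Dirichlet restriction $H_\omega(\Lambda(x(\psi),\ell/2))$ with residual $\le e^{-c\ell^{\xi'}}$; more generally, if $r$ such eigenfunctions have localization centres inside a common box $\Lambda$ of side $s\le L$, then $H_\omega(\Lambda)$ has at least $r$ eigenvalues in $[E-2\epsilon,E+2\epsilon]$ (here $\epsilon\gtrsim L^{-1/s'}\gg\sqrt L\,e^{-c\ell^{\xi'}}$ converts the quasi-mode residual into the enlarged window, with $\sqrt L$ absorbing the a priori bound on $r$).

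\emph{Step 2 (counting, and the bad event).} On $\mathcal B_L^{\,c}$, list the localization centres $x_1,\dots,x_{N_\omega}$ of the eigenfunctions with eigenvalue in $[E-\epsilon,E+\epsilon]$, and split according to the event $A:=\{\exists\,i\ne j:\ |x_i-x_j|\le\ell\}$. On $A$ some box $\Lambda(x_i,\ell)$ of side $2\ell$ contains $\ge2$ eigenvalues of the restricted operator in $[E-2\epsilon,E+2\epsilon]$, so by the Minami estimate on side-$2\ell$ boxes and a union bound over the $\le CL$ integer centres, $\mathbb P(A)\le CL(\epsilon\ell)^2$; together with the deterministic bound $N_\omega\le\mathrm{tr}[\mathbf 1_{(-\infty,\sup\mathcal I+1]}(H_\omega(\Lambda_L))]\le CL$ (from $V_\omega\ge0$ and Dirichlet bracketing against $-\Delta$) this gives $\mathbb E[(N_\omega-1)_+\mathbf 1_{A}]\le CL\cdot\mathbb P(A)$. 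On $\mathcal B_L^{\,c}\setminus A$, whenever $N_\omega\ge k$ the corresponding eigenvalues sit in $k$ pairwise disjoint side-$\ell$ boxes; since restrictions to disjoint boxes depend on disjoint families of the $\omega_n$'s, hence are independent, and (W) gives probability $\le C\epsilon\ell$ per box, a union bound yields $\mathbb P(N_\omega\ge k,\ \mathcal B_L^{\,c}\setminus A)\le\binom{CL}{k}(C\epsilon\ell)^k$. Hence
\[
\mathbb E\big[(N_\omega-1)_+\mathbf 1_{\mathcal B_L^{\,c}}\big]\ \le\ CL\cdot\mathbb P(A)\ +\ \sum_{k\ge2}\binom{CL}{k}(C\epsilon\ell)^k\ \le\ C(\epsilon\ell L)^2+(C\epsilon\ell L)^2\,e^{C\epsilon\ell L}.
\]
As $\epsilon\ell L\to0$ ($\ell$ being polylogarithmic), the right-hand side is $\le C(\epsilon L)^2(\log L)^{O(1)}\le(\epsilon L)^{1+\rho}$ for $L$ large. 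On $\mathcal B_L$, $\sum_{k\ge2}\mathbb P(N_\omega\ge k,\ \mathcal B_L)\le\mathbb E[N_\omega\mathbf 1_{\mathcal B_L}]\le CL\cdot L^{-p}\le(\epsilon L)^{1+\rho}$ provided $p>1+(\tfrac1{s'}-1)(1+\rho)$. Adding the two contributions gives $\mathbb E[(N_\omega-1)_+]\le C(\epsilon L)^{1+\rho}$, uniformly for $E\in J$ and $\epsilon\in[L^{-1/s'}/M,\,ML^{-1/s'}]$, with $L_{s',M,\eta,\rho}$ absorbing the finitely many ``$L$ large'' thresholds; the parameter $\eta$ does not enter the bound.

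\emph{Main obstacle.} The delicate point is Step 1: converting the probabilistic statement (Loc) into the deterministic comparison ``$r$ localization centres in a box of side $s$ $\Rightarrow$ at least $r$ eigenvalues of the box operator in the enlarged window'', with faithful bookkeeping of multiplicities (eigenfunctions shared between nearby boxes, or several clustered in one box) and with the quasi-mode residual $e^{-c\ell^{\xi'}}$ kept genuinely negligible against $\epsilon\sim L^{-1/s'}$ — this is the heart of the reduction and the only place requiring more than soft estimates. The other point needing care is securing the Minami input at the intermediate (polylogarithmic) scale $\ell$ for the continuum alloy-type operator, as opposed to quoting the classical discrete Minami estimate; here one invokes the continuum Minami estimate valid under the covering condition met by the profiles $\mathbf 1_{(n,n+1)}$.
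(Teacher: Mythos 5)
There is a genuine gap, and it sits exactly where you flag your ``main obstacle'' --- but the second point you raise there, not the first. Your Step 2 handles the event $A$ (two localization centres within distance $\ell$) by invoking ``the Minami estimate on side-$2\ell$ boxes''. For the models this paper is about, no such input exists independently of the theorem you are proving. Minami's classical two-eigenvalue estimate (and its refinements in \cite{CGK09}) is available at every scale only for the discrete Anderson model with rank-one, non-overlapping single-site perturbations; but for that model the estimate already holds at scale $L$ with the stronger exponent $2$, so your whole multiscale reduction would be superfluous. For the continuum alloy-type operator \eqref{fullopcont}, for discrete alloy-type potentials with overlapping supports, and for the multimer models \eqref{fullopdis}, there is no ``classical'' Minami estimate at the polylogarithmic scale $\ell$ to feed into Step 2: the covering condition (H2) gives you a Wegner estimate, not a two-level estimate. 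So your argument is circular precisely for the models where the theorem has content. The event $A$ cannot be dispatched by a union bound plus a small-scale Minami input; controlling the probability that two eigenvalues localize in the \emph{same} small box is the entire difficulty.

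For what it is worth, the paper does not prove this statement at all: it is imported from \cite{K11}, \cite{K14} and \cite{S13}, where (in dimension one) the Minami estimate with exponent $1+\rho$ is \emph{derived from} the Wegner estimate, localization, and independence at a distance by a mechanism specific to $d=1$ (a decomposition of the box operator and a dichotomy for two eigenfunctions localized near the same site), not by rescaling a pre-existing Minami estimate. Your Steps 1 and 2 off the event $A$ --- decoupling via (Loc), independence of disjoint boxes (which, note, requires the boxes to be separated by at least the diameter $2N$ of the single-site potential), and the product of Wegner bounds --- are the standard and correct part of such reductions; the missing idea is the replacement for the small-scale two-eigenvalue input, which is where all the work in the cited references lies.
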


One purpose of this article is, as in \cite{GK10}, to give a description of spectral statistics. For instance, we obtain the following result. Define the \textit{unfolded local level statistics} near $E_0$ as the following point process :
\begin{equation}
\Xi(\xi;E_0,\omega,\Lambda)=\sum_{j\geq1} \delta_{\xi_j(E_0,\omega,\Lambda)}(\xi)
\end{equation}
 where
 \begin{equation}
 \xi_j(E_0,\omega,\Lambda)=|\Lambda|(N(E_j(\omega,\Lambda)-N(E_0)).
 \end{equation}
The unfolded local level statistics are described by the following theorem which corresponds to \cite[Theorem 1.9]{GK10} with a stronger hypothesis.

\begin{theo}\label{ULLS}
Pick $E_0\in \mathcal{I}$ such that $N(.)$ is differentiable at $E_0$ and $\nu(E_0)>0$.Then, when $|\Lambda|\to \infty$, the point process
$\Xi(\xi;E_0,\omega,\Lambda)$ converges weakly to a Poisson process with intensity the Lebesgue measure. That is, for any $p\in\N^*$, for any $(I_i)_{i\in\{1,\dots,p\}}$ collection of disjoint intervals
\begin{equation}
\lim_{|\Lambda|\to\infty}\mathbb{P}
\left(
\left\{\omega;
\begin{aligned}
\sharp\{j;\xi_j(\omega,\Lambda)\in I_1\}=k_1\\
\vdots\hspace*{8em} \vdots\hspace*{1em}\\
\sharp\{j;\xi_j(\omega,\Lambda)\in I_p\}=k_p
\end{aligned}
\right\}\right)=\dfrac{|I_1|^{k_1}}{k_1!}\dots\dfrac{|I_p|^{k_p}}{k_p!}
\end{equation}

\end{theo}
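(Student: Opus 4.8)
The plan is to follow the classical Minami--Molchanov--type argument, as carried out in \cite{GK10}, showing that the point process $\Xi(\xi;E_0,\omega,\Lambda)$ converges to a Poisson process by verifying the two hypotheses of a standard Poisson-convergence criterion (in the spirit of Kallenberg): first, that the intensity converges, i.e. $\mathbb{E}(\Xi(I;E_0,\omega,\Lambda)) \to |I|$ for any bounded interval $I$; and second, that the process is asymptotically without multiple points, i.e. $\mathbb{P}(\Xi(I;E_0,\omega,\Lambda) \geq 2) \to 0$, together with the corresponding factorization over disjoint intervals. The unfolding by $\xi_j = |\Lambda|(N(E_j) - N(E_0))$ is designed precisely so that, since $N$ is differentiable at $E_0$ with $\nu(E_0)>0$, the window $\xi \in I$ corresponds to an energy window $E \in [E_0 + a/(|\Lambda|\nu(E_0)) + o(1/|\Lambda|), E_0 + b/(|\Lambda|\nu(E_0)) + o(1/|\Lambda|)]$ when $I = [a,b]$; thus the eigenvalue count in $I$ equals $\operatorname{tr}\mathbf{1}_{J_\Lambda}(H_\omega(\Lambda_L))$ for an interval $J_\Lambda$ of length $\sim (b-a)/(|\Lambda|\nu(E_0))$.

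First I would establish convergence of the intensity. Writing $\Lambda = \Lambda_L$ with $|\Lambda_L| = 2L$, one has $\mathbb{E}(\Xi(I;E_0,\omega,\Lambda_L)) = \mathbb{E}(\operatorname{tr}\mathbf{1}_{J_\Lambda}(H_\omega(\Lambda_L)))$ which, by definition of the IDS as the limit of the normalized eigenvalue counting measure and the Wegner estimate (W) guaranteeing uniform control (no escape of mass, absolute continuity of the relevant approximations), converges to $|\Lambda_L| \cdot (N(E_0 + b/(|\Lambda_L|\nu(E_0))) - N(E_0 + a/(|\Lambda_L|\nu(E_0)))) \to (b-a) = |I|$ using differentiability of $N$ at $E_0$. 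This step is essentially soft and uses only (W) and the definition of $N$.

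Next I would handle the absence of multiple points and the independence/factorization across disjoint intervals. For a single interval, $\mathbb{P}(\Xi(I;E_0,\omega,\Lambda_L) \geq 2) \leq \sum_{k\geq 2}\mathbb{P}(\operatorname{tr}\mathbf{1}_{J_\Lambda}(H_\omega(\Lambda_L)) \geq k)$, and since $|J_\Lambda| \sim c/|\Lambda_L|$ with $|\Lambda_L| = 2L$, one may apply the Minami estimate (M) (with $\epsilon \sim c'/L$, so $\epsilon L$ bounded, taking $s'$ close to $1$ and $\rho \in (0,1)$) to bound this by $C(\epsilon L)^{1+\rho} \to 0$. The factorization over $p$ disjoint intervals $I_1,\dots,I_p$ is where the real work lies: one must show that the eigenvalue counts in the corresponding disjoint energy windows $J_\Lambda^{(1)},\dots,J_\Lambda^{(p)}$ become asymptotically independent. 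The standard route is to partition $\Lambda_L$ into $\sim L/\ell_L$ subcubes of intermediate size $\ell_L$ (e.g. $\ell_L = L^{\alpha}$ for suitable $\alpha < 1$), use the localization hypothesis (Loc) to show each relevant eigenfunction is exponentially concentrated in one subcube, hence that $H_\omega(\Lambda_L)$ is, up to negligible error, a direct sum over the subcubes; then the counts become sums of independent contributions, one per subcube, each subcube contributing at most one eigenvalue in any given small window with high probability (again by (M)), and a classical limit theorem for sums of independent rare Bernoulli-like events (a Poisson approximation / arrays argument) gives the product formula. The decorrelation estimates at \emph{distinct} energies — the main technical contribution announced in the abstract — are what allow this argument to be extended to describe the \emph{joint} statistics near two or more distinct energies $E_0 \neq E_0'$; for the single-energy statement of Theorem 1.3 they are not needed, only (Loc), (W), and (M).

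The main obstacle I expect is the decoupling step: controlling the error in replacing $H_\omega(\Lambda_L)$ by the direct sum over subcubes and quantifying it against the probabilities of the rare events, so that the accumulated boundary/tunneling errors over all $\sim L/\ell_L$ subcubes remain negligible. This requires carefully balancing the subcube size $\ell_L$ — large enough that (Loc) gives exponential smallness $e^{-\ell_L^\xi}$ beating the polynomial number of subcubes, yet small enough that (M) still applies on each subcube in the correct $\epsilon$-regime — and is exactly the point at which the ``stronger hypothesis'' mentioned after the theorem statement (namely the strong form of localization (Loc)) is used in place of the weaker assumptions of \cite[Theorem 1.9]{GK10}.
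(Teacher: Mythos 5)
The paper itself offers no proof of Theorem~\ref{ULLS}: it states that the result ``corresponds to [GK10, Theorem 1.9]'' and that (M) is used to prove it, referring the reader to \cite{GK10}. Your outline is exactly the strategy of that reference (intensity convergence from (W) and differentiability of $N$ at $E_0$, reduction to an approximate direct sum over subcubes of size $\ell_L=L^\alpha$ via (Loc), Minami on each subcube, Poisson approximation for triangular arrays of independent rare events), so in substance you are reconstructing the delegated proof rather than diverging from it.

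There is, however, one concrete flaw in your write-up of the ``no multiple points'' step. You propose to apply (M) directly on the full box $\Lambda_L$ with $\epsilon\sim c'/L$, and claim the bound $C(\epsilon L)^{1+\rho}\to 0$. This fails twice. First, the Minami estimate as stated here only holds for $\epsilon\in[L^{-1/s'}/M,\,ML^{-1/s'}]$ with $s'\in(0,1)$, i.e.\ for $\epsilon\asymp L^{-1/s'}\ll L^{-1}$; the choice $\epsilon\sim c'/L$ sits outside this range for every admissible $s'$, and ``taking $s'$ close to $1$'' does not repair this for fixed $s'$. Second, even granting the estimate, if $\epsilon L$ is bounded away from $0$ then $(\epsilon L)^{1+\rho}$ is merely bounded, not vanishing. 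The correct route --- which you in fact describe in your factorization paragraph --- is to first pass to the subcubes $\Lambda_{\ell}$ with $\ell=L^\alpha$ using (Loc), and apply (M) on each subcube with $\epsilon\sim c/L=c\,\ell^{-1/\alpha}$, which is admissible with $s'=\alpha$; then $\sum_{\text{subcubes}}(\epsilon\ell)^{1+\rho}\asymp L^{1-\alpha}\,L^{(\alpha-1)(1+\rho)}=L^{-(1-\alpha)\rho}\to 0$. So the no-double-points bound must be run \emph{after} the localization-based decoupling, not before; as literally written, that step of your argument does not close.
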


Now, one can wonder what is the joint behaviour at large scale of the point processes $\Xi(\xi;E_0,\omega,\Lambda)$ and $\Xi(\xi;E_1,\omega,\Lambda)$ with $E_0\neq E_1$. We obtain the following theorem which corresponds to \cite[Theorem 1.11]{GK10}.

\begin{theo}\label{joint}
Pick $(E_0,E_0')\in \mathcal{I}^2$ such that $E_0\neq E_0'$ and such that $N(.)$ is differentiable at $E_0$ and $E_0'$ with $\nu(E_0)>0$ and $\nu(E_0')>0$.\\
When $|\Lambda|\rightarrow \infty$ the point processes $ \Xi(E_0,\omega,\Lambda)$ and $\Xi(E_0',\omega,\Lambda)$, converge weakly respectively to two independent Poisson processes on $\R$ with intensity the Lebesgue measure. That is, for any $(J_+,J_-)\in(\N^*)^2$, for any $(U_j^+)_{1\leq j\leq J_+}\subset\R^{J_+}$ and $(U_j^-)_{1\leq j\leq J_-}\subset\R^{J_-}$ collections of disjoint compact intervals, one has
\begin{displaymath}
\mathbb{P}\left(
\begin{aligned}
\sharp\{j;\xi_j(E_0,\omega,\Lambda)\in U_1^+\}&=k_1^+\\
\vdots\hspace*{8em} &\vdots\\
\sharp\{j;\xi_j(E_0,\omega,\Lambda)\in U_{J_+}^+\}&=k_{J_+}^+\\
\sharp\{j;\xi_j(E_0',\omega,\Lambda)\in U_1^-\}&=k_1^-\\
\vdots\hspace*{8em} &\vdots\\
\sharp\{j;\xi_j(E_0',\omega,\Lambda)\in U_{J_-}^-\}&=k_{J_-}^-
\end{aligned}
\right)\underset{|\Lambda|\to\infty}{\rightarrow} \prod_{j=1}^{J_+}\dfrac{|U_j^+|^{k_j^+}}{k_j^+!}e^{-|U_j^+|}\cdot\prod_{j=1}^{J_-}\dfrac{|U_j^-|^{k_j^-}}{k_j^-!}e^{-|U_j^-|}.
\end{displaymath}
\end{theo}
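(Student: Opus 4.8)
The plan is to follow the scheme of \cite{GK10}, deducing Theorem~\ref{joint} from three ingredients: the localization assumption (Loc), the Minami estimate (M) for each energy separately (which already gives the marginal convergence to a Poisson process via Theorem~\ref{ULLS}), and \emph{decorrelation estimates at distinct energies}, which are precisely the estimates announced in the title of this paper and which must be established for the operators under consideration. Concretely, one first reduces via (Loc) to studying the eigenvalues of $H_\omega$ restricted to a union of disjoint sub-cubes of $\Lambda$ of side length $\ell\sim (\log|\Lambda|)^{1/\xi}$; on each such sub-cube the eigenvalues near $E_0$ (resp.\ $E_0'$) are well approximated, and by independence of the potential on disjoint cubes the contributions of distinct cubes are independent. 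The marginal statistics at $E_0$ and at $E_0'$ are then each asymptotically Poissonian by the argument proving Theorem~\ref{ULLS}.

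The key new step is to prove that eigenvalues near $E_0$ and eigenvalues near $E_0'$ in the \emph{same} small cube do not appear simultaneously with non-negligible probability; quantitatively, for cubes $\Lambda_\ell$ with $\ell = \ell(L)$ growing like a power of $\log L$ and for $\epsilon \sim \ell^{-d}$ one needs a bound of the form
\begin{equation}
\mathbb{P}\Big[\mathrm{tr}\,\mathbf{1}_{[E_0-\epsilon,E_0+\epsilon]}(H_\omega(\Lambda_\ell))\geq 1 \ \text{and}\ \mathrm{tr}\,\mathbf{1}_{[E_0'-\epsilon,E_0'+\epsilon]}(H_\omega(\Lambda_\ell))\geq 1\Big]\leq C\,(\epsilon\ell)^2\,(\log\ell)^{\beta}
\end{equation}
for some $\beta$, or more precisely a summable-in-cubes version of it. Granting such a decorrelation estimate, the standard computation of \cite{GK10} shows that in the limit the joint Laplace functional (equivalently, the joint probability generating function over the disjoint intervals $U_j^\pm$) factorizes into the product of the two marginal Laplace functionals, each of which converges to that of a Poisson process of intensity the Lebesgue measure; this yields the asserted product formula with the factors $\frac{|U_j^+|^{k_j^+}}{k_j^+!}e^{-|U_j^+|}$ and $\frac{|U_j^-|^{k_j^-}}{k_j^-!}e^{-|U_j^-|}$.

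The proof of the decorrelation estimate itself is where the real work lies and where the one-dimensional structure is exploited. The strategy is to use that in dimension one an eigenvalue of $H_\omega(\Lambda_\ell)$ near a fixed energy $E$, together with its eigenfunction, depends in a controlled and essentially monotone way on the random parameters $(\omega_n)$ entering the potential on the sub-cube, via the Feynman--Hellmann formula $\partial_{\omega_n}E_j = \langle \psi_j, (\partial_{\omega_n}V_\omega)\psi_j\rangle$. One shows that the Jacobian of the map $(\omega_n)\mapsto (E_j(E_0),E_k(E_0'))$ sending the two relevant eigenvalues to their values is non-degenerate with high probability — intuitively, eigenfunctions at two distinct energies have, on a box of size $\ell$, different oscillation patterns, so the two gradients $\nabla_\omega E_j$ and $\nabla_\omega E_k$ are not colinear — and then a change of variables reduces the joint small-interval probability to a product of two Wegner-type one-dimensional probabilities, each contributing a factor $O(\epsilon\ell)$. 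The main obstacle is exactly this non-degeneracy: one must quantify, uniformly over the randomness outside an event of probability $o((\epsilon\ell)^2)$, that the two eigenvalue gradients are quantitatively independent; this requires careful use of the eigenvalue equation, of the boundary/transfer-matrix structure in one dimension, and of (Loc) to localize the eigenfunctions, and it is here that the restriction to $d=1$ and to the specific (alloy-type) random potentials becomes essential. Once the decorrelation estimate is in hand, the remaining steps are the routine reduction and limiting arguments borrowed from \cite{GK10,K11}.
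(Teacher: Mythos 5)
Your proposal follows exactly the route the paper takes: Theorem~\ref{joint} is obtained by feeding the decorrelation estimate at distinct energies (Theorem~\ref{decoInt}/\ref{deco}) into the machinery of \cite{GK10}, and your sketch of how that estimate is proved (non-degeneracy of the Jacobian of $(\omega_\gamma,\omega_{\gamma'})\mapsto(E_j,E_k)$, ruling out near-colinearity of the eigenvalue gradients via the one-dimensional transfer-matrix structure, then two Wegner-type factors) is precisely the content of Section~3 and Lemma~\ref{probcoli}. The only cosmetic discrepancy is in the scales you quote for the decorrelation estimate (the paper states it for boxes of side $l\asymp L^\alpha$ and windows of width $L^{-1}$, reducing to $\log$-size boxes only inside its proof), but this does not change the argument.
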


To prove this theorem we use decorrelation estimates at distinct energies. 
\begin{theo}\label{decoInt}
There exists $\gamma>0$ such that for any $\beta\in(1/2,1)$, $\alpha\in (0,1) $ and $(E,E')\in(\R)^2$ such that at $E\neq E'$, for any $k>0$ there exists $C>0$ such that for $L$ sufficiently large and $kL^\alpha\leq l\leq L^\alpha/k$ we have 
\begin{displaymath}
\mathbb{P}\left(
\begin{aligned}
 \textup{tr}\, \textbf{1}_{[E-L^{-1},E+L^{-1}]}\left(H_\omega(\Lambda_l)\right)\neq 0,\\
 \textup{tr}\, \textbf{1}_{[E'-L^{-1},E'+L^{-1}]}\left(H_\omega(\Lambda_l)\right)\neq 0
\end{aligned}
\right)\leq C\dfrac{l^2}{L^{1+\gamma}}.
\end{displaymath}
\end{theo}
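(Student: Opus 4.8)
The plan is to adapt the strategy of Klopp for decorrelation estimates, reducing the two-energy event to a statement about the joint probability that two distinct "energy windows" are each occupied by an eigenvalue of $H_\omega(\Lambda_l)$, and then to exploit the fact that, at distinct energies $E\neq E'$, the corresponding eigenfunctions are governed by essentially independent pieces of the randomness. Since the scale is $l\sim L^\alpha$ with $\alpha<1$ while the window width is $2L^{-1}$, each window, by the Wegner estimate (W), is occupied with probability $O(l\,L^{-1})$; the naive independence bound would give $O(l^2L^{-2})$, which is already of the desired form with $\gamma=1$. The real content is therefore to show that one does \emph{not} lose too much when the two events fail to be independent, i.e. to control the conditional probability that the second window is occupied given that the first one is, and to show it is still $O(l\,L^{-1-\gamma'})$ for some $\gamma'>0$ after summing over the relevant partition of the box.

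First I would set up the standard partition of $\Lambda_l$ into subboxes of a fixed large but bounded size $R$, and use the localization assumption (Loc) to reduce, up to an error that is superpolynomially small in $l$ (hence negligible against $l^2L^{-1-\gamma}$ once $\gamma<1$), to the event that there is a subbox $\Lambda_R^{(i)}$ carrying an eigenvalue near $E$ and a subbox $\Lambda_R^{(j)}$ carrying an eigenvalue near $E'$, with localized eigenfunctions. I would then split into two regimes: (i) $i=j$ or the subboxes are close (within some slowly growing distance), and (ii) the subboxes are far apart. In regime (ii), the restrictions of $H_\omega$ to $\Lambda_R^{(i)}$ and $\Lambda_R^{(j)}$ depend on disjoint families of the variables $(\omega_n)$, so the two occupation events are genuinely independent; summing the product of two Wegner bounds over all pairs $(i,j)$ of the $O(l)$ subboxes gives $O(l^2L^{-2})$, which is the dominant term and fixes $\gamma$ just below $1$. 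In regime (i) there are only $O(l)$ relevant pairs, so it suffices to gain \emph{any} extra power of $L$ in the single-pair probability.

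The heart of the matter — and the step I expect to be the main obstacle — is the near-diagonal term: bounding, for a single bounded subbox (or pair of nearby subboxes) $\Lambda_R$, the probability that $H_\omega(\Lambda_R)$ has one eigenvalue in $[E-L^{-1},E+L^{-1}]$ \emph{and} another in $[E'-L^{-1},E'+L^{-1}]$. Here the two windows are disjoint since $E\neq E'$, so this is a genuine two-eigenvalue event, and one would like a Minami-type bound $O(L^{-2})$; combined with the $O(l)$ pairs this yields $O(l\,L^{-2})$, better than needed. However the Minami estimate (M) as stated controls \emph{two eigenvalues in the same window}, not one in each of two distinct windows. To get the two-distinct-window estimate I would use the characterization of the relevant event in terms of the map $\omega\mapsto(\lambda_k(\omega),\lambda_{k'}(\omega))$ of a pair of eigenvalues, and show this map has a Jacobian bounded below in a suitable averaged sense, via the Feynman–Hellmann formula expressing $\partial_{\omega_n}\lambda_k$ as $\langle\varphi_k,\mathbf{1}_{(n,n+1)}\varphi_k\rangle\ge 0$ together with a non-degeneracy input: the vector $(\langle\varphi_k,\mathbf{1}_{(n,n+1)}\varphi_k\rangle)_n$ and the analogous vector for $\varphi_{k'}$ must be "independent enough" because the two eigenfunctions, living at different energies, have different oscillation/decay profiles inside $\Lambda_R$. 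Making this quantitative — i.e. producing a lower bound on a $2\times2$ Jacobian determinant built from these gradients, uniformly in $R$ and in $E,E'$ bounded away from each other — is where the one-dimensional structure (transfer matrices, the fact that solutions at distinct energies cannot be simultaneously small) must be used, and this is the technical core I would need to carry out in full; it also explains why $\gamma$ cannot be taken all the way up to the naive value $1$ and why the hypotheses $\beta\in(1/2,1)$, $\alpha\in(0,1)$ enter (they govern the trade-off between the localization error, the number of subboxes, and the window scale).

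Finally I would assemble the pieces: the (Loc) error is $o(l^2L^{-1-\gamma})$; the far-apart pairs contribute $O(l^2L^{-2})$; the near pairs contribute $O(l\cdot L^{-1-\gamma'})$ from the two-window estimate; taking $\gamma$ smaller than $\min(1,\gamma')$ and absorbing constants gives the claimed bound $C\,l^2L^{-1-\gamma}$ for all $kL^\alpha\le l\le L^\alpha/k$ and $L$ large. I would remark that the uniformity of the constant $C$ in the allowed range of $l$ comes from the fact that all the intermediate estimates are uniform once $l$ is polynomially comparable to $L^\alpha$, and that the restriction $E,E'$ in a compact set (implicit through $\mathcal I$ and the Wegner/Minami inputs) is used to make all constants uniform in the energies.
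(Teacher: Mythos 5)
Your outline reproduces the standard Germinet--Klopp reduction and correctly locates the difficulty, but it stops exactly where the actual proof begins, so there is a genuine gap. The step you describe as ``the technical core I would need to carry out in full'' --- a quantitative lower bound on the $2\times2$ Jacobian built from the Feynman--Hellmann gradients $\big(\partial_{\omega_n}E_j\big)_n=\big(\|u\|^2_{q,(n-N,n+N)}\big)_n$ and the analogous vector for the eigenvalue near $E'$ --- is precisely Lemma~\ref{probcoli}, and its proof is the entire content of Sections~3.1--3.2 and the three appendices. The paper first shows (Lemma~\ref{grad->jac}) that if \emph{all} $2\times2$ Jacobians are small then the two normalized gradients are almost co-linear; it then proves that near co-linearity forces, on each block of $O(N)$ consecutive sites, the Pr\"ufer data of the second eigenfunction to nearly annihilate an explicit resultant polynomial $\mathcal R_{\hat\omega}$ in finitely many $\omega$'s (Lemma~\ref{closetan}); it shows this resultant is not identically zero using Sturm oscillation theory for the continuum model (Lemmas~\ref{onezero} and~\ref{threezero}) or explicit $\omega_n\to\infty$ asymptotics of transfer matrices in the discrete case; and it converts non-vanishing of an analytic function into a probability bound via a \L ojasiewicz-type estimate (Proposition~\ref{Loja}), finally multiplying $\asymp l^{\beta}$ independent such events to get $e^{-cl^{2\beta}}$. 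None of this is supplied or even correctly anticipated by your heuristic that ``eigenfunctions at different energies have different oscillation profiles''; in particular no value of $\gamma'>0$ is established, so the near-diagonal term is not controlled.

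Two further points in your reduction would fail as written. First, you partition $\Lambda_l$ into subboxes of \emph{fixed bounded} size $R$; the localization error for truncating an eigenfunction to such a box is only $e^{-cR^{\xi}}$, a constant, not $o(l^2L^{-1-\gamma})$. The boxes must have size $\tilde l=(\log L)^{1/\xi'}$ (as in Proposition~\ref{(Loc)(I)} and Lemma~\ref{thdec2}), and then the single-box estimate must beat an extra power of $L$ uniformly in the subpolynomial factor $e^{\tilde l^{\beta'}}$. Second, even once the Jacobian is known to be bounded below on a large-probability event, passing from that to a probability bound is not a one-line change of variables: the paper needs Lemma~\ref{Hessien} to control the Hessian, Lemma~\ref{square} to localize the preimage of $J_L\times J_L'$ inside each $\epsilon$-cube, and the monotonicity of $\omega_i\mapsto E_j(\omega)$ to count the admissible cubes along a monotone broken line; the optimization over the cube size yields the exponent $4/3$ (so $\gamma$ just below $1/3$), not the ``naive'' $\gamma=1$ you suggest. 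You should also note that the reduction from the two-window event to the event that \emph{each} window contains exactly one eigenvalue uses the Minami estimate (M), which your sketch omits.
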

As Theorem~\ref{mina-int} is used to prove Theorem~\ref{ULLS}, Theorem~\ref{decoInt} is used to prove Theorem~\ref{joint} (see \cite{GK10}).
\section{Models and Main result}
In this section, we introduce the models that will be studied,  the main result of this article and the assumptions made to prove this result. Let $(\omega_n)_{n\in\Z}$ be independent random variables with a common bounded, compactly supported density $\mu$.
\subsection{Models}
In this article, we study models on the continuum but also generalize the results of \cite{S13,K11,T14} to other discrete models. They will be introduce in the two following paragraphs. 
\paragraph{Alloy-type models}
\begin{itemize}
\item[$\bullet$] We first introduce continuous alloy-type models. Fix $q:\R\to\R$ a single-site potential that satisfies the following hypotheses : \\
\textbf{(H1) :} $q$ is piecewise continuous.\\
\textbf{(H2) :} There exist $\eta>0$ and $N\in\N^*$ such that
\begin{equation}\label{stepfunction}
\dfrac{1}{\eta}\text{1}_{[-N,N]}\geq q\geq \eta\text{1}_{[-1/2,1/2]} .
\end{equation}
The right-hand side of \eqref{stepfunction} is often called "covering condition", we do not know how to relax this hypothesis. Define the random potential 
\begin{equation}
V_\omega(x)=\sum_{n\in\Z} \omega_n\, q(x-n)=\sum_{n\in [x-N,x+N]\cap\Z} \omega_n\, q(x-n)
\end{equation}
We consider the random operator $H_\omega: L^2(\R)\rightarrow  L^2(\R)$ defined by the following equation : 
\begin{equation}\label{fullopcont}
\forall \phi\in \mathcal{H}^2(\R), H_\omega \phi=-\Delta \phi+V_\omega \phi.
\end{equation}
As the the proof will show, it is possible to weaken the right-hand side of \eqref{stepfunction} and suppose that $q$ is non-negative and that the set $q^{-1}\{0\}\cap[-1/2,1/2]$ has no accumulation points (see Remark~\ref{remzero}).\\
\item[$\bullet$] We now introduce discrete alloy-type models. Fix $(d_n)_{n\in\Z}\in\R^\Z$ a non-zero discrete single-site potential satisfying the following hypotheses : \\
\textbf{(H1) :} $(d_n)_{n\in\Z}\in\R_+^\Z$ or $(d_n)_{n\in\Z}\in\R_-^\Z$.\\
\textbf{(H2) :} $(d_n)_{n\in\Z}$ is compactly supported. \\
Define the random potential
\begin{equation}
V_\omega(m)=\sum_{n\in\Z} \omega_n\, d_{m-n}.
\end{equation}
We consider the random operator $H_\omega: \ell^2(\Z)\rightarrow  \ell^2(\Z)$ defined by the following equation
\begin{equation}\label{fullopdisalloy}
\forall u\in \ell^2(\Z), H_\omega u=-\Delta u+V_\omega u.
\end{equation}
where $\Delta$ is the discrete Laplace operator.
\end{itemize}
\paragraph{Multimer type models}
We now introduce a class of models that includes the random dimer models. The random dimer models considered in the present article are not the same as the one considered in \cite{DG00}. Indeed, in the present article the random variables has a common density while in \cite{DG00} the random variables are Bernoulli distributed. Let $(b_n)_{n\in\Z}\in\R^\Z$ such that
\begin{equation}
\inf\{|b_n|,n\in\Z\}>0.
\end{equation} Define the jacobi operator $\Delta_b:\ell^2(\Z)\rightarrow\ell^2(\Z)$ by
\begin{equation}\label{defjacmod}
\Delta_b u\,(n)=b_{n+1} u(n+1)+ b_nu(n-1).
\end{equation}
Fix $N\in\N^*$ and $(a_m)_{m\in\llbracket 1,N \rrbracket}\in(\R_+^*)^N$. Define the random potential $V_\omega:\Z\to\R$ by 
\begin{equation}\label{discmod}
\forall j\in\Z,\forall m\in \llbracket 0,N-1\rrbracket, V_\omega(Nj+m)=a_m\omega_j. 
\end{equation}
When all the random variables take the same value, the potential is periodic.
 We consider the random operator $H_\omega: \ell^2(\Z)\rightarrow  \ell^2(\Z)$ defined by the following equation : 
\begin{equation}\label{fullopdis}
\forall \phi\in \ell^2(\Z), H_\omega \phi=-\Delta_b \phi+V_\omega \phi.
\end{equation}
We will suppose that $N\geq 2$. The case $N=1$ is studied in \cite{S13} (see also \cite{K11}). The sequence $(b_n)_n$ is supposed deterministic, but if the sequence is random, independent of the sequence of random variable $(\omega_n)_n$ and if there exists $m>0$ such that $\inf\{|b_n|,n\in\Z\}>m$ almost surely, the decorrelation estimates hold as well.
\subsection{Assumptions}
We suppose there exists a relatively compact, open interval $\mathcal{I}\subset \R$ such that the Wegner estimates hold on $\mathcal{I}$ : 

\textbf{(W) :} There exists $C>0$ such that for $J\subset \mathcal{I}$ and $\Lambda$ an interval in $\R$, one has
\begin{equation}
\mathbb{P}\Big[tr \left(\textbf{1}_J(H_\omega(\Lambda)) \right)\geq 1\Big ]\leq C |J||\Lambda|.
\end{equation}
Wegner estimate has been proven for many different models, discrete or continuous (\cite{K95,CHK07,CGK09,V10}). Assumption (W) implies that the IDS is Lipschitz continuous.
\\
 We suppose that the localization property holds on $\mathcal{I}$ : 

\textbf{(Loc): } for all $\xi\in(0,1)$, one has
\begin{equation}
\sup_{L>0} 
\underset{|f|\leq 1}{\underset{\text{supp }f\subset \mathcal{I}}\sup}
\mathbb{E}\left(\sum_{\gamma\in\Z^d}e^{|\gamma|^\xi}\|\textbf{1}_{\Lambda(0)}f(H_\omega(\Lambda_L))\textbf{1}_{\Lambda(\gamma)}\|_2\right)<\infty
\end{equation}
This property can be shown using either multiscale analysis or fractional moment method. In fact we suppose that $\mathcal{I}$ is a region where we can do the bootstrap multiscale analysis of \cite{GK01}. (Loc) is equivalent to the conclusion of the bootstrap MSA (see \cite[Appendix]{GK10} for details). We do not require estimates on the operator $H_\omega$ but only on $H_\omega(\Lambda_L)$. 

We assume that the following Minami estimates holds on $\mathcal{I}$.\begin{theo}[M]\label{mina-D3}
Fix $J\subset \mathcal{I}$ a compact. For any $s'\in(0,1)$, $M>1$, $\eta>1$, $\rho\in(0,1)$, there exist $L_{s',M,\eta,\rho}>0$ and $C=C_{s',M,\eta,\rho}>0$ such that, for $E\in J,L\geq L_{s',M,\eta,\rho}$ and $\epsilon\in[L^{-1/s'}/M,ML^{-1/s'}]$ , one has
\begin{displaymath}
\sum_{k\geq2}\mathbb{P}\big(\textup{tr}[\textbf{1}_{[E-\epsilon,E+\epsilon]}(H_\omega(\Lambda_L))]\geq k\big)\leq C( \epsilon L )^{1+\rho}.
\end{displaymath}
\end{theo}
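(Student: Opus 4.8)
The estimate being of the same shape as Theorem~\ref{mina-int}, when it is not simply quoted from the literature it can be obtained from (Loc), (W) and the \emph{finite correlation} of the potential by an induction on the length scale, as follows. Write $I_\epsilon=[E-\epsilon,E+\epsilon]$ and $N_\omega=\mathrm{tr}\,\mathbf{1}_{I_\epsilon}(H_\omega(\Lambda_L))$. Since $N_\omega\in\N$, one has $\sum_{k\geq2}\mathbb{P}(N_\omega\geq k)=\mathbb{E}[(N_\omega-1)_+]$, so it suffices to bound this expectation by $C(\epsilon L)^{1+\rho}$; a deterministic bound $N_\omega\leq\mathrm{tr}\,\mathbf{1}_{\mathcal I}(H_\omega(\Lambda_L))\leq C_{\mathcal I}|\Lambda_L|$ (trivial in the discrete case; Dirichlet--Neumann bracketing together with $\|V_\omega\|_\infty\leq C$, which holds since $\mathrm{supp}\,\mu$ is compact, in the continuous case) will serve for book-keeping. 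Recall that for all three model classes the restriction of $H_\omega$ to an interval depends only on the $\omega_n$ within a fixed distance $R$ of it.

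Fix $\alpha\in(0,1)$, set $\ell=\lfloor L^\alpha\rfloor$, and cut $\Lambda_L$ into $\sim L/\ell$ consecutive blocks; around each block take an enlarged interval $\hat I_p$ with a buffer of width $w=(C_0\log L)^{2/\xi}$ ($\xi<1$ close to $1$ as in (Loc)), arranged so that, after the usual splitting of the index set into families, the operators $\{H_\omega(\hat I_p)\}$ within a family are mutually independent once $2w+R<\ell$. By (Loc) and the standard extraction of centres of localization, outside an event of probability $O(L^{-\infty})$ every eigenfunction of $H_\omega(\Lambda_L)$ with eigenvalue in $\mathcal I$ has a centre of localization, and a Combes--Thomas/geometric-resolvent estimate turns $\mathbf{1}_{\hat I_p}\psi$, for $\psi$ centred in block $p$ with $H_\omega(\Lambda_L)\psi=\lambda\psi$, $\lambda\in I_\epsilon$, into an approximate eigenvector of $H_\omega(\hat I_p)$ with eigenvalue error $\leq e^{-c\,w^\xi}+O(L^{-\infty})\leq\epsilon$ (the buffer width is chosen using $\epsilon\geq L^{-1/s'}/M$). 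Counting multiplicities (the quasimodes attached to one block are almost orthonormal), on this good event $N_\omega\leq\sum_p n_p$ with $n_p:=\mathrm{tr}\,\mathbf{1}_{\hat I_\epsilon}(H_\omega(\hat I_p))$ and $\hat I_\epsilon:=[E-2\epsilon,E+2\epsilon]$.

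Up to $O(L^{-\infty})$ (the bad event, controlled by the deterministic bound), $\mathbb{E}[(N_\omega-1)_+]\leq\mathbb{E}\big[(\sum_p n_p-1)_+\big]\leq\sum_p\mathbb{E}[(n_p-1)_+]+\mathbb{E}\big[(\#\{p:n_p\geq1\}-1)_+\big]$. For the last term, $\#\{p:n_p\geq1\}=\sum_p X_p$ with $X_p=\mathbf{1}_{n_p\geq1}$, and (W) gives $\mathbb{P}(X_p=1)\leq C\epsilon|\hat I_p|$; exploiting the independence within families and the bound $(\sum_p X_p-1)_+\leq\sum_{p\neq q}X_pX_q$ yields $\mathbb{E}[(\#\{p:n_p\geq1\}-1)_+]\leq\big(\sum_p\mathbb{P}(X_p=1)\big)^2\leq(C\epsilon L)^2$, which is $\leq(\epsilon L)^{1+\rho}$ because $\epsilon L\sim L^{1-1/s'}\to0$ and $1+\rho<2$. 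For the first term, $\sum_p\mathbb{E}[(n_p-1)_+]=\sum_p\sum_{k\geq2}\mathbb{P}(n_p\geq k)\leq(L/\ell)\sup_{E'}\sum_{k\geq2}\mathbb{P}(\mathrm{tr}\,\mathbf{1}_{[E'-2\epsilon,E'+2\epsilon]}(H_\omega(\Lambda_\ell))\geq k)$; now $2\epsilon\sim L^{-1/s'}=\ell^{-1/(\alpha s')}$ with $\alpha s'\in(0,1)$, so the inner sum is bounded, by the very statement under proof applied at scale $\ell$ with parameter $\alpha s'$ (and a larger $M$), by $C(2\epsilon\,\ell)^{1+\rho}$, whence $(L/\ell)\,C(2\epsilon\ell)^{1+\rho}=C'\epsilon^{1+\rho}\ell^\rho L=C'\epsilon^{1+\rho}L^{1+\alpha\rho}\leq C'(\epsilon L)^{1+\rho}$ since $\alpha<1$. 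Iterating this reduction drives the scale down while the exponent parameter stays in $(0,1)$, so after finitely many steps one is reduced to a bounded scale, with controlled constants.

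The main obstacle is this base case: a Minami estimate at a fixed length $\ell_0$ and fixed (small) window, $\mathbb{P}(\mathrm{tr}\,\mathbf{1}_{[E-\delta,E+\delta]}(H_\omega(\Lambda_{\ell_0}))\geq2)\leq C\delta^{1+\rho}$. Here (W) supplies only the exponent $1$, and the extra power reflects the codimension-two nature of ``two eigenvalues in a tiny interval'', which must be produced by spectral averaging over (at least) two of the $\omega_n$. For $N=1$ this is Minami's estimate (\cite{K11}); for the alloy-type and multimer models of Section~2 a change of one $\omega_n$ is a finite-rank perturbation (rank $\leq2N+1$ in the alloy-type case), so Minami's rank-one argument must be replaced by a rank-$r$ version, with the covering condition (H2) (respectively $(a_m)_m\in(\R_+^*)^N$ and the sign condition on $(d_n)_n$) providing the monotonicity that keeps the Jacobian of $(\omega_n,\omega_m)\mapsto(\lambda_i,\lambda_j)$ under control. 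For the model classes where such an estimate is available one quotes it (\cite{K11,CGK09} and their continuous counterparts) — which is why it is stated as a hypothesis here. The partition/buffer book-keeping (blocks meeting $\partial\Lambda_L$, the thin overlap strips between families, and eigenfunctions whose centre falls in a buffer) is routine and is handled exactly as in \cite{GK10,CGK09}, contributing only errors of the forms already encountered above.
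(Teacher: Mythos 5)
First, a point of comparison: the paper does not prove Theorem~\ref{mina-D3} at all. It is stated as an assumption (M), and its validity for the models considered is obtained by quoting \cite{K14} (continuum models in dimension one) and \cite{S13} (discrete models), where the Minami estimate is deduced from (W), (Loc) and independence at a distance by an argument specific to dimension one. So the honest version of your closing remark --- ``one quotes it, which is why it is stated as a hypothesis'' --- is in fact all that is required here, and everything before it must be judged as an independent proof attempt.

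As such an attempt, it has two genuine gaps. (i) In the block decomposition, the bound $\mathbb{E}[(\#\{p:n_p\geq1\}-1)_+]\leq\big(\sum_p\mathbb{P}(X_p=1)\big)^2$ needs all the $X_p$ to be independent; adjacent enlarged blocks are correlated, and for such a pair the only estimate (W) provides is $\mathbb{E}[X_pX_q]\leq C\epsilon\ell$, which summed over the $\sim L/\ell$ adjacent pairs gives $C\epsilon L$. That is too large: since $\epsilon L\sim L^{1-1/s'}\to0$, one has $(\epsilon L)^{1+\rho}\ll\epsilon L$. These cross terms are themselves ``two eigenvalues in a box of size $\sim2\ell$'' events, i.e.\ of the same nature as the quantity being estimated, so they cannot be waved away as book-keeping. (ii) More seriously, your induction has no base case under the standing hypotheses: the fixed-scale bound $\mathbb{P}\big(\mathrm{tr}\,\mathbf{1}_{[E-\delta,E+\delta]}(H_\omega(\Lambda_{\ell_0}))\geq2\big)\leq C\delta^{1+\rho}$ is precisely the hard content of a Minami estimate, and the ``rank-$r$ version of Minami's rank-one argument'' you invoke is not known for the models treated here --- the paper itself notes that for discrete alloy-type potentials the Minami estimates of \cite{TV14} require the Fourier transform of the single-site potential not to vanish, a condition not implied by (H1)--(H2), and no analogous multi-parameter spectral-averaging result is cited for the continuum alloy or multimer models. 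This is exactly why \cite{K14} and \cite{S13} proceed by a different, genuinely one-dimensional mechanism (Wegner plus localization plus independence at a distance), and why the present paper imports their result instead of reproving it. In short, your scheme reduces the statement to an unproved statement of the same strength; the correct step is simply the citation.
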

It is proven in \cite{K14} that, in dimension one, for the continuum model, if one has independence at a distance and localization, the Minami estimates are an implication of the Wegner estimates. It is proven in \cite{S13} that this statement holds also for discrete models, under the same assumptions. In both cases, the Minami estimates are not as precise as the Minami estimates proven in \cite{CGK09}, but are sufficient for our purpose. For discrete alloy-type models, Minami estimates are also proven in \cite{TV14} but they only hold for single-site potentials whose Fourier transforms do not vanish. Therefore, we will use the Minami estimates proven in \cite{S13} which hold under the assumptions of the present article.
\subsection{Main result}
The purpose of this article is to prove the  
\begin{theo}\label{deco}
There exists $\gamma>0$ such that, for any $\beta\in(1/2,1)$, $\alpha\in (0,1) $, $(F,G)\in\R^2$ such that at $F\neq G$ and $k>0$, there exists $C>0$ such that for $L$ sufficiently large and $kL^\alpha\leq l\leq L^\alpha/k$ we have 
\begin{displaymath}
\mathbb{P}\left(
\begin{aligned}
 \textup{tr}\, \textbf{1}_{[F-L^{-1},F+L^{-1}]}\left(H_\omega(\Lambda_l)\right)\neq 0,\\
 \textup{tr}\, \textbf{1}_{[G-L^{-1},G+L^{-1}]}\left(H_\omega(\Lambda_l)\right)\neq 0
\end{aligned}
\right)\leq C\dfrac{l^2}{L^{1+\gamma}}.
\end{displaymath}
\end{theo}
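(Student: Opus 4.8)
The plan is to follow the strategy of Klopp in \cite{K11} and Shirley in \cite{S13}, adapting it to distinct energies as in \cite{GK10}. First I would reduce the problem to a statement at scale $l$ rather than $L$: since $l \asymp L^\alpha$ with $\alpha < 1$, the two windows $[F - L^{-1}, F + L^{-1}]$ and $[G - L^{-1}, G + L^{-1}]$ have width $L^{-1} = o(l^{-1/\alpha})$, which is much smaller than the $l^{-1}$ scale naturally associated with a box of size $l$. I would fix $\beta \in (1/2,1)$ and partition the box $\Lambda_l$ into roughly $l^{1-\beta}$ subcubes of side $\ell \asymp l^\beta$. Using (Loc), the eigenfunctions of $H_\omega(\Lambda_l)$ with eigenvalues in a compact neighbourhood of $F$ and of $G$ are exponentially localized, so up to an error that is superpolynomially small in $l$ (hence negligible against $l^2 L^{-1-\gamma}$), each relevant eigenvalue of $H_\omega(\Lambda_l)$ near $F$ or near $G$ is well approximated by an eigenvalue of $H_\omega$ restricted to one of the subcubes. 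Thus the event in the theorem is, up to negligible error, contained in the union over pairs of subcubes $(\Lambda, \Lambda')$ of the events that $H_\omega(\Lambda)$ has an eigenvalue near $F$ and $H_\omega(\Lambda')$ has an eigenvalue near $G$.

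Next I would split this union into the diagonal part (the same subcube $\Lambda = \Lambda'$ carries both an eigenvalue near $F$ and one near $G$) and the off-diagonal part ($\Lambda \neq \Lambda'$). For the off-diagonal part, the two subcubes are separated, so by the independence-at-a-distance structure of the potentials (for the alloy-type and multimer models the coefficients $\omega_n$ entering $H_\omega(\Lambda)$ and $H_\omega(\Lambda')$ are independent once the cubes are separated by more than the single-site support), the two events are independent; each has probability $\lesssim \ell L^{-1}$ by the Wegner estimate (W) at scale $\ell$, so each pair contributes $\lesssim (\ell L^{-1})^2$ and summing over $\lesssim (l/\ell)^2$ pairs gives a bound $\lesssim l^2 L^{-2}$, which is even better than required. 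For the diagonal part, I would invoke the Minami estimate (Theorem~\ref{mina-D3}) applied at scale $\ell$ together with the separation $F \neq G$: since $F$ and $G$ are a fixed distance apart, an eigenvalue near $F$ and an eigenvalue near $G$ in the same box $\Lambda$ of size $\ell$ are two distinct eigenvalues in an interval of length $\asymp |F-G|$, and one can either use a Minami-type two-eigenvalue bound or, more simply, the fact that the probability that $H_\omega(\Lambda)$ has an eigenvalue near $F$ is $\lesssim \ell L^{-1}$ while conditionally the probability of also having one near $G$ gains an extra small factor. The number of diagonal terms is $\lesssim l/\ell = l^{1-\beta}$, so one needs the per-box probability to beat $l^{2}L^{-1-\gamma}/l^{1-\beta} = l^{1+\beta}L^{-1-\gamma}$.

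The heart of the matter, and the step I expect to be the main obstacle, is precisely this diagonal estimate: showing that for a single box of size $\ell \asymp l^\beta$ the probability of simultaneously having an eigenvalue of $H_\omega(\Lambda)$ in $[F-L^{-1},F+L^{-1}]$ and one in $[G-L^{-1},G+L^{-1}]$ is bounded by $C \ell L^{-1} \cdot L^{-\gamma'}$ for some $\gamma' > 0$, uniformly in the relevant range of $l$ and $L$. The mechanism, following \cite{K11,GK10}, is that near a fixed eigenvalue the dependence of that eigenvalue on the relevant random variable $\omega_n$ is monotone with a derivative bounded below (this uses the covering condition (H2) in the continuous case, the sign condition (H1) in the discrete alloy case, and the positivity $a_m > 0$ together with $\inf|b_n|>0$ in the multimer case, via Hellmann--Feynman and a unique-continuation / Combes--Thomas argument showing an eigenfunction cannot be too small on the support of a single-site bump). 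One then conditions on all randomness except one well-chosen variable $\omega_{n_0}$ near the localization center, uses monotonicity to see that $\omega_{n_0}$ must fall in an interval of length $\lesssim L^{-1}$ to put an eigenvalue near $F$, and — because $F \neq G$ and the map $\omega_{n_0} \mapsto$ (eigenvalue) moves the spectrum by a controlled amount — argues that making the *other* eigenvalue land near $G$ forces an independent near-resonance of comparable smallness involving a second variable, whence the product of two $L^{-1}$-sized events and an extra volume/gap factor. Pushing through the continuous-operator version, where one must control the eigenfunctions via regularity of the single-site potential under (H1)–(H2) rather than by explicit transfer-matrix formulas, and then checking that the three model classes all satisfy the needed monotonicity-with-lower-bound, is where the real work lies; the rest is bookkeeping against $l^2 L^{-1-\gamma}$.
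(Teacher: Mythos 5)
Your overall architecture (reduction to small boxes via (Loc), off-diagonal boxes handled by independence plus Wegner, the diagonal single-box estimate as the crux) matches the reduction the paper performs, and your accounting of the off-diagonal contribution is fine. The gap is in the diagonal estimate, which you correctly flag as the heart of the matter but then dispatch with a heuristic that does not survive scrutiny. You propose to condition on all randomness except one variable $\omega_{n_0}$, use monotonicity to confine $\omega_{n_0}$ to an interval of length $\lesssim L^{-1}$, and then claim that placing the second eigenvalue near $G$ forces an ``independent near-resonance involving a second variable.'' This is exactly the step that can fail: both eigenvalues $E_j(\omega)$ and $E_k(\omega)$ depend on \emph{all} the variables in the box, and if the gradients $\nabla_\omega E_j$ and $\nabla_\omega E_k$ are nearly colinear, then every $2\times 2$ Jacobian $J_{\gamma,\gamma'}$ of the map $(\omega_\gamma,\omega_{\gamma'})\mapsto(E_j,E_k)$ is nearly singular --- Lemma~\ref{grad->jac} makes this dichotomy quantitative --- and no choice of a second variable yields an independent resonance; the two eigenvalues move in locked proportion. (Also, the derivative $\partial_{\omega_{n_0}}E_j=\|u_{|_{(n_0-N,n_0+N)}}\|_q^2$ can be exponentially small if the eigenfunction is localized elsewhere, so confining $\omega_{n_0}$ to an interval of length $\asymp L^{-1}$ is not automatic either.)

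The entire technical content of the paper is devoted to ruling out this colinearity scenario: Lemma~\ref{probcoli} shows the near-colinearity event has probability $e^{-cl^{2\beta}}$, and its proof for the continuous alloy-type model occupies Subsection~\ref{subseccont} --- rewriting colinearity as a chain of conditions on Pr\"ufer variables of the eigenfunctions over blocks of length $O(N)$, showing via resultants of transfer-matrix polynomials that each block condition confines a ratio $t_v$ near the roots of a not-identically-zero analytic family (the non-vanishing of the resultant rests on the Sturm--Liouville non-oscillation results of Appendix~\ref{sturm}), and then multiplying $\asymp l^\beta$ independent exponentially small block probabilities. None of this appears in your proposal, and without it the diagonal estimate is unproven. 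The remaining bookkeeping in the paper (Lemmas~\ref{Hessien} and \ref{square}, the monotone broken-line counting of admissible squares, optimization at $\alpha=2/3$) is also what produces the explicit exponent $\gamma$, which your sketch does not supply.
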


Decorrelation estimates give more precise results about spectral statistics, such as Theorem~\ref{joint} (see \cite{GK10} for the proof and other results about spectral statistics). They are a consequence of Minami estimates and localization. In \cite{K11}, Klopp proves decorrelation estimates for eigenvalues of the discrete Anderson model in the localized regime. The result is proven at all energies only in dimension one. In \cite{T14}, decorrelation estimates are proven for the one-dimensional tight binding model, i.e when there are correlated diagonal and off-diagonal disorders. In \cite{S13}, decorrelation estimates are also proven for other discrete models, such as Jacobi operators with positive alloy-type potential or the random hopping model, i.e when there is only off-diagonal disorder. We show that this statement also holds for the continuous models defined in \eqref{fullopcont} and discrete models defined in \eqref{fullopdisalloy} and \eqref{fullopdis}. In fact, the proof for alloy-type models will only be given for operators on the continuum. The proof for the discrete alloy-type model is the same as the proof for continuous alloy-type models, using the results of Appendix~\ref{FDEsec} instead of Appendix~\ref{sturm}, and making the obvious modifications due to the discrete structure, as done in Subsection~\ref{subsecdis} for the models defined in \eqref{fullopdis}.

The proof of Theorem~\ref{deco} rely on the study of the gradients of two different eigenvalues. In particular, we show that the probability that they are are co-linear is zero. In \cite{K11}, \cite{T14} and \cite{S13}, this condition could easily be rewritten as a property of eigenvectors. For instance, for the discrete Anderson model, this condition is the system of equations
\begin{equation}
\forall n\in\llbracket -L,L \rrbracket, u^2(n)=v^2(n).
\end{equation}
 where $u$ and $v$ are normalized eigenvector associated to the eigenvalues. These equations can be rewritten easily as $u(n)=\pm v(n)$.
For the continuous model defined in \eqref{simpleop}, the condition of co-linearity is the system of equations
\begin{equation}\label{gradcolisimp}
\forall n\in\llbracket -L,L-1, \rrbracket, \int_n^{n+1}u^2(n)=\int_n^{n+1}v^2(n).
\end{equation}
We show that this system can be rewritten as a system of $2L$ quadratic equations, using basis of solutions on each interval $(n,n+1)$. This system and the fact that the eigenvectors have continuous derivatives will impose conditions on the eigenvectors that are easier to handle.
\section{Proof of Theorem~\ref{deco}}
We follow the proof of \cite[Section 2]{K11}. The only difference is in the proof of Lemma~\ref{probcoli} below which corresponds to \cite[Lemma 2.4]{K11}. The proof of the other intermediate results are the same as in \cite{K11}. Thus, the results will be given without proofs. The proof of Lemma~\ref{probcoli} is the same for discrete and continuous model except for the proof of Lemma~\ref{probcoli}.

Using (M), Theorem~\ref{deco} is a consequence of the following theorem : 

\begin{theo}\label{thdec}
Let $\beta\in(1/2,1)$. For $\alpha\in (0,1) $ and $(F,G)\in\R^2$ with $F\neq G$, for any $k>1$ there exists $C>0$, such that for $L$ large enough and $kL^\alpha\leq l\leq L^\alpha/k$ we have
\begin{displaymath}
\mathbb{P}_0:=\mathbb{P}\left(
\begin{aligned}
  tr\textbf{1}_{[F-2L^{-1},F+2L^{-1}]}(H_\omega(\Lambda_l ))= 1,\\
 tr\textbf{1}_{[G-2L^{-1},G+2L^{-1}]}(H_\omega(\Lambda_l ))= 1
\end{aligned}
\right)\leq C\left(\dfrac{l^2}{L^{4/3}}\right)e^{(\log L)^\beta}.
\end{displaymath}
\end{theo}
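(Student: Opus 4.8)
We assume this follows Klopp's scheme in \cite[Section 2]{K11}, so the plan is to reduce the two-sided trace event to a statement about the gradients (with respect to the random parameters $\omega_j$, $j\in\llbracket -l,l\rrbracket$) of the two eigenvalues $E_F(\omega)$ and $E_G(\omega)$ living respectively near $F$ and $G$, and then to show that these two gradient vectors in $\R^{2l+1}$ are, with overwhelming probability, not nearly colinear. The first-order eigenvalue perturbation formula gives $\partial_{\omega_j} E_F = \langle u, q(\cdot-j) u\rangle$ and $\partial_{\omega_j} E_G = \langle v, q(\cdot-j) v\rangle$, where $u,v$ are the associated normalized eigenfunctions; by the covering condition (H2) each of these is comparable to $\int_{j-1/2}^{j+1/2} u^2$ (resp.\ $v^2$), so up to harmless constants the gradients are the vectors $(\int_{j}^{j+1}u^2)_j$ and $(\int_j^{j+1}v^2)_j$. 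The localization hypothesis (Loc) lets one restrict, outside an event of probability $\lesssim L^{-\infty}$, to eigenfunctions exponentially concentrated in a box of size $(\log L)^{1/\xi}\ll l$; on the complementary event the relevant gradients have $\ell^1$-norm bounded below by a constant (normalization) and are supported in a window of size $\sim(\log L)^{1/\xi}$, which is what produces the $e^{(\log L)^\beta}$ factor.

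The heart of the matter, as the authors flag, is Lemma~\ref{probcoli}: bounding the probability that these two gradient vectors are (nearly) colinear, i.e.\ that $\int_j^{j+1}u^2 = c\int_j^{j+1}v^2$ for all $j$ in the localization window, for some scalar $c>0$. For the discrete Anderson model this reduces to $u(n)^2=v(n)^2$ and hence $u(n)=\pm v(n)$; here, as the excerpt explains, the analogous system \eqref{gradcolisimp} must instead be converted into a system of roughly $2l$ quadratic equations by writing $u$ and $v$, on each unit interval $(n,n+1)$, in a fixed basis of solutions of the relevant ODE $-\phi''+(V_\omega-E)\phi=0$ (at energy $F$ for $u$, at $G$ for $v$). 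The coefficients of $u$ on consecutive intervals are linked by the transfer matrices, which depend on $\omega_n$, and similarly for $v$; matching the continuity of $\phi$ and $\phi'$ across integer points (this is where "the eigenvectors have continuous derivatives" enters) plus the $2l$ quadratic colinearity constraints over-determines the system. I would argue that for fixed values of all but finitely many $\omega_j$, the map sending the remaining free parameters to the corresponding eigenvalue-pair / eigenfunction-pair data is real-analytic and non-degenerate, so the colinearity locus is a proper analytic subvariety, hence of measure zero; quantitatively, one then needs a lower bound — uniform over the conditioning — on the probability that we are a definite distance from this variety, to beat the $l^2/L^{4/3}$ budget.

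Concretely, the steps in order: (1) use (W) and (Loc) to pass from $\mathbb{P}_0$ to an event where the two eigenfunctions are localized in a common window $\Lambda_\ell$ with $\ell\sim(\log L)^{1/\xi}$, paying $e^{(\log L)^\beta}$; (2) on that event, write $\mathbb{P}_0$ via a change of variables in the $\omega_j$'s lying in the window, trading the two trace-$=1$ constraints (each an interval of length $4/L$ for the eigenvalue) for a factor $(1/L)^2$ times an integral of the Jacobian of $(\omega\mapsto (E_F,E_G))$; (3) this Jacobian is controlled by the Wegner estimate for $E_F$ alone (giving one factor $\sim l/L$, not $1/L$, after integrating), so the real gain must come from the transversality of the second gradient to the first — reduce to Lemma~\ref{probcoli}; (4) prove Lemma~\ref{probcoli} by the ODE/transfer-matrix-and-quadratic-equations analysis sketched above, extracting a polynomial-in-$L$ decay with exponent strictly better than what step (2)'s crude bound gives, so that the product is $O(l^2 L^{-4/3})e^{(\log L)^\beta}$. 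The main obstacle is unquestionably step (4): in the discrete case colinearity collapses to the clean pointwise identity $u=\pm v$, whereas in the continuum one only controls the \emph{averages} $\int_n^{n+1}u^2$, and one must genuinely use the ODE structure (the two-dimensional solution space on each cell, the transfer relations, and $C^1$-matching at integers) to show that matching these finitely many quadratic moments still forces a rigid, measure-zero relation between $u$ and $v$ — and to make that rigidity quantitative enough, uniformly in the frozen variables, to yield the stated power of $L$.
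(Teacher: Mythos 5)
Your skeleton matches the paper's at the top level: reduce via (Loc) to a window of size $(\log L)^{1/\xi'}$ (Proposition~\ref{(Loc)(I)}, Lemma~\ref{thdec2}), compute $\partial_{\omega_j}E=\|u\|_q^2$ restricted to the support of $q(\cdot-j)$, and split according to whether the two gradient vectors in $\R^{2l+1}$ are nearly colinear (handled by Lemma~\ref{probcoli}) or not. However, your steps (2)--(3) contain a genuine gap: they never produce the exponent $4/3$, and the mechanism you describe --- a change of variables paying $(1/L)^2$ times an integral of the inverse Jacobian, with the Jacobian ``controlled by the Wegner estimate'' --- is not what the paper does and does not visibly close. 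The actual quantitative engine is the following. One first uses the Minami estimate at scale $\epsilon=L^{-\alpha}$ to discard, at cost $C\epsilon^2 l^2e^{l^\beta}=CL^{-2\alpha}l^2e^{l^\beta}$, the event that either window $(F-\epsilon,F+\epsilon)$ or $(G-\epsilon,G+\epsilon)$ contains two eigenvalues; on the complement the eigenvalues are isolated at scale $\epsilon$, so the Hessian of $\omega\mapsto E(\omega)$ is bounded by $CL^\alpha$ (Lemma~\ref{Hessien}). If the gradients are not nearly colinear, the elementary Lemma~\ref{grad->jac} produces a pair $(\gamma,\gamma')$ with $|J_{\gamma,\gamma'}|\geq\lambda\asymp e^{-l^\beta}$; the Hessian bound keeps this minor of size $\gtrsim\lambda$ on a whole square of side $\epsilon$ in $(\omega_\gamma,\omega_{\gamma'})$, whence the preimage of $J_L\times J_L'$ inside that square has measure at most $(L^{-1}\lambda^{-2})^2$ (Lemma~\ref{square}). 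The monotonicity of each eigenvalue in each $\omega_i$ forces the squares meeting $\Omega^{\gamma,\gamma'}_{0,\beta}(\epsilon)$ to sit along a non-increasing broken line, so there are at most $CL^{\alpha}$ of them, giving $\mathbb{P}(\Omega^{\gamma,\gamma'}_{0,\beta}(\epsilon))\leq CL^{\alpha-2}\lambda^{-4}$; balancing $L^{-2\alpha}$ against $L^{\alpha-2}$ yields $\alpha=2/3$ and hence $L^{-4/3}$.

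Two smaller inaccuracies: the subexponential prefactor in Lemma~\ref{thdec2} is $e^{\tilde l^{\beta'}}$ with $\tilde l=(\log L)^{1/\xi'}$ and it is generated by the Minami term and by $\lambda^{-3}\asymp e^{3\tilde l^{\beta}}$, not solely by the localization restriction as your step (1) suggests; and your remark that Wegner ``for $E_F$ alone'' gives a factor $l/L$ points in the wrong direction --- two independent Wegner events would already give $l^2/L^2$, which is \emph{better} than the claim; the whole difficulty is that the two events are correlated, and the Jacobian/colinearity dichotomy is what recovers the loss down to $L^{-4/3}$. Your step (4) correctly identifies the content and the difficulty of Lemma~\ref{probcoli} (quadratic moment constraints, transfer matrices, $C^1$-matching, quantitative transversality via analyticity), but that lemma is invoked as a black box in the proof of Theorem~\ref{thdec} and its proof is a separate, model-dependent section.
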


We now restrict ourself to the study of the restriction of $H_\omega$ to cubes of size $(\log L)^{1/\xi'}$ instead of length $L^\alpha$. In this context, we extract from \cite[Proposition 2.1]{K11} the 
\begin{prop}\label{(Loc)(I)} : For all $p>0$ and $\xi\in(0,1)$, for L sufficiently large, there exists a set of configuration $\mathcal{U}_{\Lambda_l}$ of probability larger than $1-L^{-p}$ such that if $\phi_{n,\omega}$ is a normalized eigenvector associated to the eigenvalue $E_{n,\omega}\in\mathcal{I}$ and $x_0(\omega)\in \{1,\dots,L\}$ maximize $|\phi_{n,\omega}|$ then 
\begin{equation}\label{expdec}
|\phi_{n,\omega}(x)|\leq L^{p+d} e^{-|x-x_0|^{\xi}}.
\end{equation}
\end{prop}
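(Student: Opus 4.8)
The plan is to follow the strategy of \cite[Proposition 2.1]{K11} verbatim, since the statement is model-independent: it only requires the localization bound (Loc) for the finite-volume operators together with a Wegner-type bound to control the number of eigenvalues in $\mathcal{I}$. First I would use (Loc) with some exponent $\xi_1$ slightly larger than the target $\xi$, and a Chebyshev/Markov argument, to show that outside a set of configurations of probability at most $L^{-p}$, the off-diagonal exponential decay of the spectral projectors $\textbf{1}_{\Lambda(j)}f(H_\omega(\Lambda_l))\textbf{1}_{\Lambda(k)}$ holds with a polynomial prefactor $L^{p+d}$ uniformly in $j,k$ and uniformly over $|f|\le 1$ supported in $\mathcal{I}$. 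Concretely, the event that $\sum_{\gamma} e^{|\gamma|^{\xi_1}}\|\textbf{1}_{\Lambda(0)} f(H_\omega(\Lambda_l)) \textbf{1}_{\Lambda(\gamma)}\|_2$ exceeds $L^{p'}$ (for a suitable $p'$) has probability $\le C L^{-p'}$ by Markov applied to the finite-volume (Loc) expectation; translating the box and taking a union bound over the $O(l)=O(L^\alpha)$ possible centers costs only a polynomial factor, which is absorbed by choosing $p'$ large.

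Next I would convert this projector bound into the pointwise eigenvector bound. Given a normalized eigenvector $\phi_{n,\omega}$ with eigenvalue $E_{n,\omega}\in\mathcal{I}$, choose $f$ to be (a smooth approximation of) the characteristic function of a small interval around $E_{n,\omega}$ that isolates it from the rest of the spectrum; on the good set this is possible because Minami/Wegner guarantee simplicity and level spacing, but in fact one does not even need isolation — one can use the standard trick of writing $\phi_{n,\omega} = f(H_\omega(\Lambda_l))\phi_{n,\omega}$ for any $f$ equal to $1$ near $E_{n,\omega}$ and supported in $\mathcal{I}$, and then for $x$ in the box $\Lambda(\gamma)$ with $\gamma$ far from the localization center $x_0$,
\[
|\phi_{n,\omega}(x)| \le \|\textbf{1}_{\Lambda(\gamma)} f(H_\omega(\Lambda_l)) \textbf{1}_{\Lambda(x_0)}\phi_{n,\omega}\| + \|\textbf{1}_{\Lambda(\gamma)} f(H_\omega(\Lambda_l)) (1-\textbf{1}_{\Lambda(x_0)})\phi_{n,\omega}\|,
\]
where the first term is exponentially small by the projector bound and the second is controlled by iterating, using that $\phi_{n,\omega}$ is already small away from $x_0$ (a bootstrap on the maximal site $x_0$, exactly as in \cite{GK01}). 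Summing the resulting geometric-type series and readjusting the decay exponent from $\xi_1$ down to $\xi$ absorbs all polynomial losses into the prefactor $L^{p+d}$ and yields \eqref{expdec}.

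The only genuinely delicate point — and the one I would flag as the main obstacle — is the uniformity of the supremum over $f$ inside the (Loc) expectation when passing to the eigenfunction: one must make sure the cutoff function $f$ chosen to capture a given eigenvalue can be taken from a fixed bounded family (independent of $\omega$ and of $n$) so that the single application of Markov covers all eigenvalues simultaneously. This is handled by discretizing the energy interval $\mathcal{I}$ at scale $l^{-C}$, using a finite family of smooth bumps of width $\sim l^{-C}$ centered at the grid points, and noting (via the resolvent identity, or via the Combes–Thomas estimate) that replacing the exact spectral projector by such a bump changes $\phi_{n,\omega}$ by an amount polynomially small in $L$; the number of grid points is polynomial, so the union bound again only costs a polynomial factor. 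With that in place the statement follows, with the caveat (as in \cite{K11}) that $\mathcal{U}_{\Lambda_l}$ depends on $p$ and $\xi$ and that $L$ must be taken large depending on these and on $\alpha,k$.
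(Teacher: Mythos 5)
First, a point of comparison: the paper itself gives no proof of this proposition --- it is quoted from \cite[Proposition 2.1]{K11}, the text stating explicitly that all intermediate results other than Lemma~\ref{probcoli} are taken from \cite{K11} without proof. So your attempt has to be measured against the standard argument of \cite{K11} and \cite[Appendix]{GK10}. Your first step (Markov's inequality applied to the finite-volume (Loc) expectation, plus a polynomially costly union bound over translates) is the correct start.

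The gap is in the conversion of the projector bound into \eqref{expdec}. The decomposition $|\phi_{n,\omega}(x)|\le\|\mathbf{1}_{\Lambda(\gamma)}f(H)\mathbf{1}_{\Lambda(x_0)}\phi_{n,\omega}\|+\|\mathbf{1}_{\Lambda(\gamma)}f(H)(1-\mathbf{1}_{\Lambda(x_0)})\phi_{n,\omega}\|$ cannot be closed by the iteration you describe: expanding the second term over cells $\gamma'\neq x_0$ produces the diagonal term $\gamma'=\gamma$, for which $\|\mathbf{1}_{\Lambda(\gamma)}f(H)\mathbf{1}_{\Lambda(\gamma)}\|\le 1$ gives no gain, and ``using that $\phi_{n,\omega}$ is already small away from $x_0$'' is precisely the conclusion being proved. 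The standard argument avoids any iteration: take $f=\mathbf{1}_{\{E_{n,\omega}\}}$ (admissible in (Loc), which requires only $|f|\le1$ and $\mathrm{supp}\,f\subset\mathcal{I}$, with no smoothness), so that $f(H_\omega(\Lambda_l))$ is almost surely the rank-one projector onto $\phi_{n,\omega}$ and the Hilbert--Schmidt norm factorizes, $\|\mathbf{1}_{\Lambda(\gamma)}f(H)\mathbf{1}_{\Lambda(x_0)}\|_2=\|\mathbf{1}_{\Lambda(\gamma)}\phi_{n,\omega}\|\cdot\|\mathbf{1}_{\Lambda(x_0)}\phi_{n,\omega}\|$; one then divides by $\|\mathbf{1}_{\Lambda(x_0)}\phi_{n,\omega}\|\ge c\,|\Lambda_l|^{-1/2}$, which holds because $x_0$ maximizes $|\phi_{n,\omega}|$ and $\|\phi_{n,\omega}\|=1$. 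This lower bound at the localization center --- absent from your proposal --- is exactly what produces the polynomial prefactor $L^{p+d}$ in \eqref{expdec}. Your concern about the $\omega$-dependence of $f$ is legitimate (the Markov bad set depends on $f$, and $\mathbf{1}_{\{E_{n,\omega}\}}$ is random), but smoothed bumps plus Combes--Thomas neither matches the argument used nor repairs the circular iteration; the dependence is handled in \cite[Appendix]{GK10} by summing over the at most $C|\Lambda_l|$ eigenvalues in $\mathcal{I}$ (a consequence of (W) or a deterministic trace bound), which again costs only a polynomial factor absorbed into $L^{p+d}$.
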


Now, Theorem~\ref{thdec} is a consequence of the following lemma and Proposition~\ref{(Loc)(I)}. 
\begin{lem}\label{thdec2}
Let $\beta'\in(1/2,1)$. For $\alpha\in (0,1) $ and $(F,G)\in\R^2$ with $F\neq G$, there exists $C>0$ such that for any $\xi'\in(0,\xi)$, $L$ large enough and $\tilde{l}=(\log L)^{1/\xi'}$ we have 
\begin{displaymath}
\mathbb{P}_1:=\mathbb{P} \left( 
\begin{aligned}
 tr\textbf{1}_{[F-2L^{-1},F+2L^{-1}]}(H_\omega(\Lambda_{\tilde{l}} ))= 1,\\
 tr\textbf{1}_{[G-2L^{-1},G+2L^{-1}]}(H_\omega(\Lambda_{\tilde{l}} ))= 1
\end{aligned}
\right)\leq C\left (\dfrac{\tilde{l}^2}{L^{4/3}}\right)e^{\tilde{l}^{\beta'}}.
\end{displaymath}
\end{lem}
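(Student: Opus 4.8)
The plan is to follow the strategy of \cite[Section 2]{K11}, reducing everything to an estimate on the probability that the gradients (with respect to the random variables $(\omega_n)$) of the two eigenvalues are almost co-linear. First I would fix the box $\Lambda_{\tilde l}$ with $\tilde l=(\log L)^{1/\xi'}$ and work on the event that $H_\omega(\Lambda_{\tilde l})$ has exactly one eigenvalue $E(\omega)$ in $[F-2L^{-1},F+2L^{-1}]$ and exactly one eigenvalue $\tilde E(\omega)$ in $[G-2L^{-1},G+2L^{-1}]$. On this event both eigenvalues are simple, hence real-analytic in $\omega$, with $\partial_{\omega_n}E(\omega)=\langle \phi, q(\cdot-n)\phi\rangle$ (resp. with $\tilde\phi$) by first-order perturbation theory, where $\phi,\tilde\phi$ are the normalized eigenfunctions. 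The covering condition \textbf{(H2)} guarantees $\partial_{\omega_n}E\geq \eta\|\mathbf 1_{(n-1/2,n+1/2)}\phi\|^2>0$, so $\nabla_\omega E$ and $\nabla_\omega \tilde E$ are vectors with strictly positive entries; in particular $|\nabla_\omega E|\gtrsim 1$.

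The key quantitative input is Lemma~\ref{probcoli} (the analogue of \cite[Lemma 2.4]{K11}): the probability that $\nabla_\omega E$ and $\nabla_\omega\tilde E$ are within $L^{-\delta}$ of being co-linear (say, that $\|\nabla_\omega E/|\nabla_\omega E| - \nabla_\omega\tilde E/|\nabla_\omega\tilde E|\|\le L^{-\delta}$ for a suitable $\delta>0$) is bounded by $\tilde l\,L^{-1/3}e^{\tilde l^{\beta'}}$ or so — this is exactly the place where the new argument for the continuous (and discrete alloy-type) model enters, via the rewriting of the co-linearity system \eqref{gradcolisimp} as a system of $2\tilde l$ quadratic equations using bases of solutions of the ODE on each unit interval, together with the $C^1$-matching of the eigenfunctions. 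Granting Lemma~\ref{probcoli}, on the complementary event the two gradients are quantitatively non-co-linear, so one can choose two indices $n_1,n_2\in\llbracket-\tilde l,\tilde l\rrbracket$ such that the $2\times2$ Jacobian of $(\omega_{n_1},\omega_{n_2})\mapsto(E,\tilde E)$ is invertible with determinant bounded below by (a negative power of) $\tilde l$; then the change of variables $(\omega_{n_1},\omega_{n_2})\mapsto (E,\tilde E)$, with all other $\omega_n$ frozen, pushes forward the (bounded) density $\mu$ to a density bounded by $\tilde l^{\,C}$, and Fubini in the remaining variables gives that the probability of $(E,\tilde E)$ landing in the product of two intervals of length $4L^{-1}$ is $\lesssim \tilde l^{\,C} L^{-2}$, which is even smaller than the claimed bound.

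The main obstacle is Lemma~\ref{probcoli}: controlling the probability that the two gradient vectors are nearly parallel. The naive approach — that the $2\tilde l$ quadratic equations $\int_n^{n+1}\phi^2=\int_n^{n+1}\tilde\phi^2$ overdetermine the system — has to be made robust, because we only know approximate equalities and we need a polynomial-in-$\tilde l$ gain, not just positivity of a determinant. Here I would use the basis of solutions $(c_n,s_n)$ of $-u''+V_\omega u=Eu$ on $(n,n+1)$ normalized by Cauchy data at $n$, express $\int_n^{n+1}\phi^2$ as a quadratic form $Q_n(E)$ in the Cauchy data $(\phi(n),\phi'(n))$ with coefficients depending analytically on $\omega_n$ and $E$, and similarly for $\tilde\phi$; propagating the Cauchy data across the box via transfer matrices and using the $C^1$ continuity, the co-linearity system becomes a system whose solution set, for fixed values of all but two of the $\omega_n$, is (generically) zero-dimensional with controlled degree, so that a quantitative Remez/Łojasiewicz-type or direct elimination argument yields that the bad set has measure $\le \tilde l^{\,C}L^{-1/3}$. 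One must also absorb the localization error: by Proposition~\ref{(Loc)(I)} the eigenfunctions are exponentially small outside a window of size $\tilde l$, and the factor $e^{\tilde l^{\beta'}}$ in the statement is precisely the slack needed to accommodate the $L^{p+d}$ prefactors and the finitely many matching conditions when passing from the infinite problem to the box of size $\tilde l$.
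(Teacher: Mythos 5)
Your proposal has the same skeleton as the paper's argument (reduce via the Minami estimate to the event of a single simple eigenvalue in each window, compute $\partial_{\omega_n}E=\|u\|_q^2>0$ by first-order perturbation theory, then split according to whether some $2\times2$ Jacobian $J_{\gamma,\gamma'}$ is large or the two normalized gradients are nearly co-linear, with Lemma~\ref{probcoli} handling the second branch). However, both branches as you describe them have genuine gaps.

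First, the co-linear branch. You posit that the probability of near-co-linearity is ``$\tilde l\,L^{-1/3}e^{\tilde l^{\beta'}}$ or so''; this is larger than the target $\tilde l^{2}L^{-4/3}e^{\tilde l^{\beta'}}$, so the plan cannot close as written. The actual Lemma~\ref{probcoli} gives $e^{-c\,l^{2\beta}}$ for the event that the normalized gradients differ by at most $e^{-l^{\beta}}$ in $\ell^1$; since $2\beta>1>\xi'$, this is $e^{-c(\log L)^{2\beta/\xi'}}$, super-polynomially small in $L$, which is what makes this branch negligible. Your choice of a polynomial threshold $L^{-\delta}$ is also not innocent: on the complement it only yields a Jacobian $\gtrsim L^{-\delta}\tilde l^{-5/2}$ (via Lemma~\ref{grad->jac}), and the resulting factor $L^{C\delta}$ in the other branch cannot always be absorbed by $e^{\tilde l^{\beta'}}=e^{(\log L)^{\beta'/\xi'}}$, which is sub-polynomial in $L$ when $\beta'<\xi'$. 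The paper's threshold $\lambda\asymp e^{-l^{\beta}}$ with $\beta<\beta'$ is what makes $\lambda^{-4}\leq e^{\tilde l^{\beta'}}$ for large $L$.

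Second, and more seriously, the large-Jacobian branch. The lower bound $|J_{\gamma,\gamma'}|\geq\lambda$ holds only \emph{at points of the event}, not on a neighborhood, and the pair $(\gamma,\gamma')$ achieving it depends on $\omega$; the map $(\omega_\gamma,\omega_{\gamma'})\mapsto(E,E')$ can fold elsewhere, so its push-forward density is not bounded and the ``Fubini gives $\lesssim\tilde l^{C}L^{-2}$'' conclusion does not follow. The paper's route is: bound $\mathrm{Hess}_\omega E$ by $C/\mathrm{dist}(E,\sigma\setminus\{E\})$ and invoke Minami again to get $\mathrm{dist}\geq L^{-\alpha}$ with high probability; tile the $(\omega_\gamma,\omega_{\gamma'})$-plane by squares of side $\epsilon=L^{-\alpha}$, on each of which the map is nearly affine so the preimage of $J_L\times J_L'$ has measure $\leq(L^{-1}\lambda^{-2})^2$; use monotonicity of the eigenvalues in each $\omega_i$ to show the relevant squares lie along a monotone staircase and hence number only $O(L^{\alpha})$; and optimize to $\alpha=2/3$. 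This trade-off is precisely the origin of the exponent $4/3$ in the statement; your sketch neither produces it nor explains why $L^{-2}$ is unattainable. You would need to supply the Hessian estimate, the square decomposition, and the staircase count to complete the proof.
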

The rest of the section is dedicated to the proof of Lemma~\ref{thdec2}. Define $J_L=\left[E-L^{-1},E+L^{-1}\right]$ and $J_L'=\left[E'-L^{-1},E'+L^{-1}\right]$. For $\epsilon\in(2L^{-1},1)$, for some $\kappa>2$, using (M) when the operator $H_\omega(\Lambda_l)$ has two eigenvalues in $[-\epsilon,+\epsilon]$, one has 
\begin{equation}
\mathbb{P}_1\leq C\epsilon^2 l^{\kappa}+\mathbb{P_\epsilon}\leq C\epsilon^2l^2e^{l^\beta}+\mathbb{P}_\epsilon
\end{equation}
where 
\begin{displaymath}
\mathbb{P_\epsilon}=\mathbb{P}(\Omega_0(\epsilon))
\end{displaymath}
and
\[ \Omega_0(\epsilon)= \left\{ \omega;
\begin{aligned}
\sigma(H_\omega(\Lambda_l))\cap J_L&= \{E(\omega)\} \\
\sigma(H_\omega(\Lambda_l))\cap (E-\epsilon,E&+\epsilon)= \{E(\omega)\} \\
\sigma(H_\omega(\Lambda_l))\cap J_L'&= \{E'(\omega)\} \\
\sigma(H_\omega(\Lambda_l))\cap(E'-\epsilon,&E'+\epsilon)= \{E'(\omega)\}
\end{aligned} 
\right \}.
\]
In order to estimate $\mathbb{P}_\epsilon$ we make the following definition. For $(\gamma,\gamma')\in\Lambda_L^2$ let $J_{\gamma,\gamma'}(E(\omega),E'(\omega))$ be the Jacobian of the mapping $(\omega_\gamma,\omega_{\gamma'})\rightarrow (E(\omega),E'(\omega))$ : 
\begin{equation}\label{defjac}
J_{\gamma,\gamma'}(E(\omega),E'(\omega))=\left \vert \begin{pmatrix} \partial_{\omega_\gamma}E(\omega) & \partial_{\omega_{\gamma'}}E(\omega)\\ \partial_{\omega_\gamma}E'(\omega) &\partial_{\omega_{\gamma'}}E'(\omega)
\end{pmatrix} \right \vert
\end{equation} 
and define 
\begin{displaymath}
\Omega^{\gamma,\gamma'}_{0,\beta}(\epsilon)= \Omega_0(\epsilon)\cap \left \{ \omega ;|J_{\gamma,\gamma'}(E(\omega),E'(\omega))|\geq \lambda \right\}.
\end{displaymath} 

When one of the Jacobians is sufficiently large, the eigenvalues depends on  two independent random variables. Thus the probability to stay in a small interval is small. So we divide the proof in two parts, depending on whether all the Jacobians are small. The next lemma shows that if all the Jacobians are small then the gradients of the eigenvalues, which have positive components for the models considered in the present article, must be almost co-linear.
\begin{lem}\label{grad->jac}
Let $(u,v)\in(\R^+)^{2n}$ such that $\|u\|_1=\|v\|_1=1$. Then 
\begin{displaymath}
\max_{j\neq k} \left | \begin{pmatrix} u_j & u_k\\ v_j & v_k \end{pmatrix} \right |^2\geq \dfrac{1}{4n^5}\Vert u-v \Vert_1^2.
\end{displaymath}
\end{lem}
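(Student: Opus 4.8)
\textbf{Proof proposal for Lemma~\ref{grad->jac}.}

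The plan is to show that if $u$ and $v$ are very close in $\ell^1$-norm then so is every $2\times 2$ minor $u_jv_k-u_kv_j$, and contrapositively that if the minors are all small the two vectors must be nearly parallel; combining with the normalization $\|u\|_1=\|v\|_1=1$ forces $u$ and $v$ to be close. More precisely, I would start from the elementary identity
\begin{displaymath}
u_jv_k-u_kv_j=u_j(v_k-u_k)-u_k(v_j-u_j)+ (u_ju_k-u_ku_j)=u_j(v_k-u_k)-u_k(v_j-u_j),
\end{displaymath}
which already shows each minor is $O(\|u-v\|_1)$; but that is the wrong direction. So instead I would argue as follows: set $w=u-v$, so $\sum_j w_j=0$. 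Let $m=\max_{j\neq k}|u_jv_k-u_kv_j|$; the goal is $\|w\|_1^2\le 4n^5 m$.

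The key step is to find a good index. Since $\|u\|_1=1$ and there are $n$ coordinates, there is an index $k_0$ with $u_{k_0}\ge 1/n$. Then for every $j$,
\begin{displaymath}
|u_{k_0}v_j-u_jv_{k_0}|\le m,\qquad\text{so}\qquad \Bigl|v_j-\frac{v_{k_0}}{u_{k_0}}u_j\Bigr|\le \frac{m}{u_{k_0}}\le nm .
\end{displaymath}
Thus $v$ is within $\ell^1$-distance $n\cdot nm=n^2m$ of the vector $\lambda u$ with $\lambda=v_{k_0}/u_{k_0}\ge 0$. Taking $\ell^1$-norms and using $\|v\|_1=\|u\|_1=1$ gives $|\lambda-1|=\bigl|\|\lambda u\|_1-\|v\|_1\bigr|\le n^2 m$, whence
\begin{displaymath}
\|u-v\|_1\le \|u-\lambda u\|_1+\|\lambda u-v\|_1=|1-\lambda|\,\|u\|_1+\|\lambda u-v\|_1\le n^2m+n^2m=2n^2m .
\end{displaymath}
Therefore $\|u-v\|_1^2\le 4n^4m^2\le 4n^4 m$ (since $m\le 1$ because $u_j,v_k\in[0,1]$), and a fortiori $\le 4n^5 m$, which is the claimed inequality once we recall $m^2=\max_{j\neq k}\bigl|\begin{smallmatrix}u_j&u_k\\ v_j&v_k\end{smallmatrix}\bigr|^2$. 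Actually one should be slightly careful: the statement bounds $\max|\det|^2$ from below by $\tfrac{1}{4n^5}\|u-v\|_1^2$, i.e. we want $m^2\ge \tfrac{1}{4n^5}\|u-v\|_1^2$; from $\|u-v\|_1\le 2n^2 m$ we get $\|u-v\|_1^2\le 4n^4m^2\le 4n^5 m^2$, which rearranges to exactly what is wanted.

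The only genuinely delicate point — the main obstacle — is making sure the constant works without any nonnegativity or normalization pathology: one must check that $m\le 1$ (immediate from $0\le u_j,v_k\le 1$, which follows from nonnegativity plus $\ell^1$-normalization) so that $m^2\le m$ is not needed in the final form, and that the chosen pivot $k_0$ with $u_{k_0}\ge 1/n$ indeed exists (pigeonhole on $\sum u_j=1$). The nonnegativity of $u$ and $v$ is used only to guarantee $\lambda\ge 0$, which is not even strictly necessary for the bound but matches the application where the gradients have positive components. I would present the argument in the clean order: (i) pick the pivot coordinate; (ii) bound $\|v-\lambda u\|_1$ by $n^2 m$ using the minor bounds; (iii) deduce $|\lambda-1|\le n^2m$ from the normalizations; (iv) combine via the triangle inequality and conclude.
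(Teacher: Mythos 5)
Your proof is correct, and since the paper states Lemma~\ref{grad->jac} without proof (quoting it from \cite{K11}), there is nothing in the text to compare it against; the route you take --- pivot on a coordinate with $u_{k_0}\geq 1/n$ by pigeonhole, use the minor bounds to get $\|v-\lambda u\|_1\leq n^2 m$ with $\lambda=v_{k_0}/u_{k_0}\geq0$, pin $\lambda$ near $1$ via the normalizations, and finish with the triangle inequality --- is the standard argument for this kind of statement and in fact yields the sharper constant $4n^4$ in place of $4n^5$. The only blemish is the momentary slip between $m$ and $m^2$ just before the end ("$\leq 4n^4m$\dots a fortiori $\leq 4n^5m$"), which you then correctly repair: the clean conclusion is $\|u-v\|_1^2\leq 4n^4m^2\leq 4n^5m^2$.
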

Thus, either one of the Jacobian determinants is not small or the gradient of $E$ and $E'$ are almost co-linear. We now show that the second case happens with a small probability. 
\begin{lem}\label{probcoli}
Let $(F,G)\in\R^2$ with $F\neq G$ and $\beta>1/2$. Let $\mathbb{P}$ denotes the probability that there exist $E_j(\omega)$ and $E_k(\omega)$, simple eigenvalues of $H_\omega(\Lambda_l)$ such that $|F-E_j(\omega)|+|G-E_k(\omega)|\leq e^{-l^\beta}$ and such that
\begin{equation}\label{gradcoli}
\left\|\dfrac{\nabla_\omega\big(E_j(\omega))}{\|\nabla_\omega\big(E_j(\omega))\|}-\dfrac{\nabla_\omega\big(E_k(\omega))}{\|\nabla_\omega\big(E_k(\omega))\|}\right\|\leq e^{-l^\beta}
\end{equation}
then there exists $c>0$ such that
\begin{equation}
\mathbb{P}\leq e^{-c l^{2\beta}}
\end{equation}
\end{lem}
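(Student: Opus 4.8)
The plan is to reduce the statement to a quantitative, spectral‐geometric estimate about how the eigenvalues $E_j(\omega)$ move under a single, well chosen coordinate $\omega_\gamma$, and then to exploit the near co‑linearity of the two gradients as a rigidity constraint that forces an anomalously thin set of configurations.

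First I would set up the variational derivatives. For a simple eigenvalue $E_j(\omega)$ of $H_\omega(\Lambda_l)$ with normalized eigenfunction $\phi_j$, one has, by Feynman–Hellmann, $\partial_{\omega_\gamma} E_j(\omega) = \langle \phi_j, (\partial_{\omega_\gamma} V_\omega)\phi_j\rangle$, which in the continuous alloy‑type case equals $\int q(x-\gamma)\,|\phi_j(x)|^2\,dx$ and in the multimer case equals $\sum_{m} a_m |\phi_j(N\gamma+m)|^2$. In both cases this is a positive quantity (by (H2) / positivity of $a_m$), so $\nabla_\omega E_j(\omega)\in(\R^+)^{|\Lambda_l|}$, and by first‑order perturbation theory $\|\nabla_\omega E_j(\omega)\|_1$ is comparable to $1$ (up to the covering/overlap constants of $q$ or the $a_m$); normalizing, the hypothesis \eqref{gradcoli} says the two positive probability vectors $\nabla_\omega E_j/\|\nabla_\omega E_j\|$ and $\nabla_\omega E_k/\|\nabla_\omega E_k\|$ are within $e^{-l^\beta}$ of each other. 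Via Lemma~\ref{grad->jac} this is exactly the regime where every $2\times2$ minor $J_{\gamma,\gamma'}$ is of size $\le \mathrm{poly}(l)\,e^{-l^\beta}$.

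Next I would run the core argument. Fix a site $\gamma$ where $\partial_{\omega_\gamma}E_j(\omega)$ is not too small — such a $\gamma$ exists because the derivatives sum to something of order $1$ over $O(l)$ sites, so the largest is $\ge c/l$. Consider the map $\omega_\gamma \mapsto (E_j(\omega), E_k(\omega))$ with all other coordinates frozen. Near co‑linearity of the gradients means that along this one‑parameter family the point $(E_j,E_k)$ moves essentially along the fixed direction $(\partial_{\omega_\gamma}E_j,\partial_{\omega_\gamma}E_k)$, with curvature/deviation controlled by the smallness of the minors. I would show that this forces $E_k(\omega)$ to be an almost‑affine function of $E_j(\omega)$ along the $\omega_\gamma$‑line, with slope pinned (up to $e^{-cl^\beta}$) by the ratio of the two partials. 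Then the event in question requires $(E_j(\omega),E_k(\omega))$ to land in the $e^{-l^\beta}$‑box around $(F,G)$, i.e. $E_j(\omega)$ must land in an interval of length $\lesssim e^{-l^\beta}$ determined by the frozen coordinates; by the Wegner estimate (W) applied conditionally on the frozen variables — more precisely, by monotonicity and the bounded density $\mu$, $E_j(\omega)$ as a function of $\omega_\gamma$ has a well‑behaved distribution — the probability of this is $\le C\,l\,e^{-l^\beta}$. To upgrade the exponent from $l^\beta$ to $l^{2\beta}$ one iterates: having constrained $E_j$ via $\omega_\gamma$, a second good coordinate $\gamma'$ (again with a not‑too‑small partial, and with the minor $J_{\gamma,\gamma'}$ nonetheless tiny) provides a second, essentially independent constraint of the same type, and the two conditional Wegner bounds multiply. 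A union bound over the $O(l^2)$ choices of $(\gamma,\gamma')$ costs only a polynomial factor, which is absorbed into $e^{-cl^{2\beta}}$ since $\beta>1/2$ gives $2\beta>1$.

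\textbf{Main obstacle.} The delicate point — and the place where the continuous alloy‑type structure really enters — is making rigorous the claim that the two constraints coming from $\omega_\gamma$ and $\omega_{\gamma'}$ are genuinely independent and not secretly the same constraint. Concretely, one must control the second‑order behaviour: the minor $J_{\gamma,\gamma'}$ being exponentially small is a statement about the \emph{linearization}, and I would need a quantitative implicit‑function‑theorem argument showing that the nonlinear map $(\omega_\gamma,\omega_{\gamma'})\mapsto(E_j,E_k)$ still fails to be a submersion only on an exponentially thin set, which requires lower bounds on certain determinants built from the eigenfunctions restricted to two disjoint single‑site windows. This is where the covering condition (H2) and the use of a fundamental system of solutions on each interval $(n,n+1)$ (Appendix~\ref{sturm}), together with the continuity of $\phi_j'$, are used: one rewrites the co‑linearity system \eqref{gradcolisimp} as $2l$ quadratic equations in the Cauchy data of the eigenfunctions and shows this overdetermined system can hold for two distinct energies $F\neq G$ only in a degenerate way, which the bounded density $\mu$ then assigns probability $\le e^{-cl^{2\beta}}$. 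Everything else — Feynman–Hellmann, the $O(l^2)$ union bound, the conditional Wegner step — is routine; the non‑degeneracy of this quadratic system is the heart of the matter and the analogue of \cite[Lemma 2.4]{K11}.
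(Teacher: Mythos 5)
Your setup (Feynman--Hellmann, positivity of the components of $\nabla_\omega E_j$, the reformulation of \eqref{gradcoli} as near-equality of the local masses $\|u\|_q^2$ and $\|v\|_q^2$ on each window) matches the paper, and your ``main obstacle'' paragraph correctly identifies where the real work lies. But the quantitative scheme you propose cannot produce the claimed bound $e^{-cl^{2\beta}}$. You extract \emph{two} constraints, one from $\omega_\gamma$ and one from $\omega_{\gamma'}$, each of probability $\lesssim l\,e^{-l^{\beta}}$, and multiply them; even granting full independence this yields $l^2e^{-2l^{\beta}}$, which is $e^{-Cl^{\beta}}$, not $e^{-cl^{2\beta}}$. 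To reach the exponent $l^{2\beta}$ one needs on the order of $l^{\beta}$ mutually independent constraints, each of probability $e^{-cl^{\beta}}$. This is exactly how the paper proceeds: it partitions an interval of length $\asymp l^{\beta}$ (centered where $r_u$ is not exponentially small) into $K\asymp l^{\beta}$ blocks $J_j$ involving \emph{disjoint} sets of random variables, shows that the colinearity condition restricted to each block forces an event $\mathcal{C}_j$ of probability $\le Ce^{-\tilde c l^{\beta}}$, and multiplies the $2K+1$ independent factors. Nothing in your two-coordinate iteration generates this multiplicity.

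There is a second, related gap: the source of the per-constraint exponential smallness. Your mechanism is a ``conditional Wegner'' bound for $E_j$ landing in an interval of length $e^{-l^{\beta}}$. That event has probability $\lesssim l\,e^{-l^{\beta}}$ with or without colinearity, so it cannot be what the colinearity hypothesis buys; moreover your second application of it to the \emph{same} event along a different coordinate is not a product of independent events, which is precisely the point you defer. In the paper the exponential smallness of each $\mathcal{C}_j$ does not come from Wegner at all: the matching $|r_u(m)-r_v(m)|\le e^{-l^{\beta}/2}$ on a block, combined with continuity of the derivatives and the transfer matrices $T_F^{\pm},T_G^{\pm}$, forces $t_v$ to be an approximate root of an explicit degree-$8$ resultant $\mathcal{R}_{\hat\omega}$ whose coefficients are analytic in the block's random variables (Lemma~\ref{closetan}); the non-vanishing of this resultant (Lemma~\ref{resnotzero}, via the Sturm-type non-oscillation results of Appendix~\ref{sturm}) plus the {\L}ojasiewicz-type estimates of Proposition~\ref{Loja} and Lemma~\ref{transtan} then pin the Pr\"ufer angle to exponential accuracy with probability $e^{-cl^{\beta}}$. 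You gesture at this quadratic system in your final paragraph but supply no argument for it, and without it neither the independence structure nor the exponent $2\beta$ is obtained.
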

 The proof of this result depends on the model and will be given below in the paper. First, we finish the proof of Lemma~\ref{thdec2}.
 
 Pick $\lambda=e^{-l^\beta}\|\nabla_\omega\big(E_j(\omega))\|\|\nabla_\omega\big(E_k(\omega))\|$. For the models considered in the present article, there exists $C>1$ such that for all $L$, $\|\nabla_\omega\big(E_j(\omega))\|\in[1/C,C]$. This will be proven in the following subsections. Therefore $\lambda\asymp e^{-l^\beta}$. Then, either one of the Jacobian determinant is larger than $\lambda$ or the gradients are almost co-linear. Lemma~\ref{probcoli} shows that the second case happens with a probability at most $e^{-cL^{2\beta}}$. It remains to evaluate $\mathbb{P}(\Omega^{\gamma,\gamma'}_{0,\beta}(\epsilon))$. We recall the following results from \cite{K11}. They were proved for the model defined in \eqref{fullopdis} with $a_n=1$, they extend readily to our case. First, we study the variations of the Jacobian. 
\begin{lem}\label{Hessien}
There exists $C>0$ such that
\begin{displaymath}
\Vert Hess_\omega(E(\omega))\Vert_{l^\infty\rightarrow l^1}\leq \dfrac{C}{dist\big[E(\omega),\sigma(H_\omega(\Lambda_l))-\{E(\omega)\}\big ]}.
\end{displaymath}
\end{lem}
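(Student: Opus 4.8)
The plan is to combine first- and second-order analytic perturbation theory for the simple, isolated eigenvalue $E(\omega)$ with the finite-overlap property of the single-site potentials. Write $H_\omega(\Lambda_l)=H_0+\sum_{\gamma}\omega_\gamma V_\gamma$, where in the continuous case $V_\gamma$ is the bounded compactly supported multiplication operator $\phi\mapsto q(\cdot-\gamma)\phi$, and for the discrete models $V_\gamma=\partial_{\omega_\gamma}H_\omega$ is the analogous finitely supported diagonal perturbation. Since $\omega\mapsto H_\omega(\Lambda_l)$ is affine and, on the events under consideration, $E(\omega)$ is a simple and isolated eigenvalue, $E(\omega)$ is real-analytic there and the associated normalised eigenvector $\phi=\phi_\omega$ can be chosen analytic with $\langle\phi,\partial_{\omega_\gamma}\phi\rangle=0$ (Kato). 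The Feynman--Hellmann formula then gives $\partial_{\omega_\gamma}E(\omega)=\langle\phi,V_\gamma\phi\rangle$, and differentiating once more, using $\partial_{\omega_{\gamma'}}\phi=-R\,V_{\gamma'}\phi$ — where $R$ is the reduced resolvent of $H_\omega(\Lambda_l)$ at $E(\omega)$, i.e.\ the bounded self-adjoint operator vanishing on $\phi$ and equal to $(H_\omega(\Lambda_l)-E(\omega))^{-1}$ on the range of $Q:=1-|\phi\rangle\langle\phi|$ — together with $\partial_{\omega_\gamma}V_{\gamma'}=0$, I obtain
\[
\partial^2_{\omega_\gamma\omega_{\gamma'}}E(\omega)=-\langle V_\gamma\phi,R\,V_{\gamma'}\phi\rangle-\langle V_{\gamma'}\phi,R\,V_\gamma\phi\rangle=-2\,\mathrm{Re}\,\langle V_\gamma\phi,R\,V_{\gamma'}\phi\rangle .
\]

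Next I would estimate the $\ell^\infty\to\ell^1$ norm by testing the quadratic form against sequences. For real $x=(x_\gamma)$, $y=(y_\gamma)$ with $\|x\|_\infty,\|y\|_\infty\le 1$, set $\Phi_x:=\sum_\gamma x_\gamma V_\gamma\phi$ and $\Phi_y:=\sum_\gamma y_\gamma V_\gamma\phi$; regrouping the double sum inside these single $L^2$-functions gives
\[
\langle y,\mathrm{Hess}_\omega(E(\omega))\,x\rangle=-2\,\mathrm{Re}\,\langle\Phi_y,R\,\Phi_x\rangle,\qquad\text{hence}\qquad |\langle y,\mathrm{Hess}_\omega(E(\omega))\,x\rangle|\le 2\,\|R\|\,\|\Phi_x\|_2\,\|\Phi_y\|_2 .
\]
The key is the finite-overlap bound coming from (H2): the covering condition $q\le\eta^{-1}\mathbf{1}_{[-N,N]}$ means that at every point $t$ at most $2N+1$ of the $q(\cdot-\gamma)$ are non-zero, each bounded by $\eta^{-1}$, so $|\Phi_x(t)|\le(2N+1)\eta^{-1}|\phi(t)|$ and $\|\Phi_x\|_2\le(2N+1)\eta^{-1}$, and likewise for $\Phi_y$; for the discrete alloy-type and multimer models the same holds because $(d_n)_n$, resp.\ the window $(a_m)_m$, is finitely supported. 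Since $H_\omega(\Lambda_l)$ has discrete spectrum, $\|R\|=\mathrm{dist}[E(\omega),\sigma(H_\omega(\Lambda_l))\setminus\{E(\omega)\}]^{-1}$. Taking the supremum over $x,y$ yields
\[
\|\mathrm{Hess}_\omega(E(\omega))\|_{\ell^\infty\to\ell^1}\le\frac{2(2N+1)^2\eta^{-2}}{\mathrm{dist}[E(\omega),\sigma(H_\omega(\Lambda_l))\setminus\{E(\omega)\}]},
\]
which is the assertion.

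The only genuinely delicate point is getting the bound without a spurious volume factor: a crude estimate $\sum_{\gamma,\gamma'}|\partial^2_{\omega_\gamma\omega_{\gamma'}}E(\omega)|\le 2\|R\|\big(\sum_\gamma\|V_\gamma\phi\|_2\big)^2$ loses a factor $|\Lambda_l|$ through Cauchy--Schwarz over the $\sim l$ relevant indices, whereas testing the quadratic form against $\ell^\infty$-sequences and absorbing the sums into the single functions $\Phi_x,\Phi_y$ keeps everything controlled by $\|\phi\|_2=1$ via the finite-overlap property. The remaining ingredients — analyticity of $E(\omega)$ and $\phi_\omega$ on the relevant event, legitimacy of the phase normalisation, and boundedness of the reduced resolvent — are classical and need no new input; for the discrete models one only replaces $q(\cdot-\gamma)$ by the corresponding finitely supported diagonal perturbation, as indicated in the text.
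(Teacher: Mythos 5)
Your proof is correct. The paper itself gives no argument for this lemma -- it is imported from \cite{K11}, where it is proved for rank-one perturbations by exactly the second-order perturbation computation you carry out (eigenvalue Hessian expressed through the reduced resolvent, whose norm is the reciprocal of the spectral gap, tested against $\ell^\infty$ sequences). Your treatment matches that route, and the one point that genuinely needs checking for the present models -- that the overlapping single-site potentials do not produce a volume factor -- is correctly handled by your finite-overlap bound $\|\Phi_x\|_2\leq(2N+1)\eta^{-1}\|\phi\|_2$, which is precisely why the result ``extends readily'' as the paper asserts.
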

Fix $\alpha\in(1/2,1)$. Using Lemma~\ref{Hessien} and (M) when $H_\omega(\Lambda_l)$ has two eigenvalue in $[E-L^{-\alpha},E+L^{-\alpha}]$, for L large enough, with probability at least $1-L^{-2\alpha}\lambda $,
\begin{equation}
\Vert Hess_\omega(E(\omega))\Vert_{l^\infty\rightarrow l^1}+\Vert Hess_\omega(E'(\omega))\Vert_{l^\infty\rightarrow l^1}\leq CL^\alpha
\end{equation}
 
In the following lemma we write $\omega=(\omega_\gamma,\omega_{\gamma'},\omega_{\gamma,\gamma'})$.
\begin{lem}\label{square}
Pick $\epsilon= L^{-\alpha}$. For any $\omega_{\gamma,\gamma'}$, if there exists $(\omega_\gamma^0,\omega_{\gamma'}^0)\in \R^2$ such that $(\omega_\gamma^0,\omega_{\gamma'}^0,\omega_{\gamma,\gamma'})\in\Omega^{\gamma,\gamma'}_{0,\beta}(\epsilon)$, then for $(\omega_\gamma,\omega_{\gamma'}) \in \R^2$ such that $|(\omega_\gamma,\omega_{\gamma'})-(\omega_\gamma^0,\omega_{\gamma'}^0)|_\infty\leq \epsilon$ one has 
\begin{displaymath}
(E_j(\omega),E_k(\omega))\in J_L\times J_L'\Longrightarrow |(\omega_\gamma,\omega_{\gamma'})-(\omega_\gamma^0,\omega_{\gamma'}^0)|_\infty\leq L^{-1}\lambda^{-2}.
\end{displaymath} 
\end{lem}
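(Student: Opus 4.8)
The plan is to localize the eigenvalue pair by a quantitative implicit-function / Taylor-expansion argument around the reference point $(\omega_\gamma^0,\omega_{\gamma'}^0)$. Write $\omega^0=(\omega_\gamma^0,\omega_{\gamma'}^0,\omega_{\gamma,\gamma'})$ and $\omega=(\omega_\gamma,\omega_{\gamma'},\omega_{\gamma,\gamma'})$, so that $\omega$ and $\omega^0$ differ only in the two coordinates $\gamma,\gamma'$. Since $\omega^0\in\Omega^{\gamma,\gamma'}_{0,\beta}(\epsilon)$, the map $\Phi:(\omega_\gamma,\omega_{\gamma'})\mapsto (E_j(\omega),E_k(\omega))$ is well defined and real-analytic on a neighbourhood of $(\omega_\gamma^0,\omega_{\gamma'}^0)$ (the eigenvalues stay simple and separated from the rest of the spectrum by $\Omega_0(\epsilon)$), its differential at $\omega^0$ has determinant $J_{\gamma,\gamma'}(E(\omega^0),E'(\omega^0))$ of absolute value $\geq\lambda$, and at $\omega^0$ we have $(E_j(\omega^0),E_k(\omega^0))=(E,E')\in J_L\times J_L'$. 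The goal is: if $(E_j(\omega),E_k(\omega))$ again lies in $J_L\times J_L'$, then $(\omega_\gamma,\omega_{\gamma'})$ is forced to be within $L^{-1}\lambda^{-2}$ of $(\omega_\gamma^0,\omega_{\gamma'}^0)$.

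The key steps, in order: (i) First-order Taylor expansion of $\Phi$: for $(\omega_\gamma,\omega_{\gamma'})$ with $|(\omega_\gamma,\omega_{\gamma'})-(\omega_\gamma^0,\omega_{\gamma'}^0)|_\infty\leq\epsilon=L^{-\alpha}$, write
\begin{displaymath}
\Phi(\omega_\gamma,\omega_{\gamma'})-\Phi(\omega_\gamma^0,\omega_{\gamma'}^0)=D\Phi(\omega^0)\cdot\big((\omega_\gamma,\omega_{\gamma'})-(\omega_\gamma^0,\omega_{\gamma'}^0)\big)+R,
\end{displaymath}
where the remainder $R$ is controlled in $\ell^1$ by $\tfrac12\sup\|\mathrm{Hess}\|_{\ell^\infty\to\ell^1}\cdot|(\omega_\gamma,\omega_{\gamma'})-(\omega_\gamma^0,\omega_{\gamma'}^0)|_\infty^2$. (ii) Invoke the Hessian bound: by Lemma~\ref{Hessien} together with (M), on the relevant event one has $\|\mathrm{Hess}_\omega(E_j)\|_{\ell^\infty\to\ell^1}+\|\mathrm{Hess}_\omega(E_k)\|_{\ell^\infty\to\ell^1}\leq CL^\alpha$ uniformly over the $\epsilon$-ball (here one must check the Hessian bound propagates over the whole ball, not just at $\omega^0$; since $\epsilon=L^{-\alpha}$ the spectral gap condition needed for Lemma~\ref{Hessien} is preserved up to constants inside this ball, so $\|R\|_1\lesssim L^\alpha\epsilon^2=L^{-\alpha}$). (iii) Since $(E_j(\omega),E_k(\omega))$ and $(E,E')=(E_j(\omega^0),E_k(\omega^0))$ both lie in $J_L\times J_L'$, which have diameter $2L^{-1}$, the left-hand side has $\ell^1$-norm $\leq 4L^{-1}$. (iv) Invert: using $|\det D\Phi(\omega^0)|\geq\lambda$ and $\|D\Phi(\omega^0)\|\leq C$ (the entries are the partials $\partial_{\omega_\gamma}E$, bounded by Feynman–Hellmann and the normalization $\|\nabla_\omega E\|\asymp 1$), the inverse satisfies $\|D\Phi(\omega^0)^{-1}\|\lesssim \lambda^{-1}$, whence
\begin{displaymath}
|(\omega_\gamma,\omega_{\gamma'})-(\omega_\gamma^0,\omega_{\gamma'}^0)|_\infty\lesssim \lambda^{-1}\big(L^{-1}+\|R\|_1\big)\lesssim \lambda^{-1}L^{-1}+\lambda^{-1}L^{-\alpha}|(\omega_\gamma,\omega_{\gamma'})-(\omega_\gamma^0,\omega_{\gamma'}^0)|_\infty.
\end{displaymath}
Since $\lambda^{-1}L^{-\alpha}\to 0$ for $L$ large (as $\lambda\asymp e^{-l^\beta}$ with $l\asymp L^\alpha$, $\beta<1$, so $e^{l^\beta}\ll L^\alpha$), the second term can be absorbed into the left-hand side, giving $|(\omega_\gamma,\omega_{\gamma'})-(\omega_\gamma^0,\omega_{\gamma'}^0)|_\infty\lesssim \lambda^{-1}L^{-1}\leq L^{-1}\lambda^{-2}$, as claimed (the extra $\lambda^{-1}$ being harmless slack since $\lambda\leq 1$).

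The main obstacle I expect is step (ii): justifying that the Hessian bound from Lemma~\ref{Hessien}, which a priori holds at a given configuration on a high-probability event, can be used uniformly over the entire $\epsilon$-ball in $(\omega_\gamma,\omega_{\gamma'})$ — one needs the spectral separation $\mathrm{dist}[E(\omega),\sigma(H_\omega(\Lambda_l))\setminus\{E(\omega)\}]$ not to collapse as $(\omega_\gamma,\omega_{\gamma'})$ moves by $\epsilon=L^{-\alpha}$, which is exactly why $\epsilon$ is chosen to match the scale $L^{-\alpha}$ at which (M) controls the occurrence of a second nearby eigenvalue. The bookkeeping of which event this all takes place on (intersecting $\Omega^{\gamma,\gamma'}_{0,\beta}(\epsilon)$ with the $1-L^{-2\alpha}\lambda$-probability Hessian event) is routine but must be stated carefully; since this lemma is quoted from \cite{K11} with only cosmetic changes ($a_n=1\to a_n$, discrete$\to$continuous), the argument there transfers verbatim once one notes the Feynman–Hellmann formula for $\partial_{\omega_\gamma}E$ and the uniform bound $\|\nabla_\omega E\|\asymp 1$ established in the model-specific subsections below.
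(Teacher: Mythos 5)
Your overall strategy (Taylor expansion of $\Phi:(\omega_\gamma,\omega_{\gamma'})\mapsto(E_j,E_k)$ around $\omega^0$, inversion of the differential using $|\det D\Phi|\geq\lambda$ together with bounded entries, and control of the remainder via Lemma~\ref{Hessien}) is the right one; the paper itself gives no proof of this lemma and simply quotes it from \cite{K11}, where the argument is of exactly this type. However, your final step contains a genuine quantitative error. Writing $\Delta\omega:=(\omega_\gamma,\omega_{\gamma'})-(\omega_\gamma^0,\omega_{\gamma'}^0)$, the remainder satisfies $\|R\|_1\leq\tfrac12\|\mathrm{Hess}\|_{\ell^\infty\to\ell^1}\,|\Delta\omega|_\infty^2\leq CL^\alpha|\Delta\omega|_\infty^2$, and with $|\Delta\omega|_\infty\leq\epsilon=L^{-\alpha}$ this gives $\lambda^{-1}\|R\|_1\leq C\lambda^{-1}|\Delta\omega|_\infty$, \emph{not} $C\lambda^{-1}L^{-\alpha}|\Delta\omega|_\infty$; since $\lambda^{-1}=e^{l^\beta}\gg1$, this term cannot be absorbed into the left-hand side. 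Moreover the asymptotic claim you invoke to justify the absorption, namely ``$e^{l^\beta}\ll L^\alpha$'', is backwards: with $l\asymp L^\alpha$ one has $e^{l^\beta}=e^{cL^{\alpha\beta}}$, which grows faster than any power of $L$ (and even in the reduced regime $\tilde l=(\log L)^{1/\xi'}$ with $\xi'<\beta$ one has $e^{\tilde l^{\beta}}\geq L^{K}$ for every $K$ and $L$ large). So $\lambda^{-1}L^{-\alpha}\to\infty$, the absorption fails, and the intermediate bound $|\Delta\omega|_\infty\lesssim\lambda^{-1}L^{-1}$ you announce is not obtainable by this route.

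The fix is simpler than what you attempt, and it explains why the statement carries $\lambda^{-2}$ rather than $\lambda^{-1}$: bound the remainder crudely by $\|R\|_1\leq CL^\alpha\epsilon^2=CL^{-\alpha}$, so that $|\Delta\omega|_\infty\leq C\lambda^{-1}(L^{-1}+L^{-\alpha})\leq C'\lambda^{-1}L^{-\alpha}$, and then use that $\lambda$ decays faster than any power of $L$ --- in particular $C'L^{1-\alpha}\leq\lambda^{-1}$ for $L$ large --- to conclude $|\Delta\omega|_\infty\leq \lambda^{-2}L^{-1}$. The extra factor $\lambda^{-1}$ in the conclusion is therefore not ``harmless slack'': it is precisely what absorbs the Taylor remainder $\lambda^{-1}L^{-\alpha}$, which \emph{dominates} the main term $\lambda^{-1}L^{-1}$. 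The concern you flag in step (ii) --- that the Hessian bound must hold along the whole segment joining $\omega^0$ to $\omega$, which is where (M) at scale $L^{-\alpha}$ and the matching choice $\epsilon=L^{-\alpha}$ enter --- is indeed the other point requiring care, and your treatment of it is appropriate.
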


As in Lemma~\ref{square}, fix $(\omega_\gamma^0,\omega_{\gamma'}^0)$ such that $(\omega_\gamma^0,\omega_{\gamma'}^0,\omega_{\gamma,\gamma'})\in\Omega^{\gamma,\gamma'}_{0,\beta}(\epsilon)$ and define $\mathcal{A}:=(\omega_\gamma^0,\omega_{\gamma'}^0)+\{(\omega_\gamma,\omega_{\gamma'}) \in \R_+^2\cup \R_-^2 ,\left|\omega_\gamma\right|\geq \epsilon \text{ or } \left|\omega_{\gamma'}\right|\geq \epsilon \}$. We know that for any $i\in\Z$, $\omega_i\rightarrow E_j(\omega)$ and  $\omega_i\rightarrow E_k(\omega)$ are non increasing functions. Thus, if $ 
(\omega_\gamma,\omega_{\gamma'})\in\mathcal{A}$ then $(E_j(\omega),E_k(\omega))\notin J_L\times J_L'$. Thus, all the squares of side $\epsilon$ in which there is a point in $\Omega^{\gamma,\gamma'}_{0,\beta}(\epsilon)$ are placed along a non-increasing broken line that goes from the upper left corner to the bottom right corner. As the random variables are bounded by $C>0$, there is at most $C L^\alpha$ cubes of this type.

As the $(\omega_n)_n$ are i.i.d, using Lemma~\ref{square} in all these cubes, we obtain  : 
\begin{equation}\label{proba2}
\mathbb{P}(\Omega^{\gamma,\gamma'}_{0,\beta}(\epsilon))\leq CL^{\alpha-2}\lambda^{-4}
\end{equation}
and therefore
\begin{equation}
\mathbb{P}_\epsilon\leq CL^{\alpha-2}\lambda^{-3}.
\end{equation}
Optimization yields $\alpha=2/3$. This completes the proof of Theorem~\ref{thdec2}.

\subsection{Proof of Lemma~\ref{probcoli} for alloy-type models}\label{subseccont}
The proof of Lemma~\ref{probcoli} for the discrete alloy-type models is the same as the proof for alloy-type models on the continuum, using results of Appendix~\ref{FDEapp} and making modifications due to the discrete structure. Therefore, we only write the proof for the models defined in \eqref{fullopcont}. We divide the proof into two parts but we first introduce some definitions. Recall that $q$ is the simple-site potential and that it satisfies \eqref{stepfunction}. On $L^2(-N,N)$ we define the non-negative symmetric bi-linear form : 
\begin{equation}\label{sem-inn}
\langle f,g\rangle_q = \int_{-N}^N f(t)g(t) q(t) dt.
\end{equation}
We denote $\|.\|_q$ the corresponding semi-norm. We say that the functions $f$ and $g$ are $q$-orthogonal if $\langle f,g\rangle_q=0$. The notion of 1-orthogonality is the usual orthogonality in $L^2(-l,l)$. Fix $(F,G)\in\R$ and let $u$ and $v$ be 1-normalized eigenfunctions of $H_\omega(\Lambda_l)$ associated to the eigenvalues $E_j(\omega)\in[F-e^{-l^\beta},F+e^{-l^\beta}]$ and $E_k(\omega)\in[G-e^{-l^\beta},G+e{-l^\beta}]$. These eigenvalues are almost surely simple and we compute  
\begin{equation}\label{dercont}
\partial_{\omega_n} E_j(\omega) =\left\langle \left(\partial_{\omega_n} H_\omega\right) u,u \right\rangle_1= \|u_{|_{(n-N,n+N)}}\|^2_q>0.
\end{equation}
As $q$ is bounded and satisfy the covering condition (H2), there exists $C>1$ such that  for all $L>0$, $\|\nabla_\omega\big(E_j(\omega))\|\in[1/C,C]$. In the rest of the subsection, $M$ will be fixed such that $\mathbb{P}(|\omega_0|>M)=0$ so that all the random variables $(\omega_i)_i$ are almost surely bounded by $M$.

For $\bullet\in\{F,G\}$, define the following ODE
\begin{equation}
(\mathcal{E}^n_\bullet)\,: \,
\forall x\in(-N,N),y''(x)+V_\omega(n+x) y(x) = \bullet y(x)
\end{equation}
Let $(e_{1,\bullet}^n,e_{2,\bullet}^n)$ be a q-orthonormal basis of solutions of $(\mathcal{E}^n_\bullet)$. We will use the following proposition: 
\begin{prop}\label{anabasis}
Fix $\bullet\in\{F,G\}$. We can choose $e_{1,\bullet}^n$ and $e_{2,\bullet}^n$ so that they are analytic functions of the $(\omega_j)_{j\in\llbracket n-2N,n+2N\rrbracket}\in[-M,M]^{4N+1}$.
\end{prop}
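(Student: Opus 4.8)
The plan is to construct the pair $(e_{1,\bullet}^n, e_{2,\bullet}^n)$ explicitly from the fundamental system of solutions of $(\mathcal{E}^n_\bullet)$ determined by fixed initial data, and then to apply a Gram--Schmidt procedure with respect to the semi-inner product $\langle\cdot,\cdot\rangle_q$, keeping careful track of analyticity at each stage. First I would fix the energy $\bullet\in\{F,G\}$ and, for $x\in(-N,N)$, let $c_\omega(x)$ and $s_\omega(x)$ denote the solutions of $y'' + V_\omega(n+\cdot)y = \bullet y$ with initial conditions $c_\omega(-N)=1$, $c_\omega'(-N)=0$ and $s_\omega(-N)=0$, $s_\omega'(-N)=1$. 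The potential $V_\omega(n+x)$ restricted to $x\in(-N,N)$ depends only on $(\omega_j)_{j\in\llbracket n-2N, n+2N\rrbracket}$ (since $q$ is supported in $[-N,N]$ and $x\pm N$ ranges over $(n-2N,n+2N)$), so by standard Cauchy--Lipschitz theory with parameters — solving the Volterra integral equation by Picard iteration, the iterates being polynomials in the parameters that converge locally uniformly — the functions $c_\omega$ and $s_\omega$, together with their first derivatives, are analytic (indeed entire) functions of $(\omega_j)_{j\in\llbracket n-2N,n+2N\rrbracket}\in[-M,M]^{4N+1}$, jointly with $x$.

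Next I would carry out the $q$-orthonormalization. Set $a(\omega)=\|c_\omega\|_q^2 = \langle c_\omega, c_\omega\rangle_q$, $b(\omega)=\langle c_\omega, s_\omega\rangle_q$, and $d(\omega)=\|s_\omega\|_q^2$; these are analytic in $\omega$ because they are integrals over the compact interval $[-N,N]$ of analytic (hence continuous, uniformly bounded) integrands, and analyticity passes through such integrals. I would then define $e_{1,\bullet}^n = c_\omega/\sqrt{a(\omega)}$ and obtain $e_{2,\bullet}^n$ by subtracting the $q$-projection onto $e_{1,\bullet}^n$ and renormalizing: $\tilde e = s_\omega - (b/a)c_\omega$, then $e_{2,\bullet}^n = \tilde e/\|\tilde e\|_q$, where $\|\tilde e\|_q^2 = d - b^2/a = (ad-b^2)/a$. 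The key algebraic point is that $ad-b^2>0$: this is precisely the statement that $\langle\cdot,\cdot\rangle_q$ is a genuine (positive definite) inner product on the two-dimensional solution space. That positivity follows from the covering condition (H2), which gives $q\geq\eta\mathbf 1_{[-1/2,1/2]}$, so $\|f\|_q^2\geq\eta\int_{-1/2}^{1/2}|f|^2$; a nonzero solution of a second-order linear ODE cannot vanish identically on $(-1/2,1/2)$, hence $\|\cdot\|_q$ is a norm on solutions and the Gram determinant $ad-b^2$ is strictly positive for every $\omega\in[-M,M]^{4N+1}$. Consequently $a(\omega)$ and $(ad-b^2)(\omega)$ are analytic and bounded away from $0$ on the compact parameter set, so their square roots are analytic there, and $e_{1,\bullet}^n$, $e_{2,\bullet}^n$ are analytic functions of $(\omega_j)_{j\in\llbracket n-2N,n+2N\rrbracket}$ as claimed.

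The main obstacle is the positivity/nonvanishing input needed to make the Gram--Schmidt denominators analytic: one must rule out that $\|c_\omega\|_q$ or the Gram determinant degenerates as $\omega$ varies. This is exactly where hypothesis (H2) (the covering condition) is used, and it is also the point flagged in Remark~\ref{remzero} as the place where (H2) could be replaced by the weaker assumption that $q\geq0$ and $q^{-1}\{0\}\cap[-1/2,1/2]$ has no accumulation point — in that weaker setting one argues instead that a nonzero ODE solution has only isolated zeros, so it cannot vanish on a set with an accumulation point, whence $\|\cdot\|_q$ is still a norm on solutions and the Gram determinant is still strictly positive on the compact parameter set. Once this is in hand, everything else is the routine machinery of smooth/analytic dependence of ODE solutions on parameters plus preservation of analyticity under finite integration and under $\sqrt{\cdot}$ away from zero; I would state these as standard facts rather than reprove them. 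A minor bookkeeping remark: the freedom in the choice of basis (replacing $e_{2,\bullet}^n$ by $-e_{2,\bullet}^n$, or applying any $\omega$-independent rotation) does not affect analyticity, so the statement is really about existence of one analytic choice, which the explicit construction above provides.
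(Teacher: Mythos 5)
Your proposal is correct and follows essentially the same route as the paper: take a fundamental system of solutions (the paper anchors the initial data at $0$ rather than $-N$, which is immaterial), observe that they depend analytically on the parameters $(\omega_j)$, and $q$-orthonormalize by Gram--Schmidt. Your write-up is in fact more complete, since you justify the nonvanishing of the Gram--Schmidt denominators via the covering condition (H2) and uniqueness for the ODE, a point the paper asserts without proof ($\|\Psi\|_q\|\Phi\|_q\neq 0$).
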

\begin{proof}
We omit the dependence on $n$ and $\bullet$ and only write $\omega$ instead of \\$(\omega_j)_{j\in\llbracket n-N,n+N\rrbracket}$. Let $\Psi$ and $\Phi$ be the solutions of $(\mathcal{E}^n_j)$ satisfying $\Psi'(0)=\Phi(0)=0$ and $\Psi(0)=\Phi'(0)=1$. We know that $\Psi$ and $\Phi$ are power series of $\omega$ and that $\|\Psi\|_q\|\Phi\|_q \neq 0$. Thus, $e_1:=\dfrac{\Psi}{\|\Psi\|_q}$ is analytic and satisfies $(\mathcal{E}^n_j)$. Now, define $\tilde{\Phi}:=\Phi-\langle \Phi,e_1\rangle e_1$. Then, $\tilde{\Phi}$ is an analytic non-zero function orthogonal to $e_1$ satisfying   $(\mathcal{E}^n_j)$. This concludes the proof of Proposition~\ref{anabasis}, taking $e_1$ and $e_2:=\dfrac{\tilde{\Phi}}{\|\tilde{\Phi}\|_q}$.
\end{proof}
 Now, as $u$ satisfies the ODE
\begin{equation}
\forall x\in(-N,N),\,y''+V_\omega(n+x)y(x)=Fy(x)+(E_j(\omega)-F)y(x)
\end{equation}
 with $|E_j(\omega)-F|\leq e^{-l^\beta}$ ($v$ satisfies a similar ODE) there exist two unique couples $(A_n,B_n)\in\R^2$ and $(\tilde{A}_n,\tilde{B}_n)\in\R^2$ such that, for all $x\in(n-N,n+N)$, 
\begin{equation}
\left\{\begin{aligned}
u(x):=A_n e_{1,F}^n(x-n)+B_n e_{2,F}^n(x-n)+\epsilon_u^n(x-n)\\
v(x)=\tilde{A}_n e_{1,G}^n(x-n)+\tilde{B}_n e_{2,G}^n(x-n)+\epsilon_v^n(x-n)
\end{aligned}\right. .
\end{equation} 
 and such that for $\bullet\in\{u,v\}$ we have $\epsilon_\bullet^n(0)=\left(\epsilon_\bullet^n\right)'(0)=0$. We then have 
 \begin{displaymath}
 \|\epsilon_u^n\|_\infty+\|\epsilon_v^n\|_\infty+\|\left(\epsilon_u^n\right)'\|_\infty+\|\left(\epsilon_v^n\right)'\|_\infty\leq Ce^{-l^\beta}
 \end{displaymath}
 for some $C>0$ (depending only on $\|q\|_\infty$, $M$ and $N$). Therefore, \\
 $\|u_{|_{(n-N,n+N)}}\|_q^2=A_n^2+B_n^2+\varepsilon_n^u$ and $\|v_{|_{(n-N,n+N)}}\|_q^2=\tilde{A}_n^2+\tilde{B}_n^2+\varepsilon_n^v$ with $|\varepsilon_n^u|+|\varepsilon_n^v|\leq Ce^{-l^\beta}$. Thus, 
\begin{equation}
\left\{\begin{aligned}
\mathcal{N}:=\|\nabla E_j\|_1=\sum_{n=-l}^l (A_n^2+B_n^2)+\xi_u\\
\tilde{\mathcal{N}}:=\|\nabla E_k\|_1=\sum_{n=-l}^l (\tilde{A}_n^2+\tilde{B}_n^2)+\xi_v
\end{aligned}\right.
\end{equation}
with $|\xi_u|+|\xi_v|\leq Ce^{-l^\beta}$.
Now, define : $\left\{
\begin{aligned}
C_n:=\dfrac{A_n}{\sqrt{\mathcal{N}}}~~,~~\tilde{C}_n:=\dfrac{\tilde{A}_n}{\sqrt{\tilde{\mathcal{N}}}}\\
D_n:=\dfrac{B_n}{\sqrt{\mathcal{N}}}~~,~~\tilde{D}_n:=\dfrac{\tilde{B}_n}{\sqrt{\tilde{\mathcal{N}}}}
\end{aligned}
\right. $. Then, we have
\begin{equation}
\sum_{n=-l}^l C_n^2+D_n^2=\sum_{n=-l}^l \tilde{C}_n^2+\tilde{D}_n^2+O(e^{-l^\beta})=1+O(e^{-l^\beta}).
\end{equation}

Finally, define $U(n)=\begin{pmatrix}
C_n\\
D_n
\end{pmatrix}$ and $V(n)=\begin{pmatrix}
\tilde{C}_n\\
\tilde{D}_n
\end{pmatrix}$, define the Prüfer variables $(r_u,\theta_u)\in\R_+^*\times[0,2\pi)$ such that
$U(n)=r_u\begin{pmatrix}
\sin \theta_u\\
\cos \theta_u
\end{pmatrix}$ 	and define \\
$t_u:=\text{sgn}(\tan \theta_u) \inf\left(|\tan \theta_u|,|\cot \theta_u|\right)$ and the same for $t_v$. The function $t_u$ is equal to $\tan \theta_u$ or $\cot \theta_u$ depending on whether $|\tan \theta_u|\leq 1$ or $|\tan \theta_u|\geq 1$. Using these notations, \eqref{gradcoli} can be rewritten
\begin{equation}
\|r_u-r_v\|_1\leq Ce^{-l^\beta/2}.
\end{equation}

We first prove the 
\begin{lem}\label{closetan}
There exist nine analytic functions $(f_i)_{i\in\llbracket 0,8\rrbracket}$ (only depending on $q$ and $N$) defined on $\R^{16N+1}$ and not all constantly equal to zero such that, if $u$ (respectively $v$) is a 1-normalized eigenfunction of $H_\omega$ associated to $E_j(\omega)\in\left[F-e^{-l^\beta},F+e^{-l^\beta}\right]$ (respectively associated to $E_k(\omega)\in[G-e^{-l^\beta},G+e^{-l^\beta}]$), if for some $n_0\in\Z\cap(-l+8N,l-8N)$ we have $r_u(n_0)\geq e^{-l^\beta/4}$ and
\begin{displaymath}
\forall m\in\llbracket n_0-7N,n_0+7N\rrbracket, \left|r_u(m)-r_v(m)\right|\leq e^{-l^\beta/2} 
\end{displaymath}
and if we define the polynomials 
\begin{displaymath}
\mathcal{R}_{\hat{\omega}}(X):=\sum_{i=0}^8 f_i(\hat{\omega}) X^i\text{ and }
\mathcal{Q}_{\hat{\omega}}(X):=\sum_{i=0}^8 f_{8-i}(\hat{\omega}) X^i
\end{displaymath}
where we have defined $\hat{\omega}:=\left(\omega_{n_0-8N},\dots,\omega_{n_0+8N}\right)$, then we have : 
\begin{align*}
\text{if }\exists\, g\in\{\tan,\cot\}, 
\left\{\begin{aligned}
t_v(n_0)=g(\theta_v(n_0))\\
t_u(n_0)=g(\theta_u(n_0))
\end{aligned}\right., 
\text{ then }&\left|\mathcal{R}_{\hat{\omega}}\left(t_v\left(n_0\right)\right)\right|\leq e^{-l^\beta/4},\\
\text{otherwise, we have }  &\left|\mathcal{Q}_{\hat{\omega}}\left(t_v\left(n_0\right)\right)\right|\leq e^{-l^\beta/4}.
\end{align*}
\end{lem}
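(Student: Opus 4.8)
\textbf{Proof plan for Lemma~\ref{closetan}.}

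The plan is to translate the near-colinearity of the Pr\"ufer radii on a window of $15N$ consecutive sites into a single polynomial equation in the scalar unknown $t_v(n_0)$. The starting point is that both $u$ and $v$ satisfy second-order ODEs with the \emph{same} potential $V_\omega(n_0+\cdot)$ on each interval $(m-N,m+N)$, up to the exponentially small spectral shifts $E_j(\omega)-F$ and $E_k(\omega)-G$; so on each such interval $u$ and $v$ are, up to an $O(e^{-l^\beta})$ error, linear combinations of the $q$-orthonormal bases $(e^m_{1,F},e^m_{2,F})$ and $(e^m_{1,G},e^m_{2,G})$ with coefficient vectors $U(m)=\sqrt{\mathcal N}^{-1}(A_m,B_m)$ and $V(m)=\sqrt{\tilde{\mathcal N}}^{-1}(\tilde A_m,\tilde B_m)$. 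First I would record the transfer relations: because $u$ has a continuous derivative across the integer points, the pair $(A_{m+1},B_{m+1})$ is obtained from $(A_m,B_m)$ by an explicit $2\times2$ matrix $T^u_m$ whose entries are built out of the values and derivatives of $e^m_{1,F},e^m_{2,F}$ and $e^{m+1}_{1,F},e^{m+1}_{2,F}$ at the matching point; by Proposition~\ref{anabasis} these entries are analytic functions of $(\omega_j)_{j}$ in a bounded window around $m$, and similarly for $v$ with matrices $T^v_m$ depending on the $G$-bases. Iterating over a window of fixed length (a bounded number of steps, say from $n_0-7N$ to $n_0+7N$, which only involves $\omega_j$ for $j\in\llbracket n_0-8N,n_0+8N\rrbracket$) expresses all $U(m)$, $V(m)$ in the window as analytic functions of $\hat\omega$ times the ``seed'' vectors $U(n_0)$, $V(n_0)$.

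Next I would use the hypothesis $|r_u(m)-r_v(m)|\le e^{-l^\beta/2}$ on the window, together with $r_u(n_0)\ge e^{-l^\beta/4}$. Squaring, $\|U(m)\|^2=\|V(m)\|^2+O(e^{-l^\beta/4}r_u(n_0))$ for each $m$ in the window; but $\|U(m)\|^2$ and $\|V(m)\|^2$ are quadratic forms in $U(n_0)$ and $V(n_0)$ respectively with analytic-in-$\hat\omega$ coefficients (the Gram matrices of the transferred bases). After dividing through by $r_u(n_0)^2$ and introducing the scalar $t_v(n_0)$ (which is $\tan\theta_v(n_0)$ or $\cot\theta_v(n_0)$, so that the direction of $V(n_0)$ is encoded by a single parameter, and likewise for $U(n_0)$), each of these identities becomes a relation of the form $P_m(\hat\omega;t_u(n_0),t_v(n_0))=O(e^{-l^\beta/4})$, where $P_m$ is a polynomial of bounded degree. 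The mechanism is then elimination: with enough values of $m$ (this is why a window, not a single site, is needed) one can eliminate $t_u(n_0)$ — e.g. by taking an appropriate resultant, or by the more hands-on route of writing two of the relations as a linear system in $(1,t_u(n_0))$ or $(1,t_u(n_0),t_u(n_0)^2)$ and setting the relevant determinant to be $O(e^{-l^\beta/4})$ — to obtain a single polynomial $\mathcal R_{\hat\omega}(t_v(n_0))=O(e^{-l^\beta/4})$ whose coefficients $f_i(\hat\omega)$ are analytic in $\hat\omega$ and of degree $8$. The case split in the statement (whether $t_u,t_v$ are computed via the same branch $\tan$ or $\cot$) is exactly the two ways the linear-combination relating $U(n_0)$ and $V(n_0)$ can be parametrized; in the ``mixed'' case one reflects $X\mapsto 1/X$, which is why $\mathcal Q_{\hat\omega}$ is the reciprocal polynomial $\sum f_{8-i}(\hat\omega)X^i$. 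Finally I would check that the $f_i$ are not all identically zero: specializing to the periodic case (all $\omega_j$ equal), or to any configuration where the $F$- and $G$-dynamics are genuinely different because $F\ne G$, the elimination does not collapse — this uses that the Gram/transfer matrices built from the $F$-bases and from the $G$-bases are not simultaneously proportional, which ultimately rests on $F\neq G$ and the covering condition.

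The main obstacle I expect is the elimination step and the non-degeneracy claim: one must carry out the resultant/determinant computation carefully enough to see that (i) the surviving polynomial in $t_v(n_0)$ really has degree exactly $8$ and bounded coefficients, uniformly in $l$ and in the window position $n_0$, and (ii) the coefficient functions $f_i$ are not all the zero function. Point (i) is bookkeeping — tracking how the $O(e^{-l^\beta/4})$ errors propagate through finitely many polynomial operations, which is fine since the number of operations is fixed and the random variables are bounded by $M$. Point (ii) is the genuinely model-dependent input and the place where the hypothesis $F\ne G$ (and hence the distinctness of the two families of ODE bases) must be exploited; this is the analogue, in the continuum setting, of the elementary observation $u(n)=\pm v(n)$ that was available in the discrete Anderson case, and it is precisely the step that \cite{K11} did not need to address.
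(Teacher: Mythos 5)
Your plan follows the paper's proof almost step for step in its mechanical part: the same transfer matrices $T_F^{\pm}=(M^F)^{-1}N^F$ relating $U(\pm 2N)$ to $U(0)$, the same observation that $\bigl(r_u(\pm 2N)/r_u(0)\bigr)^2$ is a quadratic in $t_u(0)$ with coefficients quadratic in $t_v(0)$, the same elimination of $t_u(0)$ by the resultant of two quadratics (a $4\times 4$ Sylvester determinant, whence degree $8$ in $t_v(0)$), and the same reciprocal polynomial $\mathcal{Q}$ for the mixed $\tan/\cot$ branches. The error bookkeeping you describe is exactly what the paper does.

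The genuine gap is the non-vanishing of the coefficients $(f_i)$, which you assert rather than prove. Saying that ``the $F$- and $G$-transfer matrices are not simultaneously proportional'' is not sufficient: the resultant is a specific algebraic combination of the entries of $T_F^{\pm}$ and $T_G^{\pm}$, and it can vanish identically for reasons subtler than proportionality. The paper's argument (Lemma~\ref{resnotzero}) specializes to the constant configuration $\omega_{-8N}=\dots=\omega_{8N}$ with $\omega_0\eta>F>G$, but then needs three further ingredients you do not have: (i) the freedom to replace the $q$-orthonormal bases by ones satisfying the symmetry relation \eqref{sym}, so that by Lemma~\ref{symres} the leading coefficient of the resultant factors as $4\Pi_{+,i}^2\Delta_{1,i}\Delta_{2,i}$; (ii) the fact that a \emph{single} transfer distance $2N$ may not suffice --- one must build three resultants $\mathcal{R}_1,\mathcal{R}_2,\mathcal{R}_3$ from transfer over $2N,4N,6N$ and show at least one is nonzero, which is the real reason the window has width $7N$ (your explanation, that the window is needed ``to have enough relations to eliminate $t_u$'', is off: two relations at $\pm 2N$ already suffice for the elimination); and (iii) the non-oscillation results for Sturm--Liouville solutions at large coupling (Lemmas~\ref{onezero} and~\ref{threezero} of Appendix~\ref{sturm}), which control the signs and zero counts of $\mathcal{F}_{1,x},\mathcal{F}_{2,x}$ and of differences $\mathcal{F}_{y,F}-\mathcal{F}_{x,G}$ and thereby rule out $\Pi_{+,i}=0$ or $\Delta_{1,i}\Delta_{2,i}=0$ for all three $i$ simultaneously. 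This is the heart of the lemma and of the whole continuum adaptation, so the proposal as written does not yet constitute a proof.
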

\begin{proof}
We will prove the result under the assumption $t_u(n_0)=\tan\theta_u(n_0)$ and $t_v(n_0)=\tan \theta_v(n_0)$, i.e when
\begin{equation}\label{assumptan}
\max\left(|\tan \theta_u(n_0)|,|\tan \theta_v(n_0)|\right)\leq 1.
\end{equation}  
There are minor modifications in the other cases. As the random variables are i.i.d, it suffices to show the result with $n_0=0$, which will be supposed from now on. We then consider the ODE
\begin{equation}
\forall x\in(-7N,7N),y''(x)+V_\omega(x) y(x) = F y(x),
\end{equation}
which depends only on $(\omega_{-8N},\dots,\omega_{8N})$. Suppose $|r_u(m)-r_v(m)|\leq e^{-l^\beta/2}$ for $m\in\llbracket -7N,7N\rrbracket$ and $r_u(0)\geq e^{-l^\beta/4}$. We show that $t_v(0)$ is almost a root of a polynomial depending only on $(\omega_{-8N},\dots,\omega_{8N})$. 

In the following lines $\varepsilon$ will denote a vector such that $\|\varepsilon\|\leq Ce^{-l^\beta/2}$, its value may change from a line to another. As $u$ and $v$ have continuous derivatives, 
\begin{equation}
M^F U(2N)=N^F U(0)+\varepsilon
\end{equation}
where $M^F:=\begin{pmatrix}
e_{1,F}^0(-N) & e_{2,F}^0(-N)\\
(e_{1,F}^0)'(-N) & (e_{2,F}^0)'(-N)
\end{pmatrix}$ and 
$N^F:=\begin{pmatrix}
e_{1,F}^0(N) & e_{2,F}^0(N)\\
(e_{1,F}^0)'(N) & (e_{2,F}^0)'(N)
\end{pmatrix}$. Thus, if we define $T_F^+:= (M^F)^{-1}N^F$ we have
\begin{equation}\label{defT}
U(2N)=T_F^+\,U(0) +\varepsilon\text{  and  } V(2N)=T_F^+\,V(0)+\varepsilon.
\end{equation}
Indeed, the matrix $(M_F)^{-1}$ depends only on $(\omega_{-N},\dots,\omega_N)\in[-M,M]^{2N+1}$ and is therefore uniformly bounded by a constant $C>0$ (depending only on $\|q\|_\infty$, $M$ and $N$).

As $t_u(0)=\tan\theta_u(0)$, we compute  
\begin{align*}
\left(\dfrac{r_u(2N)}{r_u(0)}\right)^2=&\left\|T_F^+\begin{pmatrix}
\sin \theta_u(n)\\
\cos \theta_u(n)
\end{pmatrix}\right\|^2+\epsilon \\
=&\dfrac{1}{1+t_u(n)^2}\left\|T_F^+\begin{pmatrix}
t_u(n) \\
1
\end{pmatrix}\right\|^2+\epsilon,
\end{align*}
for some $|\epsilon|\leq Ce^{-l^\beta/4}$.
In the case $t_u(0)=\dfrac{1}{\tan\theta_u(0)}$, we compute
\begin{align*}
\left(\dfrac{r_u(2N)}{r_u(0)}\right)^2=&\left\|T_F^+\begin{pmatrix}
\sin \theta_u(n)\\
\cos \theta_u(n)
\end{pmatrix}\right\|^2+\epsilon \\
=&\dfrac{1}{1+t_u(n)^2}\left\|T_F^+\begin{pmatrix}
1 \\
t_u(n)
\end{pmatrix}\right\|^2+\epsilon.
\end{align*}

The eigenvector $v$ satisfies the same equation if we replace $F$ by $G$. Therefore, the equation
\begin{displaymath}
\left|\left(\dfrac{r_u(2N)}{r_u(0)}\right)^2-\left(\dfrac{r_v(2N)}{r_v(0)}\right)^2 \right| \leq e^{-l^\beta/4}
\end{displaymath}
 can be rewritten
 \begin{equation}
\left|\dfrac{1}{1+t_u(0)^2}\left\|T_F^+\begin{pmatrix}
t_u(n) \\
1
\end{pmatrix}\right\|^2  \\
- \dfrac{1}{1+t_v(0)^2}\left\|T_G^+\begin{pmatrix}
t_v(n) \\
1
\end{pmatrix}\right\|^2\right|\leq Ce^{-l^\beta/4}.
\end{equation} 
Thus, there exists $\epsilon_1$ such that $|\epsilon_1|\leq Ce^{\-l^\beta/4}$ and such that
\begin{equation}\label{eqabove}
\dfrac{1}{1+t_u(0)^2}\left\|T_F^+\begin{pmatrix}
t_u(n) \\
1
\end{pmatrix}\right\|^2 \\
=\dfrac{1}{1+t_v(0)^2}\left\|T_G^+\begin{pmatrix}
t_v(n) \\
1
\end{pmatrix}\right\|^2+\epsilon_1.
\end{equation}
Now, consider the equation $U(-2N)=T_F^-U(n)+\varepsilon$ for the matrix $T_F^-$ constructed in the same way as $T_F^+$. Using the same calculations as to prove \eqref{eqabove} we obtain the existence of $\eta_1$ with $|\eta_1|\leq Ce^{\-l^\beta/4}$ such that 
\begin{equation}\label{eqbelow}
\dfrac{1}{1+t_u(0)^2}\left\|T_F^-\begin{pmatrix}
t_u(0) \\
1
\end{pmatrix}\right\|^2 \\
=\dfrac{1}{1+t_v(0)^2}\left\|T_G^-\begin{pmatrix}
t_v(0) \\
1
\end{pmatrix}\right\|^2+\eta_1.
\end{equation}
Define the polynomials of degree $2$
\begin{displaymath}
\begin{aligned}
P_G(t)&:=\left\|T_G^+\begin{pmatrix}
t \\
1
\end{pmatrix}\right\|^2
\end{aligned}
\text{ and  }
\begin{aligned}
Q_G^n&:=\left\|T_G^-\begin{pmatrix}
t \\
1
\end{pmatrix}\right\|^2.
\end{aligned} 
\end{displaymath}
Using \eqref{assumptan}, the equations \eqref{eqabove} and \eqref{eqbelow} can be rewritten
\begin{multline}\label{eqplus}
R_1(t_u(0),t_v(0)):=t_u^2(0)\left[(1+t_v^2(0))\left\|T_F^+\begin{pmatrix}
1 \\
0
\end{pmatrix}\right\|^2-P_G(t_v(0))\right] \\
+2t_u(0)\left\langle T_F^+\begin{pmatrix}
1 \\
0
\end{pmatrix},T_F^+\begin{pmatrix}
0 \\
1
\end{pmatrix} \right\rangle(1+t_v^2(0)) \\
+\left[(1+t_v^2(0))\left\|T_F^+\begin{pmatrix}
0 \\
1
\end{pmatrix}\right\|^2-P_G(t_v(0))\right]=\epsilon_2
\end{multline}
and 
\begin{multline}\label{eqminus}
R_2(t_u(0),t_v(0)):=t_u^2(n)\left[(1+t_v^2(0))\left\|T_F^-\begin{pmatrix}
1 \\
0
\end{pmatrix}\right\|^2-P_G(t_v(0))\right] \\
+2t_u(0)\left\langle T_F^-\begin{pmatrix}
1 \\
0
\end{pmatrix},T_F^-\begin{pmatrix}
0 \\
1
\end{pmatrix} \right\rangle(1+t_v^2(0)) \\
+\left[(1+t_v^2(0))\left\|T_F^-\begin{pmatrix}
0 \\
1
\end{pmatrix}\right\|^2-P_G(t_v(0))\right]=\eta_2.
\end{multline}

Thus, $t_u(0)$ is a root of the two polynomials $t\to R_1(t,t_v(0))-\epsilon_2$ and $t\to R_2-\eta_2$. Therefore, the resultant of these polynomials must be zero. 
All the coefficients in $R_1$ and $R_2$ are bounded uniformly over $(\omega_m)_{m\in\llbracket -N,N\rrbracket}$. Thus the resultant $\mathcal{R}(t_v(0))$ of $R_1(\,\cdot\,,t_v(0))$ and $R_2(\,\cdot\,,t_v(0))$ is smaller than $e^{-l^\beta/4}$. 

If we have $t_u(0)=\tan\theta_u(0)$ but $t_v(0)=\cot\theta_v(0)$ instead of $t_v(0)=\tan\theta_v(0)$, the resultant $ \mathcal{Q}(t)$ obtained is equal to $t^8\mathcal{R}(1/t)$.  If we have $t_u(0)=\cot\theta_u(0)$ and $t_u(0)=\tan\theta_u(0)$ instead of $t_u(0)=\tan\theta_u(0)$ and $t_v(0)=\tan\theta_v(0)$, the resultant obtained is $\mathcal{R}$

Now, we can study the pair of fractions $\dfrac{r_u(2iN)}{r_u(0)}$ and $\dfrac{r_u(-2iN)}{r_u(0)}$ for $i\in\N$ (the construction above is then the case $n=1$) and construct the resultant $\mathcal{R}_i$ in the same way we constructed $\mathcal{R}$ but where the operators $T_j^\pm$ are replaced by $(T_{j,i}^\pm)$, using the continuity of the derivatives at points $\{(2i+1)N, i\in\Z\}$. Now, the resultants $(\mathcal{R}_i)_{i\in\{1,2,3\}}$ are analytic functions of the random variables $\left(\omega_{-8N},\omega_{-8N+1},\dots,\omega_{8N}\right)$. We will now prove that one of these resultants is not constantly the zero polynomial. This will be done under the assumption $\omega_{-8N}=\omega_{-8N+1}=\dots=\omega_{8N}$. Under this assumption we have 
\begin{displaymath}
\forall x\in(-7N,7N),~ V_\omega(x)=\sum_{n\in\Z} \omega_n q(x-n)=\omega_0 \sum_{n\in(x-N,X+N)\cap\Z}q(x-n).
\end{displaymath}
 Therefore, we come down to the study of the ODEs
\begin{equation}\label{ODE}
\forall x\in(-7N,7N),y''(x)=\left(\omega_0 \tilde{q}(x)-\bullet\right) y(x)
\end{equation}
where $\tilde{q}$ is one-periodic, $\tilde{q}>0$ on $(-1/2,1/2)$ and $\bullet\in\{F,G\}$. These equations show that in this case, for $i\in\{1,2,3\}$ and $\bullet\in\{F,G\}$, we have the relations (see \eqref{defT})
\begin{displaymath}
T_{\bullet,i}^\pm=(T_\bullet^\pm)^i\text{ and } T_\bullet^-=(T_\bullet^+)^{-1}.
\end{displaymath}

We now prove the
\begin{lem}\label{resnotzero}
There exists $i_0\in\{1,2,3\}$ such that for $\omega_{0}$ such that $\omega_{0} \eta > F>G$ ($\eta$ is defined in (H2)), $\mathcal{R}_{i_0}$ is not the zero polynomial.
\end{lem}

\begin{proof}
Fix $\omega_{0}$ such that $\omega_{0} \eta > F>G$. The fact that the resultant $\mathcal{R}_{i_0}$ is the zero polynomial is equivalent to the fact that, for all $t'\in\R$, the polynomials $R_1(\,\cdot\, ,t')$ and $R_2(\,\cdot\, ,t')$ have a common root. This is also equivalent to the fact that for all $w\neq0$ satisfying $\eqref{ODE}$ for $\bullet=G$, there exists a function $z:=z(w)\neq 0$ satisfying \eqref{ODE} for $\bullet=F$ such that 
\begin{displaymath}
\left|\left(\dfrac{r_z(2N)}{r_z(0)}\right)^2-\left(\dfrac{r_w(2N)}{r_w(0)}\right)^2 \right|+ \left|\left(\dfrac{r_z(-2N)}{r_z(0)}\right)^2-\left(\dfrac{r_w(-2N)}{r_w(0)}\right)^2 \right|=0.
\end{displaymath}
Now, we remark that $r_z$ and $r_w$ do not change if we change the $q$-orthonormal bases $(e_{1,x}^n,e_{2,x}^n)$ for $x\in\{F,G\}$ by other $q$-orthonormal bases of solutions of $(\mathcal{E}_x^n)$. Therefore, if for $x\in\{F,G\}$, $(f_{1,x}^n,f_{2,x}^n)$  are other $q$-orthonormal bases of solution of $(\mathcal{E}_x^n)$ and if $(\mathcal{P}_i)_i$ are the resultants constructed in the same way that the $(\mathcal{R}_i)_i$ but in the bases $(f_{1,x}^n,f_{2,x}^n)$, the fact that the resultant $\mathcal{R}_i$ is the zero polynomial is equivalent to the fact that $\mathcal{P}_i$ is the zero polynomial.

 Take $q$-orthonormal bases $(f_{1,F},f_{2,F})$ and $(f_{1,G},f_{2,G})$ such that for $x\in\{F,G\}$, if $\tilde{q}_s(t):=\tilde{q}(t)+\tilde{q}(1-t)$
\begin{equation}\label{sym}
\int_{-N}^N f_{1,x}'(t) f_{2,x}'(-t)-f_{1,x}(t) f_{2,x}(-t) (\omega_0\tilde{q}_s(t)-E_x)dt = 0.
\end{equation}
We can now construct the Prüfer variables $(r_u,\hat\theta_u)$, $(r_v,\hat\theta_v)$, the operators $\hat{T}_j^\pm$ and $\hat{T}_k^\pm$ and the resultants $(\hat{\mathcal{R}}_i)_i$ in the bases $(f_{1,F},f_{2,F})$ and $(f_{1,G},f_{2,G})$. Using \eqref{eqplus} and \eqref{eqabove} (with $\hat{T}_x^\pm$ instead of $T_x^\pm$), the resultants $\hat{\mathcal{R}}_1$ is a polynomial of degree at most equal to 8 which leading coefficient is equal to the determinant of the matrix 
 \begin{equation}
A_1:=\begin{pmatrix}
\Delta_1 & 0 & \Delta_3 & 0\\
\Pi_+ & \Delta_1 & \Pi_- & \Delta_3\\
\Delta_2 & \Pi_+ & \Delta_3 & \Pi_-\\
0 & \Delta_2 & 0 & \Delta_4
\end{pmatrix}
\end{equation}
where we have defined 
\begin{displaymath}
\left\{\begin{aligned}
\Delta_1&:=\|\hat{T}_F^+ (1,0)\|^2 -\|\hat{T}_G^+ (1,0)\|^2,\\
\Pi_+&:=\left\langle \hat{T}_F^+ (1,0),\hat{T}_F^+ (0,1)\right\rangle,\\
\Delta_2&:=\|\hat{T}_F^+ (0,1)\|^2 -\|\hat{T}_G^+ (1,0)\|^2,\\
\Delta_3&:=\|\hat{T}_F^- (1,0)\|^2 -\|\hat{T}_G^- (1,0)\|^2\\
\Pi_-&:=\left\langle \hat{T}_F^- (1,0),\hat{T}_F^- (0,1)\right\rangle,\\
\Delta_4&:=\|\hat{T}_F^- (0,1)\|^2 -\|\hat{T}_G^- (1,0)\|^2.
\end{aligned}
\right. 
\end{displaymath}

In the same way, let $A_i$ be the matrix which coefficients $(\Delta_{m,i})_{m\in\llbracket 1,4 \rrbracket}$ and $\Pi_{\pm,i}$ are the same as the coefficients of $A_1$ but with $(\hat{T}_x^\pm)^i$ instead of $\hat{T}_x^\pm$. The leading coefficient of $\hat{\mathcal{R}}_i$ is the determinant of $A_i$.
We now show that, using \eqref{sym}, the coefficient of the matrices $(A_i)_i$ satisfy a relation of symmetry. 
\begin{lem}\label{symres}
Fix $x\in\{F,G\}$. For $i\in\N$, let $\begin{pmatrix}
a_x^i & b_x^i\\
c_x^i & d_x^i
\end{pmatrix}:=(\hat{T}_x^+)^i$. Then, $(\hat{T}_x^-)^i=(\hat{T}_x^+)^{-i}=\begin{pmatrix}
a_x^i & -b_x^i\\
-c_x^i & d_x^i
\end{pmatrix}$. Thus, $\Delta_{1,i}=\Delta_{3,i}$, $\Delta_{2,i}=\Delta_{4,i}$, $\Pi_{+,i}=-\Pi_{-,i}$.
\end{lem}
\begin{proof}
The Lemma is proven by induction if we prove it for $i=1$. We compute 
\begin{equation}
\left\{\begin{aligned}
a_x^1=f_{2,x}'(-N)f_{1,x}(N)-f_{2,x}(-N)f_{1,x}'(N)\\
b_x^1=f_{2,x}'(-N)f_{2,x}(N)-f_{2,x}(-N)f_{2,x}'(N)\\
c_x^1=f_{1,x}'(N)f_{1,x}(-N)-f_{1,x}(1)f_{1,x}'(-N)\\
d_x^1=f_{2,x}'(N)f_{1,x}(-N)-f_{2,x}(1)f_{1,x}'(-N)
\end{aligned}\right. ~~.
\end{equation}
The matrix $\hat{T}_x^-=(\hat{T}_x^+)^{-1}$ has the same coefficients where the arguments $N$ and $-N$ are exchanged. Hence, it remains to prove
\begin{equation*}
f_{2,x}'(-N)f_{1,x}(N)-f_{2,x}(N)f_{1,x}'(N)=f_{2,x}'(N)f_{1,x}(-N)-f_{2,x}(N)f_{1,x}'(-N).
\end{equation*}
Define  $W(t)=\begin{vmatrix}
f_{1,x}(t) & f_{2,x}(-t) \\
f_{1,x}'(t) & f_{2,x}'(-t)
\end{vmatrix}$. We compute, 
\begin{displaymath}
W'(t)= 2f_{1,x}'(t) f_{2,x}'(-t)-f_{1,x}(t) f_{2,x}(-t) (\tilde{q}(t)+\tilde{q}(-t)-2x).
\end{displaymath}  
Lemma~\ref{symres} is now a consequence of \eqref{sym}.
\end{proof}
We now  continue the proof of Lemma~\ref{resnotzero}. Using Lemma~\ref{symres}, for $i\in\llbracket 1,3\rrbracket$, we obtain
\begin{equation}
\det A_i = 4 \Pi_{+,i}^2 \Delta_{1,i} \Delta_{2,i}.
\end{equation}
We will use the Lemma~\ref{onezero} and the Lemma~\ref{threezero}. The solutions of \eqref{ODE} will be extended to $\R$  so that they satisfy \eqref{ODEapp}. For $(x,y)\in\{1,2\}\times\{F,G\}$, let $\mathcal{F}_{x,y}$ denote the extension of $f_{x,y}$ to $\R$ satisfying \eqref{ODEapp}. Then the components of $(\hat{T}_x^+)^i(1,0)$ are the coordinates of the restriction of $\mathcal{F}_{1,x}$ to $[(2i-1)N,(2i+1)N]$. We know from Lemma~\ref{onezero} that $\mathcal{F}_{1,x}$ and that $\mathcal{F}_{2,x}$ have at most one zero. As they are q-orthogonal on $(-N,N)$, we can suppose that $\mathcal{F}_{2,x}$ vanishes in $(-N,N)$ and that $\mathcal{F}_{2,x}$ is positive on $(N,\infty)$ and negative on $(-\infty,-N)$. 

Now, if $\Pi_+=0 ~(=-\Pi_-)$, then $F_{1,x}$ must have one zero in $(N,3N)$ and one in $(-3N,0)$. This is in contradiction with Lemma~\ref{onezero}. Thus, $\Pi_+\neq 0$. Now suppose that $\mathcal{F}_{1,x}$ vanishes at some point, for instance in $[(2m-1)N,(2m+1)N]$. Then, $\mathcal{F}_{1,x}$ is positive in $((2m+1)N,\infty)$ and negative in $(-\infty,(2m-1)N)$. Using Lemma~\ref{symres}, this contradict the fact that
\begin{displaymath}
\left\langle (\hat{T}_F^+)^m (1,0),(\hat{T}_F^+)^m (0,1)\right\rangle=-\left\langle (T_F^-)^m (1,0),(T_F^-)^m (0,1)\right\rangle.
\end{displaymath} 
Therefore, $\mathcal{F}_{1,x}$ has no zero and we can assume that $\mathcal{F}_{1,x}$ is positive.

Now, for $f:\R\to\R $, define $P_i(f)$ as the restriction of $f$ to $[(2i-1)N,(2i+1)N]$. Then, as $\|\hat{T}_{x,i}^\pm (1,0)\|=\|P_{\pm i}(\mathcal{F}_{1,x})\|_q$ (a similar equation hold for $(0,1)$ instead of $(1,0)$), for $i\in\{1,2,3\}$ we have
\begin{equation}
\left\{\begin{aligned}
\Delta_{1,i}=0\Longleftrightarrow \left\langle P_i(\mathcal{F}_{1,F}+\mathcal{F}_{1,G}),P_i (\mathcal{F}_{1,F}-\mathcal{F}_{1,G})\right\rangle_q=0\\
\Delta_{2,i}=0\Longleftrightarrow \left\langle P_i (\mathcal{F}_{2,F}+\mathcal{F}_{1,G}),P_i (\mathcal{F}_{2,F}-\mathcal{F}_{1,G})\right\rangle_q=0
\end{aligned}\right. ~~.
\end{equation}
For $(x,y)\in\{1,2\}^2$, we know from Lemma~\ref{threezero} that the function $\mathcal{F}_{y,F}-\mathcal{F}_{x,G}$ changes of sign at most three times. Now, $\mathcal{F}_{x,F}+\mathcal{F}_{y,G}$ is positive on $[N,+\infty)$ and negative on $(-\infty,-N]$. Suppose that for $i\in\{1,2,3\}$ either 
\begin{displaymath}
\left\langle P_i (\mathcal{F}_{1,F}+\mathcal{F}_{1,G}),P_i (\mathcal{F}_{1,F}-\mathcal{F}_{1,G})\right\rangle_q =0
\end{displaymath}
 or 
 \begin{displaymath}
\left\langle P_i (\mathcal{F}_{2,F}+\mathcal{F}_{1,G}),P_i (\mathcal{F}_{2,F}-\mathcal{F}_{1,G})\right\rangle_q =0.
\end{displaymath} 
Then, either $\mathcal{F}_{1,F}-\mathcal{F}_{1,G}$ or $\mathcal{F}_{2,F}-\mathcal{F}_{1,G}$ must vanish twice in $(N,7N)$. Now, for $x\in\{F,G\}$, $\|\hat{T}_{x,i}^+ X \|=\|\hat{T}_{x,i}^- X\|$. Therefore, it must also vanish twice in $(-7N,-N)$. This is in contradiction with Lemma~\ref{threezero}. This completes the proof of Lemma~\ref{resnotzero}.
\end{proof}
 We can now finish the proof of Lemma~\ref{closetan}. There exist $(\omega_{-8N},\dots,\omega_{8N})$ such that the coefficients of $\mathcal{R}_{i_0}$ are not all equal to zero. Now, write \\
 $\mathcal{R}_{i_0}(X)=\sum_{i=0}^8 f_i(\omega_{-8N},\dots,\omega_{8N})X^i$ where the $(f_i)_i$ are analytic. Then, one of the functions $(f_i)_i$  must be not constantly equal to zero. Besides, by construction, we have $|\mathcal{R}_{i_0}(t_v(0))|\leq e^{-l^\beta/4}$. This completes the proof of Lemma~\ref{closetan}.
\end{proof}
 
We now continue the proof of Lemma~\ref{probcoli}. Fix $u$ (respectively $v$) a 1-normalized eigenfunction of $H_\omega$ associated to $E_j(\omega)\in\left[F-e^{-l^\beta},F+e^{-l^\beta}\right]$ (respectively associated to $E_k(\omega)\in[G-e^{-l^\beta},G+e^{-l^\beta}]$) and suppose that $\|r_u-r_v\|_1\leq e^{-l^\beta/2}$. Let $\mathcal{R}_\omega(X):=\sum_{i=0}^8 f_i\left(\omega_{-8N},\cdots,\omega_{8N}\right) X^i$ be the polynomial given in Lemma~\ref{resnotzero} and $i_1\in\llbracket 0,8\rrbracket$ be the largest index such that $f_i$ is not constantly equal to zero. Then, using Theorem~\ref{Loja} and the fact that the random variables are bounded by $M$, we obtain the 
\begin{prop}\label{defA}
Let $\hat{x}:=(x_{-8N},\cdots,x_{8N})$ and
\begin{equation*}
\mathcal{A}:=\left\{\hat{x}\in[-M,M]^{16N+1}, |f_{i_1}(x_{-8N},\cdots,x_{8N})|\leq e^{-l^\beta/8}\right\}.
\end{equation*}
There exists $c\in(0,1)$ such that, for $l$ large enough, we have
\begin{equation*}
\mathbb{P}\left( (\omega_{-8N},\cdots,\omega_{8N})\in\mathcal{A}\right)\leq e^{-cl^\beta}.
\end{equation*}
\end{prop}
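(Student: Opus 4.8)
The plan is to deduce the estimate from the quantitative sublevel-set bound for real-analytic functions (Theorem~\ref{Loja}), together with the boundedness of the common density $\mu$ of the random variables.

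First I would record that $f_{i_1}$ is a real-analytic function of $\hat{x}\in\R^{16N+1}$ which, by the very choice of $i_1$, is not identically zero; moreover $f_{i_1}$ (and the index $i_1$) is fixed independently of $l$, having been produced in Lemma~\ref{closetan} and Lemma~\ref{resnotzero} from $q$ and $N$ alone. Applying Theorem~\ref{Loja} to the restriction of $f_{i_1}$ to the compact cube $[-M,M]^{16N+1}$ then yields constants $C_0>0$ and $\alpha>0$, depending only on $q$, $N$ and $M$, such that for every $\epsilon>0$
\[
\bigl|\{\hat{x}\in[-M,M]^{16N+1}:\ |f_{i_1}(\hat{x})|\le\epsilon\}\bigr|\le C_0\,\epsilon^{\alpha},
\]
where $|\cdot|$ denotes Lebesgue measure on $\R^{16N+1}$. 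Taking $\epsilon=e^{-l^\beta/8}$ gives $|\mathcal{A}|\le C_0\,e^{-\alpha l^\beta/8}$.

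Next I would transfer this Lebesgue bound to a probability bound. Since the $(\omega_i)_i$ are i.i.d.\ with common density $\mu$ supported in $[-M,M]$, the vector $(\omega_{-8N},\dots,\omega_{8N})$ has law with density $\prod_{i=-8N}^{8N}\mu(x_i)$ on $\R^{16N+1}$, bounded by $\|\mu\|_\infty^{16N+1}$, so that
\[
\mathbb{P}\bigl((\omega_{-8N},\dots,\omega_{8N})\in\mathcal{A}\bigr)\le\|\mu\|_\infty^{16N+1}\,|\mathcal{A}|\le\|\mu\|_\infty^{16N+1}C_0\,e^{-\alpha l^\beta/8}.
\]
Finally, fixing any $c\in\bigl(0,\min(\alpha/8,1)\bigr)$, the constant $\|\mu\|_\infty^{16N+1}C_0$ is absorbed into $e^{(\alpha/8-c)l^\beta}$ once $l$ is large enough, which gives the claimed bound $\mathbb{P}\bigl((\omega_{-8N},\dots,\omega_{8N})\in\mathcal{A}\bigr)\le e^{-cl^\beta}$.

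The only substantive ingredient is the analytic sublevel-set estimate, i.e.\ Theorem~\ref{Loja}; everything else is a change of variables between Lebesgue measure and the law of the $\omega_i$'s using $\mu\in L^\infty$. The point that genuinely must be checked — and the main thing to be careful about — is the uniformity of $C_0$ and $\alpha$ in $l$, which is automatic here because $\mathcal{R}_\omega$ and $i_1$ were constructed from $q$ and $N$ only and do not depend on $l$. If Theorem~\ref{Loja} is stated in the ``distance to the zero set'' form rather than directly as a sublevel-set volume bound, one extra routine step is needed, namely combining the \L ojasiewicz exponent with the fact that the zero set of the nonzero analytic function $f_{i_1}$ is Lebesgue-null to obtain the displayed volume estimate.
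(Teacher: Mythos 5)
Your argument is correct and is exactly the route the paper takes (the paper in fact only asserts the proposition in one sentence, citing Theorem~\ref{Loja} and the boundedness of the random variables): apply the analytic sublevel-set bound of Theorem~\ref{Loja} to the fixed, $l$-independent nonzero analytic function $f_{i_1}$ on the cube $[-M,M]^{16N+1}$ with $\epsilon=e^{-l^\beta/8}$, then pass from Lebesgue measure to probability using $\|\mu\|_\infty<\infty$ and absorb the constants for $l$ large. Your attention to the uniformity in $l$ and to the restriction $\epsilon<\epsilon_0$ (satisfied for $l$ large) fills in precisely the details the paper leaves implicit.
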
 
On $\mathcal{A}^c$, let $(Z_i(\hat{x}))_{i\in\llbracket 1, i_1\rrbracket}\in\C^{i_1}$ be the roots of $\mathcal{R}_{\hat{x}}$, repeated according to their multiplicity,  and $Z_i=\infty$ for $i\in\llbracket i_1,8\rrbracket$. Then, the roots of $\mathcal{Q}_x$, defined in Lemma~\ref{resnotzero}, are the inverses of the $(Z_i)_i$ not equal to zero, with the convention $\infty^{-1}=0$. We now prove the 

\begin{prop}\label{roots}
Fix $\hat{x}\in\mathcal{A}^c$ and suppose $|\mathcal{R}_{\hat{x}}(t)|\leq e^{-l^\beta/4}$ for some \\
$t\in[-1,1]$. Then, there exists $i\in\llbracket 1,i_1\rrbracket$ such that
\begin{displaymath}
\left|t-Z_i(\hat{x})\right|\leq e^{-l^\beta/64}.
\end{displaymath}
\end{prop}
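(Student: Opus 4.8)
The plan is to reduce the statement to a quantitative, uniform version of the following elementary fact: if a degree-$i_1$ polynomial $\mathcal{R}_{\hat{x}}(X)=f_{i_1}(\hat{x})\prod_{i=1}^{i_1}(X-Z_i(\hat{x}))$ takes a small value at a point $t$, then $t$ must lie near one of the roots $Z_i(\hat x)$, the quantitative bound depending only on a lower bound for $|f_{i_1}(\hat x)|$ and an upper bound for all the relevant quantities. Concretely: on $\mathcal{A}^c$ we have $|f_{i_1}(\hat x)|\geq e^{-l^\beta/8}$ by definition of $\mathcal A$, and since $\hat x\in[-M,M]^{16N+1}$ and the $f_i$ are continuous (hence bounded) on this compact set, all the coefficients $f_i(\hat x)$, as well as the moduli of the roots $Z_i(\hat x)$ with $i\le i_1$, are bounded by a constant $C$ depending only on $q$, $N$ and $M$. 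Factoring $\mathcal{R}_{\hat x}(t)=f_{i_1}(\hat x)\prod_{i=1}^{i_1}(t-Z_i(\hat x))$ and taking absolute values gives
\begin{equation}
\prod_{i=1}^{i_1}\left|t-Z_i(\hat x)\right|=\frac{|\mathcal{R}_{\hat x}(t)|}{|f_{i_1}(\hat x)|}\leq e^{-l^\beta/4}\cdot e^{l^\beta/8}=e^{-l^\beta/8}.
\end{equation}
Hence at least one factor satisfies $|t-Z_i(\hat x)|\leq e^{-l^\beta/(8 i_1)}$, and since $i_1\leq 8$ this yields $|t-Z_i(\hat x)|\leq e^{-l^\beta/64}$, which is the claimed bound.

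First I would record the uniform bounds: by Proposition~\ref{defA} (really by the definition of $\mathcal A$) we have $|f_{i_1}(\hat x)|> e^{-l^\beta/8}$ on $\mathcal A^c$; by compactness of $[-M,M]^{16N+1}$ and continuity of the analytic functions $f_i$ we get $\max_i\sup_{\hat x}|f_i(\hat x)|\le C$; and from this, classical root bounds (e.g. all roots of $\sum_i f_i X^i$ lie in $|X|\le 1+\max_{i<i_1}|f_i/f_{i_1}|\le 1+Ce^{l^\beta/8}$) show the finite roots are at worst exponentially large in $l$ — but this crude bound is not even needed for the argument above, since we only used the factorization and a lower bound on the leading coefficient. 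Then I would carry out the factorization and the pigeonhole step displayed above. A small point to check is that $i_1\ge 1$, i.e. that $\mathcal{R}_{\hat x}$ is genuinely nonconstant for $\hat x\in\mathcal A^c$: this is exactly the content of Lemma~\ref{resnotzero} together with the choice of $i_1$ as the largest index with $f_{i_1}\not\equiv 0$, and on $\mathcal A^c$ we moreover have $f_{i_1}(\hat x)\ne 0$, so the degree is exactly $i_1$ and the factorization into $i_1$ roots is valid.

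I do not expect a serious obstacle here; the statement is really just Cauchy's bound / an explicit continuity estimate for roots as functions of coefficients, made effective. The only place requiring a little care is bookkeeping of the exponents: one must make sure the constant in $|f_{i_1}|\ge e^{-l^\beta/8}$ (from the definition of $\mathcal A$), the factor $1/i_1$ with $i_1\le 8$, and the input hypothesis $|\mathcal{R}_{\hat x}(t)|\le e^{-l^\beta/4}$ combine to give an exponent no smaller (in absolute value) than $l^\beta/64$, which they do with room to spare ($\tfrac14-\tfrac18=\tfrac18$ and $\tfrac18\cdot\tfrac18=\tfrac1{64}$). No use is made of $t\in[-1,1]$ beyond the fact that $t$ is real and bounded, which keeps $t$ in the region where the above estimates are uniform; I would keep that hypothesis in the statement since it is what is supplied by the application (recall $|t_v(n_0)|\le 1$ in the regime of Lemma~\ref{closetan}).
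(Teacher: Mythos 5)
Your proposal is correct and is essentially identical to the paper's proof: the paper also writes $|\mathcal{R}_{\hat{x}}(t)|=|f_{i_1}(\hat{x})|\prod_{i=1}^{i_1}|t-Z_i(\hat{x})|\leq e^{-l^\beta/4}$, uses $|f_{i_1}(\hat{x})|\geq e^{-l^\beta/8}$ from the definition of $\mathcal{A}$, and concludes by the same pigeonhole on the $i_1\leq 8$ factors. The extra remarks you add (uniform bounds on the coefficients, Cauchy's root bound, the role of $t\in[-1,1]$) are not needed and are not used by the paper either.
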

\begin{proof}
Fix $\hat{x}\in\mathcal{A}^c$ and write 
\begin{displaymath}
|\mathcal{R}(t)|=|f_{i_1}(\hat{x})|\prod_{i=1}^{i_1}\big|t-Z_i(\hat{x})\big|\leq e^{-l^\beta/4}.
\end{displaymath}
As $|f_i(\hat{x})|\geq e^{-l^\beta/8}$ and $i_1\leq 8$, one of the term in the product must satisfy
\begin{displaymath}
\left|t-Z_i(\hat{x})\right|\leq e^{-l^\beta/64}.
\end{displaymath}
This concludes the proof of Proposition~\ref{roots}.
\end{proof}
Note that, for all $t\in\R$ and $z\in\C$ we have $|t-\Re(z)|\leq |t-z|$. Now, suppose for instance that $r_v(0)$, $r_v(-8N-1)$ and $r_v(8N+1)$ are all greater than $e^{-l^\beta/4}$ and suppose $\Omega_-:=(\omega_{-16N-1},\cdots,\omega_{-1})\in\mathcal{A}^c$ and $\Omega_+:=(\omega_1,\cdots,\omega_{16N+1})\in\mathcal{A}^c$. Then, Lemma~\ref{closetan} shows that 
\begin{displaymath}
\left|\mathcal{\bullet}_{\Omega_-}(t_v(-8N-1))\right|+\left|\mathcal{\lozenge}_{\Omega_+}(t_v(8N+1))\right|\leq e^{-l^\beta/4} 
\end{displaymath}
 for some $(\bullet,\lozenge)\in\{\mathcal{R},\mathcal{Q}\}^2$, depending on whether $t_\square=\tan(\theta_\square)$ or $t_\square=\cot(\theta_\square)$ for $\square\in\{u,v\}$.  As $\Omega_-$ and $\Omega_+$ belong to $\mathcal{A}^c$, there exists $(i_-,i_+)\in\llbracket 1,8 \rrbracket^2$ such that
\begin{equation*}
\left|t_v(-8N-1)-\Re\left(Z_{i_-}^*(\Omega_-)\right)\right|+\left|t_v(8N+1)-\Re\left(Z_{i_+}^\sharp(\Omega_+)\right)\right|\leq e^{-l^\beta/64},
\end{equation*}
for some $(*,\sharp)\in\{1,-1\}^2$, depending on whether $t_\square=\tan(\theta_\square)$ or $t_\square=\cot(\theta_\square)$ for $\square\in\{u,v\}$. In the rest of the section, we will use the same notation $\theta$ for the class of $\theta$ in $\R/\pi\Z$. We endow $\mathbb{T}:=\R/\pi\Z$ with the usual distance, which will be noted $d$, obtained from the absolute value on $\R$. 

If we define $\Theta_{i,1}(\Omega_\pm):=\arctan\left(\Re\left[Z_i(\Omega_\pm)\right]\right)$, and \\
$\Theta_{i,-1}(\Omega_\pm):=\cot^{-1}\left(\Re\left[Z_i^{-1}(\Omega_\pm)\right]\right)$, we have
\begin{equation}\label{defTheta}
d\big(\theta_v(-8N-1),\Theta_{i_-,\star}(\Omega_-)\big)+d\big(\theta_v(8N+1),\Theta_{i_+,\sharp}(\Omega_+)\big)\leq Ce^{-l^\beta/64}.
\end{equation}
 Now, as $\theta_v(-8N-1)$ represents the direction of the orthgonal projection of $v$ in the space of solutions of $(\mathcal{E}_G^{-8N-1})$, we deduce from \eqref{defTheta} that the direction of the vector $(v(-8N-1),v'(-8N-1))$ is almost fixed. In this context, take the Prüfer variables $(R_v(.),\psi_v(.))\in\R_+\times\R)$ such that
\begin{displaymath}
\begin{pmatrix}
v(t)\\
v'(t)
\end{pmatrix}
=R_u(t)
\begin{pmatrix}
\sin(\psi_v(t))\\
\cos(\psi_v(t))
\end{pmatrix}
\end{displaymath}
and define the matrix $\mathcal{A}_n:=\begin{pmatrix}
e^n_{1,G}(0) & e^n_{2,G}(0)\\
(e^n_{1,G})'(0) & (e^n_{2,G})'(0)
\end{pmatrix}$. We will also use the bi-Lipschitz homeomorphism $\Upsilon:\theta\in\mathbb{T}\to S^1/\{1,-1\}$, defined by $\theta\to(\sin\theta,\cos\theta)$, where we endowed $S^1/\{1,-1\}$ with the distance obtained from the euclidean distance on $\R^2$. We now prove the
\begin{prop}\label{transpruf}
Suppose $r_v(n)\geq e^{-l^\beta/4}$ and suppose there exists $\Theta\in\mathbb{T}$ such that $d\big(\theta_v(n);\Theta\big)\leq e^{-l^\beta/64}$. Then, there exists $C>1$ (only depending on $\|q\|_\infty$, $N$ and $M$) such that $R_u(n)\geq Cr_v(n)$ and there exists $\Psi\in\mathbb{T}$ (only depending on $\Theta$) such that $d(\psi_v(n);\Psi)\leq Ce^{-l^\beta/32}$.
\end{prop}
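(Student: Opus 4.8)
The plan is to notice that, at the single point $n$, the ``basis'' Prüfer pair $(r_v(n),\theta_v(n))$ attached to the solution space of $(\mathcal{E}^n_G)$ and the ``standard'' Prüfer pair $(R_v(n),\psi_v(n))$ of $v$ differ only by the fixed linear map $\mathcal{A}_n$, and then to control $\mathcal{A}_n$ uniformly in $\omega$. Concretely, since the remainder $\epsilon_v^n$ is obtained by matching the Cauchy data of $v$ at the centre ($\epsilon_v^n(0)=(\epsilon_v^n)'(0)=0$), the expansion of $v$ in the $q$-orthonormal basis is \emph{exact at the point $n$}:
\begin{equation*}
\begin{pmatrix} v(n)\\ v'(n)\end{pmatrix}
=\mathcal{A}_n\begin{pmatrix}\tilde A_n\\ \tilde B_n\end{pmatrix}
=\sqrt{\tilde{\mathcal{N}}}\;r_v(n)\;\mathcal{A}_n\begin{pmatrix}\sin\theta_v(n)\\ \cos\theta_v(n)\end{pmatrix},\qquad \tilde{\mathcal{N}}:=\|\nabla_\omega E_k\|_1,
\end{equation*}
because $(\tilde A_n,\tilde B_n)=\sqrt{\tilde{\mathcal{N}}}(\tilde C_n,\tilde D_n)$ and $(\tilde C_n,\tilde D_n)=r_v(n)(\sin\theta_v(n),\cos\theta_v(n))$. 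Note that $r_v(n)\ge e^{-l^\beta/4}>0$ is precisely what makes $\theta_v(n)$ well defined. Taking Euclidean norms and then normalising yields
\begin{equation*}
R_v(n)=\sqrt{\tilde{\mathcal{N}}}\,r_v(n)\,\bigl\|\mathcal{A}_n(\sin\theta_v(n),\cos\theta_v(n))^{T}\bigr\|,\qquad
\begin{pmatrix}\sin\psi_v(n)\\ \cos\psi_v(n)\end{pmatrix}=\frac{\mathcal{A}_n(\sin\theta_v(n),\cos\theta_v(n))^{T}}{\bigl\|\mathcal{A}_n(\sin\theta_v(n),\cos\theta_v(n))^{T}\bigr\|}
\end{equation*}
in $S^1/\{1,-1\}$, so everything reduces to understanding the $2\times2$ matrix $\mathcal{A}_n$.

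Then I would bound $\mathcal{A}_n$ uniformly. By Proposition~\ref{anabasis} the entries of $\mathcal{A}_n$ are analytic, hence continuous, functions of $(\omega_j)_{|j-n|\le 2N}$ ranging over the compact box $[-M,M]^{4N+1}$, so $\|\mathcal{A}_n\|\le C$. Its determinant is the Wronskian of $e^n_{1,G},e^n_{2,G}$, which is continuous in the same parameters and never vanishes (two $q$-orthonormal functions are linearly independent), hence $|\det\mathcal{A}_n|\ge c>0$; consequently the least singular value of $\mathcal{A}_n$ is $\ge c/C$ and $\|\mathcal{A}_n^{-1}\|\le C/c$. Combined with $\tilde{\mathcal{N}}=\|\nabla_\omega E_k\|_1\in[1/C,C]$ — which follows from \eqref{dercont} and the covering condition (H2) — the identity for $R_v(n)$ gives $R_v(n)\asymp r_v(n)$, in particular $R_v(n)\ge c'\,r_v(n)\,(\ge c'e^{-l^\beta/4})$, all constants depending only on $\|q\|_\infty$, $N$, $M$.

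For the angle, I would use that the map $\Phi_n\colon\mathbb{T}\to\mathbb{T}$ sending $\theta$ to the direction of $\mathcal{A}_n(\sin\theta,\cos\theta)^{T}$ is the projective action of $\mathcal{A}_n$ on $\R/\pi\Z$, with derivative $\Phi_n'(\theta)=\det\mathcal{A}_n/\|\mathcal{A}_n(\sin\theta,\cos\theta)^{T}\|^2$. By the previous paragraph $\Phi_n'$ lies in a fixed interval $[c'',C'']\subset(0,\infty)$, so $\Phi_n$ — and, composing with the bi-Lipschitz homeomorphism $\Upsilon$, also $\theta\mapsto\psi$ — is a bi-Lipschitz self-homeomorphism of $\mathbb{T}$ with constant depending only on $\|q\|_\infty$, $N$, $M$. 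Taking $\Psi:=\Phi_n(\Theta)$ (a function of $\Theta$ and of $(\omega_j)_{|j-n|\le 2N}$ only), the hypothesis $d(\theta_v(n),\Theta)\le e^{-l^\beta/64}$ gives $d(\psi_v(n),\Psi)\le C\,d(\theta_v(n),\Theta)\le Ce^{-l^\beta/64}$, which yields the claimed estimate.

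The only step that is not mere bookkeeping is the uniform lower bound $|\det\mathcal{A}_n|\ge c>0$, i.e. that $\mathcal{A}_n$ has uniformly bounded condition number: this is what simultaneously keeps $R_v(n)$ comparable to $r_v(n)$ and upgrades the angular change of variable from continuous to bi-Lipschitz. It comes for free from the analyticity in Proposition~\ref{anabasis} plus compactness of the parameter box together with the linear independence of the $q$-orthonormal basis, so in the end there is no real analytic difficulty; the work is in setting up the exact identity at the centre and in chasing the constants.
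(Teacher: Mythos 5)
Your proposal is correct and follows essentially the same route as the paper: both rest on the exact identity $(v(n),v'(n))^{T}=\sqrt{\tilde{\mathcal{N}}}\,r_v(n)\,\mathcal{A}_n(\sin\theta_v(n),\cos\theta_v(n))^{T}$ coming from $\epsilon_v^n(0)=(\epsilon_v^n)'(0)=0$, followed by the uniform two-sided bound on $\mathcal{A}_n$ obtained from analyticity, compactness of the parameter box and the nonvanishing Wronskian. The only cosmetic difference is that you phrase the angular estimate via the bi-Lipschitz projective action of $\mathcal{A}_n$, whereas the paper directly compares the normalised vectors $\mathcal{A}_0X/\|\mathcal{A}_0X\|$ and $\mathcal{A}_0\tilde{X}/\|\mathcal{A}_0\tilde{X}\|$ and applies $\Upsilon^{-1}$; these are equivalent.
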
 
\begin{proof}
Without loss of generality, we can suppose that $n=0$. By definition, we have for $x\in(-N,N)$
\begin{displaymath}
v(x)=\tilde{A}_n e_{1,G}^n(x)+\tilde{B}_n e_{2,G}^n(x)+\epsilon_v^n(x)
\end{displaymath}
with $\epsilon_v^0(0)=(\epsilon_v^0)'(0)=0$. Therefore, we obtain

\begin{displaymath}
\begin{pmatrix}
v(0)\\
v'(0)
\end{pmatrix}
=r_v(0)\mathcal{A}_0
\begin{pmatrix}
\sin(\theta_v(0))\\
\cos(\theta_v(0))
\end{pmatrix}
\end{displaymath}
We know that $\mathcal{A}_0$ depends only on $(\omega_{-N},\cdots,\omega_{N})$ and that $\det(\mathcal{A}_0)\neq 0$. Hence, there exists $C>0$ only depending on $\|q\|_\infty$, $N$ and $M$, such that
\begin{displaymath}
\forall (\omega_{-N},\cdots,\omega_{N})\in[-M,M]^{2N+1},~ \det(\mathcal{A}_0)\geq \dfrac{1}{C}.
\end{displaymath}
Therefore, if $X:=\begin{pmatrix}
\sin(\theta_v(0))\\
\cos(\theta_v(0))
\end{pmatrix} $
 we have $ \dfrac{1}{C}\leq \|\mathcal{A}_0X\|\leq C$. As $r_v(0)\geq e^{-l^\beta/4}$, for $l$ large enough we have $R_v(0)\geq \dfrac{1}{C}r_v(0)$. 

Now, we know that
\begin{displaymath}
\begin{pmatrix}
\sin(\psi_v(0))\\
\cos(\psi_v(0))
\end{pmatrix}
=\dfrac{r_v(0)}{R_v(0)}\mathcal{A}_0
X
\end{displaymath}
Thus, if $Y:=\dfrac{r_v(0)}{R_v(0)}\mathcal{A}_0
X$, we have $\|Y\|=1$ and  
\begin{equation}\label{closeangle}
\psi_v(0)=\Upsilon^{-1}\left(Y\right)=\Upsilon^{-1}\left(\dfrac{A_0X}{\|A_0X\|}\right)
\end{equation}

By definition, $X\in S^1$ is a representative of $ \Upsilon(\theta_v(0))$ and one can take a representative $\tilde{X}\in S^1$ of $\Upsilon(\Theta)$ satisfying $\|X-\tilde{X}\|\leq e^{-l^\beta/64}$. Thus, $\|\mathcal{A}_0(X-\tilde{X})\|\leq Ce^{-l^\beta/32}$ and $ \left\| \dfrac{\mathcal{A}_0X}{\|\mathcal{A}_0X\|}-\dfrac{\mathcal{A}_0\tilde{X}}{\|\mathcal{A}_0\tilde{X}\|}\right\|\leq Ce^{-l^\beta/32}$. Hence, if $\Psi:=\Upsilon^{-1}\left(\dfrac{\mathcal{A}_0\tilde{X}}{\|A_0\tilde{X}\|}\right)$ we obtain
\begin{displaymath}
d\left(\psi_v(0),\Psi\right)\leq Ce^{-l^\beta/64}.
\end{displaymath}
This achieve the proof of Proposition~\ref{transpruf}.
 \end{proof} 
 
Using Proposition~\ref{transpruf} and \eqref{defTheta}, there exists $\Psi_{i_-,\star}(\Omega_-)$ and $\Psi_{i_+,\sharp}(\Omega_+)$ such that 
\begin{equation}\label{defPsi}
d\big(\psi_v(-8N-1),\Psi_{i_-,\star}(\Omega_-)\big)+d\big(\psi_v(8N+1),\Psi_{i_+,\sharp}(\Omega_+)\big)\leq Ce^{-l^\beta/64}.
\end{equation}

 We will use the
\begin{prop}\label{defitrans}
Let  $W$ be a bounded function on $(x_1,x_2)\subset\R$. Let $(h_1,h_2)$ be the solutions of the ODE: 
\begin{equation}\label{ODEtrans}
\forall x\in(x_1,x_2),\,-y''(x)+W(x)y(x)= 0\,,
\end{equation}
satisfying $h_1(x_1)=h_2'(x_1)=1$ and $h_1'(x_1)=h_2(x_1)=0$. For all $\theta\in[0,2\pi)$, define $y_\theta:=\sin(\theta) h_1+\cos(\theta) h_2$. Then, the operator
\begin{displaymath}
\begin{aligned}
\mathcal{T}_{x_1,x_2,W}~:~\mathbb{T}&\to\mathbb{T}\\
\Theta&\mapsto\Upsilon^{-1}\left[\dfrac{1}{\left[y_\Theta(x_2)\right]^2+\left[y_\Theta'(x_2)\right]^2}\begin{pmatrix}
y_\Theta(x_2)\\
y_\Theta'(x_2)
\end{pmatrix}
\right]
\end{aligned}
\end{displaymath}
 is well-defined and Lipchitz continuous with $Lipchitz$ constant only depending on $\|W\|_\infty$ and $|x_1-x_2|$. Furthermore, there exists $C>0$ (only depending on $\|W\|_\infty$ and $|x_1-x_2|$) such that, for all $0<\epsilon<1$ and $\tilde{W}$ such that $\|\tilde{W}-W\|_\infty\leq \epsilon$, we have
 \begin{displaymath}
 \sup_{\Theta\in\mathbb{T}} d\big(\mathcal{T}_{x_1,x_2,W}(\Theta),\mathcal{T}_{x_1,x_2,\tilde{W}}(\Theta)\big)\leq C\epsilon.
\end{displaymath}  
\end{prop}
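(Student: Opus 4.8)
The plan is to recognize $\mathcal{T}_{x_1,x_2,W}$ as the projectivization of the transfer matrix of \eqref{ODEtrans}, and then to reduce everything to elementary facts about the action of $SL_2(\R)$ on the circle. First I would write \eqref{ODEtrans} as the first order system $Y'=\begin{pmatrix}0&1\\ W&0\end{pmatrix}Y$ for $Y=(y,y')^{T}$, and let $M_W$ be the value at $x_2$ of the fundamental matrix of this system normalized to the identity at $x_1$; its two columns are $(h_1,h_1')^{T}$ and $(h_2,h_2')^{T}$, so that $(y_\Theta(x_2),y_\Theta'(x_2))^{T}=M_W(\sin\Theta,\cos\Theta)^{T}$ for every $\Theta$. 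Since the coefficient matrix has zero trace and $M_W$ equals the identity at $x_1$, Liouville's formula gives $\det M_W=1$, i.e.\ $M_W\in SL_2(\R)$; in particular $M_W(\sin\Theta,\cos\Theta)^{T}\neq 0$, so the normalization in the statement is legitimate, and replacing $\Theta$ by $\Theta+\pi$ only changes this vector by a sign, so the map descends to a well-defined map $\mathbb{T}\to S^{1}/\{1,-1\}\xrightarrow{\ \Upsilon^{-1}\ }\mathbb{T}$. Concretely $\mathcal{T}_{x_1,x_2,W}=\Upsilon^{-1}\circ\pi_{M_W}\circ\Upsilon$, where $\pi_{M}$ is the map induced on $S^{1}/\{1,-1\}$ by $X\mapsto MX/\|MX\|$ (only the direction of $M_W(\sin\Theta,\cos\Theta)^{T}$ matters, so the normalizing factor in the statement is irrelevant).

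Next I would bound $M_W$ by Gronwall's inequality applied to the integral form of the system: this gives $\|M_W\|\le\Lambda$ with $\Lambda=\Lambda(\|W\|_\infty,|x_1-x_2|)$ an explicit constant (e.g.\ $\exp((1+\|W\|_\infty)|x_1-x_2|)$). Because $\det M_W=1$, the smaller singular value of $M_W$ is at least $\Lambda^{-1}$, hence $\|M_WX\|\in[\Lambda^{-1},\Lambda]$ for every unit vector $X$. This is really the crux: it shows that $\pi_{M_W}$ is the composition of $X\mapsto M_WX$, which is Lipschitz with constant $\|M_W\|\le\Lambda$ and maps $S^{1}$ into the annulus $\{\Lambda^{-1}\le\|\cdot\|\le\Lambda\}$, followed by the normalization $y\mapsto y/\|y\|$, which is Lipschitz with constant $\Lambda$ on that annulus. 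Composing with the bi-Lipschitz maps $\Upsilon$ and $\Upsilon^{-1}$ shows that $\mathcal{T}_{x_1,x_2,W}$ is Lipschitz with a constant of the form $C\Lambda^{2}$, depending only on $\|W\|_\infty$ and $|x_1-x_2|$.

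For the stability statement, suppose $\|\tilde W-W\|_\infty\le\epsilon<1$; then $\|\tilde W\|_\infty\le\|W\|_\infty+1$, so $\|M_{\tilde W}\|$ and the lower bound $\|M_{\tilde W}X\|\ge\Lambda'^{-1}$ are controlled by $\Lambda'=\Lambda(\|W\|_\infty+1,|x_1-x_2|)$. Subtracting the integral equations satisfied by $M_W$ and $M_{\tilde W}$ and applying Gronwall's inequality once more gives $\|M_W-M_{\tilde W}\|\le C\epsilon$ with $C=C(\|W\|_\infty,|x_1-x_2|)$. Then for each unit $X$ we have $\|M_WX-M_{\tilde W}X\|\le C\epsilon$ while $\|M_WX\|,\|M_{\tilde W}X\|\ge\Lambda'^{-1}$, so the Lipschitz bound for normalization on the annulus gives $\|\pi_{M_W}(X)-\pi_{M_{\tilde W}}(X)\|\le C'\epsilon$; applying $\Upsilon^{-1}$ yields $d\big(\mathcal{T}_{x_1,x_2,W}(\Theta),\mathcal{T}_{x_1,x_2,\tilde W}(\Theta)\big)\le C''\epsilon$ uniformly in $\Theta$, which is the claim.

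I do not expect a genuine analytic obstacle here; the only thing that needs care is the bookkeeping of constants, namely checking that $\Lambda$, $C$, $C'$, $C''$ depend on nothing but $\|W\|_\infty$ and $|x_1-x_2|$, together with the one structural observation that it is the identity $\det M_W=1$ --- equivalently the constancy of the Wronskian of \eqref{ODEtrans} --- that keeps $\|M_WX\|$ bounded away from zero and hence makes the projectivized map Lipschitz. Everything else is Gronwall and the fact that the action of a fixed invertible $2\times2$ matrix on the projective line is smooth.
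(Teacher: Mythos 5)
Your proposal is correct and follows essentially the same route as the paper: the matrix you call $M_W$ is exactly the paper's $\mathcal{H}$, and both arguments rest on $\det\mathcal{H}=1$ together with the uniform bounds $\|\mathcal{H}\|,\|\mathcal{H}^{-1}\|\leq C$ to keep $\|Y_\Theta\|$ in an annulus where normalization is Lipschitz, finishing with the bi-Lipschitz property of $\Upsilon$. The only difference is cosmetic: you justify the bounds on $\mathcal{H}$ and on $\|\mathcal{H}-\tilde{\mathcal{H}}\|$ explicitly via Gronwall, where the paper simply asserts them.
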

\begin{rem}
We postpone the proof of Proposition~\ref{defitrans} to the end of the subsection and finish the proof of Lemma~\ref{probcoli}, but first we make a remark. The number $\mathcal{T}_{x_1,x_2,W}(\Theta)$ is the direction of the vector $(y(x_2),y'(x_2))$ where $y$ can be any non-zero solution of the ODE
\begin{displaymath}
\forall x\in(x_1,x_2),\,-y''(x)+W(x)y(x)= 0\, ,
\end{displaymath}
such that the direction of the vector $(y(x_1),y'(x_1))$ is equal to any representative of $\Theta$. 
\end{rem}
Now, as the potential $V_\omega$ depends only on $(\omega_{-9N-1},\cdots,\omega_{-1})$ (which are fixed since $\Omega_-$ is fixed) over the interval $(-8N-1,-N)$, if \\
$\Phi_{i_-,\star}(\Omega_-):=\mathcal{T}_{-8N-1,-N,V_\omega-G}\left(\Psi_{i_-,\star}(\Omega_-)\right)$ we have
\begin{equation}
d\big(\psi_v(-N),\Phi_{i_-,\star}(\Omega_-)\big)\leq Ce^{-l^\beta/64}.
\end{equation}
In the same way, there exists $\Phi_{i_+,\sharp}(\Omega_+)$ such that 
\begin{equation}
d\big(\psi_v(N),\Phi_{i_+,\sharp}(\Omega_+)\big)\leq Ce^{-l^\beta/64}.
\end{equation}
Now, by definition, we have $\psi_v(N)=\mathcal{T}_{-N,N,V_\omega-E_k(\omega)}\left(\psi_v(-N)\right)$. Hence, as $|G-E_k(\omega)|\leq e^{-l^\beta}$, we obtain from Proposition~\ref{defitrans} that
\begin{displaymath}
d\big(\psi_v(N),\mathcal{T}_{-N,N,V_\omega-G}\left(\psi_v(-N)\right)\big)\leq Ce^{-l^\beta/64}
\end{displaymath}
and eventually
\begin{equation}\label{relPhi}
d\big(\Phi_{i_+,\sharp}(\Omega_+),\mathcal{T}_{-N,N,V_\omega-G}\left(\Phi_{i_-,\star}(\Omega_-)\right)\big)\leq Ce^{-l^\beta/64}.
\end{equation}

Now, as $\Omega_-$ and $\Omega_+$ are fixed, we can rewrite on the random potential $V_\omega$ on $(-N,N)$ as 
\begin{displaymath}
\forall x\in(-N,N),~V_\omega(x):=W(x)+\omega_0 q(x)
\end{displaymath}
where $W$ is a deterministic function. Thus, \eqref{relPhi} roughly says that the image by the random function $\mathcal{T}_{\omega_0}:=\mathcal{T}_{-N,N,W+\omega_0 q}$ of the fixed direction $\Phi_{i_-,\star}(\Omega_-)$ is almost fixed. This is a condition on the $32N+3$ random variables $(\Omega_-,\omega_0,\Omega_+)$ and the following lemma shows that this condition happens with exponentially small probability.
\begin{lem}\label{transtan}
Let $(\Theta_1,\Theta_2)\in \mathbb{T}^2$ and $W$ be a deterministic, bounded function. Define $\hat{\mathbb{P}}_\epsilon$ as the probability that $d\big(\mathcal{T}_{\omega_0}(\Theta_1),\Theta_2\big)\leq \epsilon$. Then, there exist $\epsilon_0>0$ and $M\in\N$ such that for all $0<\epsilon<\epsilon_0$, we have 
\begin{displaymath}
\hat{\mathbb{P}}_\epsilon\leq \left(\dfrac{\epsilon}{\epsilon_0}\right)^{1/M}.
\end{displaymath}
\end{lem}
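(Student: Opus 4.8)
\textbf{Proof plan for Lemma~\ref{transtan}.}
The plan is to show that the map $\omega_0\mapsto \mathcal{T}_{\omega_0}(\Theta_1)$ cannot be constant on any interval, and then to upgrade this non-degeneracy to the desired quantitative bound via a \L ojasiewicz-type inequality (Theorem~\ref{Loja}), exactly as was done for the resultant $\mathcal{R}_{i_0}$ in Proposition~\ref{defA}. Concretely, first I would lift the problem to $\R^2$: fix a representative $\binom{\sin\Theta_1}{\cos\Theta_1}$ of $\Theta_1$ and, following the remark after Proposition~\ref{defitrans}, let $y_{\omega_0}$ be the solution of $-y''+(W+\omega_0 q)y=0$ on $(-N,N)$ with $(y_{\omega_0}(-N),y_{\omega_0}'(-N))$ proportional to this representative. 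Then $\mathcal{T}_{\omega_0}(\Theta_1)=\Upsilon^{-1}\big((y_{\omega_0}(N),y_{\omega_0}'(N))/\|(y_{\omega_0}(N),y_{\omega_0}'(N))\|\big)$, and $(y_{\omega_0}(N),y_{\omega_0}'(N))$ depends analytically on $\omega_0\in[-M,M]$ (standard analytic dependence of solutions of a linear ODE on a parameter, since $q$ is bounded). Hence the scalar quantity
\begin{displaymath}
G(\omega_0):=y_{\omega_0}'(N)\cos\Theta_2-y_{\omega_0}(N)\sin\Theta_2
\end{displaymath}
(i.e.\ the ``cross product'' of $(y_{\omega_0}(N),y_{\omega_0}'(N))$ with a fixed representative of $\Theta_2$) is real-analytic in $\omega_0$, and $d(\mathcal{T}_{\omega_0}(\Theta_1),\Theta_2)\le\epsilon$ forces $|G(\omega_0)|\le C\epsilon\,\|(y_{\omega_0}(N),y_{\omega_0}'(N))\|$, which (by a lower bound $\|(y_{\omega_0}(N),y_{\omega_0}'(N))\|\ge 1/C$ coming from the Wronskian being constant, nonzero, and the coefficients being bounded) gives $|G(\omega_0)|\le C'\epsilon$.

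Second, I would prove that $G$ is not the zero function. This is the crux. If $G\equiv 0$ on $[-M,M]$, then for every $\omega_0$ the solution $y_{\omega_0}$ would satisfy $(y_{\omega_0}(N),y_{\omega_0}'(N))\parallel \binom{\sin\Theta_2}{\cos\Theta_2}$, i.e.\ the transfer matrix of $-y''+(W+\omega_0q)y=0$ across $(-N,N)$ would send one fixed line to another fixed line for all $\omega_0$ simultaneously. Differentiating in $\omega_0$ and using the variation-of-constants formula, $\partial_{\omega_0}(y_{\omega_0}(N),y_{\omega_0}'(N))$ is expressed through $\int_{-N}^N q(t)\,y_{\omega_0}(t)^2\,(\dots)\,dt$-type integrals; the covering condition $q\ge\eta\mathbf 1_{[-1/2,1/2]}$ guarantees these integrals are strictly positive/nonzero, so the image line genuinely moves as $\omega_0$ varies, contradicting $G\equiv 0$. (Alternatively, and perhaps more cleanly, one can argue by a Sturm/oscillation count: as $\omega_0$ ranges over a large interval the Pr\"ufer angle $\psi_{\omega_0}(N)$ of $y_{\omega_0}$ is strictly monotone in $\omega_0$ by the standard Sturm comparison argument used in Appendix~\ref{sturm}, again because $q$ is positive on a set of positive measure, so $\mathcal{T}_{\omega_0}(\Theta_1)$ cannot be pinned to the single value $\Theta_2$ on an interval.)

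Third, with $G$ real-analytic and $G\not\equiv 0$ on the compact interval $[-M,M]$, I would apply the \L ojasiewicz inequality of Theorem~\ref{Loja}: there exist $c>0$, $\epsilon_0>0$ and $M\in\N$ (abusing the letter $M$ as in the statement) such that
\begin{displaymath}
\big|\{\omega_0\in[-M,M]:\ |G(\omega_0)|\le \epsilon\}\big|\le c\,\epsilon^{1/M}\qquad(0<\epsilon<\epsilon_0).
\end{displaymath}
Since the density $\mu$ of $\omega_0$ is bounded, $\hat{\mathbb P}_\epsilon\le \|\mu\|_\infty\,\big|\{|G|\le C'\epsilon\}\big|\le c'\epsilon^{1/M}\le(\epsilon/\epsilon_0)^{1/M}$ after adjusting constants, which is the claimed estimate.

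The main obstacle is the second step — ruling out $G\equiv 0$. The subtlety is that this must hold for \emph{every} pair $(\Theta_1,\Theta_2)$ and \emph{every} admissible deterministic $W$, so the argument has to use only the structural feature available to us, namely that $q$ is strictly positive on a set of positive measure (the left-hand covering inequality in (H2)); everything else about $W$ is arbitrary. The monotonicity-of-the-Pr\"ufer-angle route via Sturm comparison (already developed in Appendix~\ref{sturm}) is the safest way to make this rigorous, since it reduces the non-vanishing of $G$ to the strict monotonicity $\partial_{\omega_0}\psi_{\omega_0}(N)\neq 0$, which is precisely a Sturm-type statement. The remaining steps (analytic dependence on $\omega_0$, the Wronskian lower bound, and the \L ojasiewicz conversion) are routine given the tools already assembled in the paper.
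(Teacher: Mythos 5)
Your proposal is correct and follows essentially the same route as the paper: reduce to showing that $(y_{\omega_0}(N),y'_{\omega_0}(N))$ is analytic in $\omega_0$ with non-constant direction, establish the non-constancy via the strict monotonicity of the Pr\"ufer angle $\phi(N)$ in $\omega_0$ (Sturm--Liouville theory, using that $q$ is bounded below on an interval), and conclude with the \L{}ojasiewicz-type estimate of the appendix. Your explicit cross-product function $G(\omega_0)$ and the Wronskian lower bound merely make concrete the reduction the paper states implicitly, and your preferred second route for ruling out $G\equiv 0$ is exactly the paper's argument.
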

\begin{proof}
Using Theorem~\ref{Loja2}, it suffices to show that if $y_{\Theta_1}$ is a non zero solution as in Proposition~\ref{defitrans} (with $x_1=-N$ and $x_2=N$), then the non-zero vector $(y_{\Theta_1}(N),y'_{\Theta_1}(N))$ is analytic in $\omega_0$ and its direction is not constant. As $q\geq0$ and is bounded below by a positive real number on an interval of positive length, this is proven by Sturm-Liouville theory (see \cite[Section 4.5]{Z05}). Indeed, there exist Prüfer variables $(R(t),\phi(t))$ such that $\begin{pmatrix}
y_{\Theta_1}(t)\\
y_{\Theta_1}'(t)
\end{pmatrix}=
R(t)\begin{pmatrix}
\sin(\phi(t))\\
\cos(\phi(t))
\end{pmatrix}$, such that $\phi(-N)$ is a representative of $\Theta_1$ and such that $\phi(N)$ is a strictly increasing function of $\omega_0$. As $y_{\Theta_1}(N)$ and $y_{\Theta_1}'(N)$ are analytic in $\omega_0$, this completes the proof of Lemma~\ref{transtan}.
\end{proof}
We can now finish the proof of Lemma~\ref{probcoli}. First, as $u$ is normalized and $u$ is at most exponentially increasing, there exists $n_0\in\Z$ such that
\begin{equation}
\forall n\in\llbracket n_0-l^\beta,n_0+l^\beta\rrbracket, r_u(n)\geq e^{-l^\beta/4}. 
\end{equation}
Note that $n_0$ is random but there are only $l$ choices for $n_0$. Therefore
\begin{multline*}
\mathbb{P}\left(\forall n\in (-l,l)\cap\Z ,\left|r_u(n)-r_v(n)\right|\leq e^{-l^\beta}\right)\\
\leq l\cdot\sup_{n_0\in(-l,l)\cap\Z}\mathbb{P}\left(\forall n\in(n_0-l^\beta,n_0+l^\beta)\cap\Z,
\begin{aligned}
&\left|r_u(n)-r_v(n)\right|\leq e^{-l^\beta/2}\\
&\text{ and } r_u(n)\geq e^{-l^\beta/4}
\end{aligned}
\right).
\end{multline*}
Now, fix $n_0\in(-l,l)\cap\Z$. To simplify our notations, let us assume that $n_0=0$. Define $K:=\left\lfloor l^\beta/(32N+3)\right\rfloor$ and define
\begin{displaymath}
I:=\bigsqcup_{j=-K}^{K}\llbracket (2j-1)(16N+1),(2j+1)(16N+1)\rrbracket=:\bigsqcup_{j=-K}^{K} J_j\subset (-l^\beta,l^\beta).
\end{displaymath}
Thus, if $m_j:=2j(16N+1)$ we have $J_j=\llbracket m_j-16N-1,m_j+16N+1\rrbracket$.
Define the event $\mathcal{B}_j:=\left\{\forall n\in J_j,\begin{aligned}
&\left|r_u(n)-r_v(n)\right|\leq e^{-l^\beta/2}\\
&\text{ and } r_u(n)\geq e^{-l^\beta/4}
\end{aligned}\right\}$ and define
\begin{align*}
\Omega_j^-&:=\left(\omega_{m_j-16N-1},\cdots,\omega_{m_j-1}\right),\\
\Omega_j^+&:=\left(\omega_{m_j+1},\cdots,\omega_{m_j+16N+1}\right).
\end{align*}
Then, using Lemma~\ref{resnotzero}, 
\begin{displaymath}
\mathcal{B}_j\subset 
\left\{\exists (\bullet,\lozenge)\in\{\mathcal{R},\mathcal{Q}\}^2,
\begin{aligned}
\left|\bullet_{\Omega_j^-}\big[t_v(m_j-8N-1)\big]\right|\leq e^{-l^\beta/4}\\
\left|\lozenge_{\Omega_j^+}\big[t_v(m_j+8N+1)\big]\right|\leq e^{-l^\beta/4}
\end{aligned}
\right\}.
\end{displaymath}
Now, using \eqref{defA} and \eqref{defTheta}, we obtain that $\mathcal{B}_j$ is included in
\begin{displaymath}
\left\{
\begin{aligned}
&\hspace{3cm}\Omega_j^-\in\mathcal{A}\text{ or } \Big[\Omega_j^-\in\mathcal{A}^c\text{ and }\\
&\exists (i,*)\in\llbracket 1,8\rrbracket\times\{-1,1\}, d\big(\theta_v(m_j-8N-1),\Theta_{i,\star}(\Omega_j^-)\big)\leq Ce^{-l^\beta/64}~~\Big], \\
&\hspace{3cm}\Omega_j^+\in\mathcal{A}\text{ or } \Big[\Omega_j^+\in\mathcal{A}^c\text{ and }\\
&\exists (i',\sharp)\in\llbracket 1,8\rrbracket\times\{-1,1\}, d\big(\theta_v(m_j+8N+1),\Theta_{i,\sharp}(\Omega_j^+)\big)\leq e^{-l^\beta/64}~~\Big]
\end{aligned}
\right\}.
\end{displaymath}
Eventually, using the notations of \eqref{relPhi}, we have $\mathcal{B}_j\subset \mathcal{C}_j$ where
\begin{displaymath}
\mathcal{C}_j:= 
\left\{
\begin{aligned}
&\Omega_j^-\in\mathcal{A}\text{ or }\Omega_j^+\in\mathcal{A}\text{ or } \Big[ (\Omega_j^-,\Omega_j^+)\in(\mathcal{A}^c)^2\text{ and }\exists (i,i')\in\llbracket 1,8\rrbracket^2,\\ 
&\exists(*,\sharp)\in\{-1,1\}^2,\,
d\big(\mathcal{T}_{\omega_{m_j}}(\Phi_{i,\star}(\Omega_j^-)),\Phi_{i,\sharp}(\Omega_j^+)\big)\leq Ce^{-l^\beta/64}
\end{aligned}
\right\}~.
\end{displaymath}
To summary, we have proven that 
\begin{multline}
\mathbb{P}
\left(
\forall n\in \llbracket -16N+1,16N+1\rrbracket,\left|\dfrac{r_u(n)}{r_u(0)}-\dfrac{r_v(n)}{r_v(0)}\right|\leq e^{-l^\beta/4}\right)\\
\leq \mathbb{P}
\left(\bigcap_{j=-K}^K \mathcal{B}_j
\right)\leq \mathbb{P}
\left(\bigcap_{j=-K}^K \mathcal{C}_j
\right)
\leq \prod_{j=-K}^K \mathbb{P}\left(C_j\right)
\end{multline}
because the events $\left(\mathcal{C}_j\right)_j$ are independent. Now, using Proposition~\ref{defA} and \\Lemma~\ref{transtan}, we have
\begin{equation}
\mathbb{P}(\mathcal{C}_j)\leq Ce^{-\tilde{c} l^\beta}
\end{equation}
for some $\tilde{c}\in(0,1)$. Therefore, we have 
\begin{multline}\label{transtanbis}
\mathbb{P}\left(\forall n\in\llbracket -16N-2,16N+2 \rrbracket,\left|\dfrac{r_u(n)}{r_u(0)}-\dfrac{r_v(n)}{r_v(0)}\right|\leq e^{-l^\beta/4}\right) \\
\leq C\left(e^{-\tilde{c}l^\beta}\right)^{2K+1}\leq e^{-\tilde{c}l^{2\beta}}
\end{multline}
for $l$ large enough since $K\asymp l^\beta$. This concludes the proof of Lemma~\ref{probcoli} in the continuous case. We now prove Proposition~\ref{defitrans}.
\begin{proof}[Proof of Proposition~\ref{defitrans}]
Fix $\theta\in\R$ and take $y_\theta$ as in the statement. Then, 
\begin{displaymath}
Y_\theta:=
\begin{pmatrix}
y_\theta(x_2)\\
y_\theta'(x_2)
\end{pmatrix}
=\begin{pmatrix}
h_1(x_2) & h_2(x_2)\\
h_1'(x_2) & h_2'(x_2)
\end{pmatrix}
\begin{pmatrix}
\sin(\theta)\\
\cos(\theta)
\end{pmatrix}
= \mathcal{H}\begin{pmatrix}
\sin(\theta)\\
\cos(\theta)
\end{pmatrix}
\end{displaymath}
with $\det \mathcal{H}=1$. As $h_1$ and $h_2$ satisfy \eqref{ODEtrans}, there exists $C>0$ (depending only on $\|W\|$ and $|x_2-x_1|)$ such that $\max\{\|\mathcal{H}\|,\|\mathcal{H}^{-1}\|\}\leq C$. Thus, for all $\theta\in\R$ we have $\|Y_\theta\|\geq \dfrac{1}{C}$, hence $\Xi_\theta:=\dfrac{Y_\theta}{\|Y_\theta\|}\in S^1$. Now, $y_{\theta}=-y_{\theta+\pi}$, hence $\Xi_{\theta}=-\Xi_{\theta+\pi}$ and $\mathcal{T}_{x_1,x_2,W}$ is well defined. 

Now, since $\theta\in\R \mapsto Y_\theta\in\R^2 $ is Lipchitz continuous and $C\geq \|Y_\theta\|\geq \dfrac{1}{C}$ for all $\theta$, the function $\theta\in\R \mapsto \Xi_\theta\in S^1$ is Lipschitz continuous. Thus, the fact that $\mathcal{T}_{x_1,x_2,W} $ is Lipschitz continuous follows from the fact that $\Upsilon^{-1}$ is also Lipschitz continuous.   

Eventually, take $\tilde{W}$ such that $\|\tilde{W}-W\|_\infty\leq \epsilon<1$ and fix $\theta\in\R$. We will use the same notations as above for $\tilde{W}$, but with the symbol $\tilde{\cdot}$. For instance, $(h_1,h_2)$ are the fundamental solutions of \eqref{ODEtrans} for $W$, $(\tilde{h}_1,\tilde{h}_2)$ are the fundamental solutions for $\tilde{W}$ and we have 
\begin{displaymath}
\mathcal{H}=\begin{pmatrix}
h_1(x_2) & h_2(x_2)\\
h_1'(x_2) & h_2'(x_2)
\end{pmatrix}\text{  and  }\tilde{\mathcal{H}}=\begin{pmatrix}
\tilde{h}_1(x_2) & \tilde{h}_2(x_2)\\
\tilde{h}_1'(x_2) & \tilde{h}_2'(x_2)
\end{pmatrix}.
\end{displaymath}
Then, we know that $\|\mathcal{H}-\tilde{\mathcal{H}}\|\leq C\epsilon$ for some $C$ depending only on $\|W\|$ and $|x_2-x_1|$. Therefore, there exists $C>0$, such that for all $\theta\in\R$ we have
\begin{displaymath}
\|Y_\theta-\tilde{Y}_\theta\|\leq C\epsilon\text{ and } \dfrac{1}{C}\leq\min\{\|Y_\theta\|,\|\tilde{Y}_\theta\|\}\leq \max\{\|Y_\theta\|,\|\tilde{Y}_\theta\|\}\leq C
\end{displaymath}
Therefore, for all $\theta\in\R$ we have $\|\Xi_\theta-\tilde{\Xi}_\theta\|\leq C\epsilon$. We conclude using the fact that $\Upsilon^{-1}$ is Lipschitz continuous. This achieve the proof of Proposition~\ref{defitrans}.
\end{proof}
 
\subsection{Proof of Lemma~\ref{probcoli} for discrete models}\label{subsecdis}
In this section we prove Lemma~\ref{probcoli} for models defined in \eqref{discmod}. When $\Delta_b=\Delta$, we can write a proof that is the same as in Subsection~\ref{subseccont}, except for the obvious modification due to the discrete models and the usage of results in Appendix~\ref{FDEsec}. In this section, we prove Lemma~\ref{probcoli} for models defined in \eqref{discmod}. For discrete models the functions that were analytic in the continuum case are here polynomials. This makes the study  of the solutions when the parameters tend to infinity easier.

As in subsection~\ref{subseccont} we start by making some notation, there are discrete equivalents of the notation find in the beginning of subsection~\ref{subseccont}. Define the following inner product
on $\ell^2\left(\llbracket 0,N-1\rrbracket\right)$  : 
\begin{equation}\label{sem-inn2}
\langle f,g\rangle_a = \sum_{m=0}^{N-1} a_mf(m)g(m) 
\end{equation}
We denote $\|.\|_a$ the corresponding semi-norm. We say the functions $f$ and $g$ are $a$-orthogonal is $\langle f,g\rangle_a=0$. In particular, 1-orthogonality is the usual orthogonality in $\ell^2\left(\llbracket 0,l\rrbracket\right)$.

Let $u$ and $v$ be 1-normalized eigenfunctions of $H_\omega(\Lambda_l)$ associated to the eigenvalues $E_j(\omega)\in[F-e^{-l^\beta},F+e^{-l^\beta}]$ and $E_k(\omega)\in[G-e^{-l^\beta},G+e^{-l^\beta}]$. These eigenvalues are almost surely simple and we compute : 
\begin{equation}\label{dercont2}
\partial_{\omega_n} E_j(\omega_n)= \left\langle \left(\partial_{\omega_n} H_\omega\right) u,u \right\rangle_1=\sum_{m=0}^{N-1} a_n u^2(m+Nn).
\end{equation}
Equation \eqref{dercont2} can therefore be rewritten : 
\begin{equation}
\partial_n E_j(\omega) = \|u_{|_{\llbracket Nn,Nn+k-1\rrbracket}}\|^2_a.
\end{equation}
In the rest of the subsection, $M$ will be fixed such that $\mathbb{P}(|\omega_0|>M)=0$ so that all the random variables $(\omega_i)_i$ are almost surely bounded by $M$.

On $\R^N$ we define the vectorial plan $\mathcal{P}^n_F$ defined by $\forall m\in\llbracket 1,N-2\rrbracket,$ 
\begin{equation}\label{plan}
b_{m+1}y(m+1)+b_{m}y(m-1)+(a_m \omega_n-F)y(m) = 0.
\end{equation}
In particular, when $N=2$, $\mathcal{P}^n_F=\R^2$ and the conditions below are trivial.
When $\omega_n$ tends to infinity the plan $\mathcal{P}^n_j$  get closer to the plan defined by the $a$-orthonormal basis $\left(e_1^\infty:=\dfrac{1}{\sqrt{a_{N-1}}}\delta_{N-1},e_2^\infty:=\dfrac{1}{\sqrt{a_0}}\delta_0\right)$. Thus, we can choose an $a$-orthonormal basis $(e_{1,F}^n,e_{2,F}^n)$ of the plan $(\mathcal{P}^n_F)$ such that 
\begin{equation}
\|e_{1,F}^n-e_1^\infty\|_\infty+\|e_{2,F}^n-e_2^\infty\|_\infty=o(1).
\end{equation}
The sequence $u$ satisfies the following equations for $m\in\llbracket Nn,Nn+N-1\rrbracket$ 
\begin{equation}
b_{m+1}u(m+1)+b_m u(m-1)+(a_n\omega_n-F)y(m)=(E_j(\omega)-F)y(m)
\end{equation}
with $|E_j(\omega)-F|\leq e^{-l^\beta}$ and $v$ satisfies a similar equation with $(E_k,G)$ instead of $(E_j,F)$. Therefore, there exist two couples $(A_n,B_n)\in\R^2$ and $(\tilde{A}_n,\tilde{B}_n)\in\R^2$ such that, for all $m\in\llbracket Nn,Nn+N-1\rrbracket$, 
\begin{equation}
\left\{\begin{aligned}
u(m):=A_n e_{1,F}^n(m-Nn)+B_n e_{2,F}^n(m-Nn)+\epsilon_u^n(m-Nn)\\
v(m)=\tilde{A}_n e_{1,G}^n(m-Nn)+\tilde{B}_n e_{2,G}^n(m-Nn)+\epsilon_v^n(m-Nn)
\end{aligned}\right. ,
\end{equation} 
with $\epsilon_\bullet^n(0)=\epsilon_\bullet^n(1)=0$. Note that $\|\epsilon_u^n\|_\infty+\|\epsilon_v^n\|_\infty\leq Ce^{-l^\beta}$ for some $C$ only depending on $\|a\|_\infty$, $M$ and $N$. 
Therefore, $\|u_{|_{\llbracket Nn,Nn+N-1\rrbracket}}\|_a^2=A_n^2+B_n^2+\varepsilon_u^n$ and $\|v_{|_{\llbracket kn,kn+k-1\rrbracket}}\|_a^2=\tilde{A}_n^2+\tilde{B}_n^2+\varepsilon_v^n$ with $|\varepsilon_u^n|+|\varepsilon_v^n|\leq Ce^{-l^\beta}$. Thus, 
\begin{equation}
\left\{\begin{aligned}
\mathcal{N}:=\|\nabla E_j\|_1=\sum_{n=1}^l (A_n^2+B_n^2)+\xi_u\\
\tilde{\mathcal{N}}:=\|\nabla E_k\|_1=\sum_{n=1}^l (\tilde{A}_n^2+\tilde{B}_n^2)+\xi_v
\end{aligned}\right. .
\end{equation}

Now, define : $\left\{
\begin{aligned}
C_n:=\dfrac{A_n}{\sqrt{\mathcal{N}}}~~,~~\tilde{C}_n:=\dfrac{\tilde{A}_n}{\sqrt{\tilde{\mathcal{N}}}}\\
D_n:=\dfrac{B_n}{\sqrt{\mathcal{N}}}~~,~~\tilde{D}_n:=\dfrac{\tilde{B}_n}{\sqrt{\tilde{\mathcal{N}}}}
\end{aligned}
\right. $. Then, we have
\begin{equation}
\sum_{n=1}^l C_n^2+D_n^2=\sum_{n=1}^l \tilde{C}_n^2+\tilde{D}_n^2+O(e^{-l^\beta})=1+O(e^{-l^\beta}).
\end{equation}

Now, define $U(n)=\begin{pmatrix}
C_n\\
D_n
\end{pmatrix}$ and $V(n)=\begin{pmatrix}
\tilde{C}_n\\
\tilde{D}_n
\end{pmatrix}$, define the Prüfer variables $(r_u(n),\theta_u(n))\in\R_+^*\times[0,2\pi)$ such that
$U(n)=r_u(n)\begin{pmatrix}
\sin \theta_u(n)\\
\cos \theta_u(n)
\end{pmatrix}$ 	and define $t_u:=\text{sgn}(\tan \theta_u) \inf\left(|\tan \theta_u|,\dfrac{1}{|\tan \theta_u|}\right)$ and the same for $t_v$. Using these notations \eqref{gradcoli} can be rewritten
\begin{equation}
\|r_u-r_v\|_1\leq e^{-l^\beta/2}.
\end{equation}

Now for $n\in\N$ and $m\in\llbracket 0,N-1\rrbracket$ let 
\begin{equation}\label{defmatP}
P_{n,F,m}=\dfrac{1}{b_{Nn+m+1}}\begin{pmatrix}
F-a_m\omega_n & -b_{Nn+m}\\
b_{Nn+m+1} & 0
\end{pmatrix}
\end{equation}
 be the 1-step transfer matrix for \eqref{fullopdis} that goes from $\{Nn+m,Nn+m-1\}$ to $\{Nn+m+1,Nn+m\}$.

As in the previous subsection, in the following lines $\varepsilon$ will denote a vector such that $\|\varepsilon\|\leq Ce^{-l^\beta/2}$, its value may change from a line to another. For all $n\in\llbracket 1,\dots,l\rrbracket$ we have 
\begin{displaymath}
M_{n+1}^F U(n+1)=A_n^F N_n^F U(n)+\varepsilon
\end{displaymath}
where $M_n^F:=\begin{pmatrix}
e_{1,F}^n(1) & e_{2,F}^n(1)\\
e_{1,F}^n(0) & e_{2,F}^n(0)
\end{pmatrix}$, 
$N_n^j:=\begin{pmatrix}
e_{1,F}^n(N-1) & e_{2,F}^n(N-1)\\
e_{1,F}^n(N-2) & e_{2,F}^n(N-2)
\end{pmatrix}$ and
\\
$A_n^F:=P_{n+1,F,0}P_{n,F,N-1}$. \\
Thus, if we define $T_{n,F}^+:= (M_{n+1}^F)^{-1}A_n^FN_n^F$ we have
\begin{displaymath}
U(n+1)=T_{n,F}^+\,U(n)+\varepsilon \text{  and  } V(n+1)=T_{n,G}^+\,V(n)+\varepsilon.
\end{displaymath}
Now, as in Subsection~\ref{subseccont}, we prove the 
 \begin{lem}\label{closetan2}
There exist nine analytic functions $(f_i)_{i\in\llbracket 0,8\rrbracket}$ (only depending on $q$ and $N$) defined on $\R^3$ and not all constantly equal to zero such that, if $u$ (respectively $v$) is a 1-normalized eigenfunction of $H_\omega$ associated to $E_j(\omega)\in\left[F-e^{-l^\beta},F+e^{-l^\beta}\right]$ (respectively associated to $E_k(\omega)\in[G-e^{-l^\beta},G+e^{-l^\beta}]$), if for some $n_0\in\Z\cap(-l+8N,l-8N)$ we have $r_u(0)\geq e^{-l^\beta/4}$ and
\begin{displaymath}
\forall m\in\{n_0-1,n_0,n_0+1\}, \left|r_u(m)-r_v(m)\right|\leq e^{-l^\beta/2} 
\end{displaymath}
and if we define the polynomials 
\begin{align*}
\mathcal{R}_\omega(X)&:=\sum_{i=0}^8 f_i\left(\omega_{n_0-1},\omega_{n_0},\omega_{n_0+1}\right) X^i\\
\mathcal{Q}_\omega(X)&:=\sum_{i=0}^8 f_{8-i}\left(\omega_{n_0-1},\omega_{n_0},\omega_{n_0+1}\right) X^i
\end{align*}
then we have : 
\begin{align*}
\text{if }\exists\, g\in\{\tan,\cot\}, 
\left\{\begin{aligned}
t_v(n_0)=g(\theta_v(n_0))\\
t_u(n_0)=g(\theta_u(n_0))
\end{aligned}\right., 
\text{ then }&\left|\mathcal{R}_{\hat{\omega}}\left(t_v\left(n_0\right)\right)\right|\leq e^{-l^\beta/4},\\
\text{otherwise, we have }  &\left|\mathcal{Q}_{\hat{\omega}}\left(t_v\left(n_0\right)\right)\right|\leq e^{-l^\beta/4}.
\end{align*}
\end{lem}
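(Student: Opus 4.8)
# Proof Proposal for Lemma~\ref{closetan2}

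The plan is to mirror the argument already carried out for the continuous case in Lemma~\ref{closetan}, replacing analytic functions of infinitely many variables with genuine polynomials in the three variables $(\omega_{n_0-1},\omega_{n_0},\omega_{n_0+1})$, and replacing the ODE analysis of Appendix~\ref{sturm} by the finite-difference analysis of Appendix~\ref{FDEsec}. First I would reduce to $n_0=0$ by the i.i.d.\ hypothesis, and (as in the continuous proof) treat explicitly the case $t_u(0)=\tan\theta_u(0)$, $t_v(0)=\tan\theta_v(0)$, i.e.\ when $\max(|\tan\theta_u(0)|,|\tan\theta_v(0)|)\le 1$, the other three sign/branch combinations producing the companion polynomial $\mathcal{Q}$ (the reciprocal polynomial $X^8\mathcal{R}(1/X)$) by the same symmetry as before.

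The core computation: using the continuity of the discrete solution across the junctions $\{Nn\}$ and the transfer-matrix identities $M_{n+1}^F U(n+1)=A_n^F N_n^F U(n)+\varepsilon$, I extract the two matrices $T_{-1,F}^-$ and $T_{0,F}^+$ (the one-block transfer maps to the left and right of the site $n_0=0$), which depend polynomially on $\omega_{-1},\omega_0$ and on $\omega_0,\omega_1$ respectively — here the polynomial dependence comes directly from \eqref{defmatP}. Then, exactly as in the derivation of \eqref{eqabove} and \eqref{eqbelow}, the hypothesis $|r_u(m)-r_v(m)|\le e^{-l^\beta/2}$ for $m\in\{-1,0,1\}$ together with $r_u(0)\ge e^{-l^\beta/4}$ forces
\begin{displaymath}
\left|\frac{1}{1+t_u(0)^2}\left\|T_F^\pm\begin{pmatrix} t_u(0)\\ 1\end{pmatrix}\right\|^2 - \frac{1}{1+t_v(0)^2}\left\|T_G^\pm\begin{pmatrix} t_v(0)\\ 1\end{pmatrix}\right\|^2\right|\le Ce^{-l^\beta/4},
\end{displaymath}
which I rewrite as two quadratics $R_1(\,\cdot\,,t_v(0))-\epsilon_2$ and $R_2(\,\cdot\,,t_v(0))-\eta_2$ in the variable $t_u(0)$, with coefficients polynomial in $(\omega_{-1},\omega_0,\omega_1)$ and uniformly bounded since the $\omega_i$ are bounded by $M$. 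Their resultant $\mathcal{R}_\omega(t_v(0))$ — a polynomial of degree at most $8$ in $t_v(0)$ with polynomial coefficients $f_i(\omega_{-1},\omega_0,\omega_1)$ — must then satisfy $|\mathcal{R}_\omega(t_v(0))|\le Ce^{-l^\beta/4}$, which after absorbing constants is the claimed bound.

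It remains to show that not all $f_i$ are identically zero, which is the analogue of Lemma~\ref{resnotzero}. Here I would exploit the periodic specialization $\omega_{-1}=\omega_0=\omega_1=\omega$, so that $T_\bullet^-=(T_\bullet^+)^{-1}$ and the problem reduces to a one-parameter family of $N\times N$ finite-difference operators; I would then invoke the discrete Sturm-oscillation results of Appendix~\ref{FDEsec} (the discrete counterparts of Lemmas~\ref{onezero} and \ref{threezero}, valid because $(a_m)_{m}\in(\R_+^*)^N$ and $\inf|b_n|>0$) to rule out, for a suitable choice of $\omega$ with $\omega a_* > F > G$, the sign patterns that would make the leading coefficient of the resultant vanish. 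If one single block is not enough, as in the continuous case I would pass to two- or three-block transfer maps $(T_{\bullet}^+)^i$, $i\in\{1,2,3\}$, obtaining resultants $\mathcal{R}_i$ and arguing that at least one is a nonzero polynomial. I expect this nonvanishing step to be the main obstacle: everything else is the linear-algebra bookkeeping of the continuous proof transcribed to the discrete setting, but verifying that the discrete oscillation theory genuinely forbids the degenerate configurations (and handling the boundary multiplier $A_n^F=P_{n+1,F,0}P_{n,F,N-1}$ correctly) requires care, especially when $N=2$ where $\mathcal{P}^n_F=\R^2$ and the intermediate conditions degenerate.
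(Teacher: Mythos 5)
Your reduction to $n_0=0$, the transfer-matrix bookkeeping $M_{n+1}^F U(n+1)=A_n^F N_n^F U(n)+\varepsilon$, the two quadratics in $t_u(0)$ and their degree-$8$ resultant with coefficients polynomial in $(\omega_{n_0-1},\omega_{n_0},\omega_{n_0+1})$, and the reciprocal polynomial $\mathcal{Q}$ for the other branch combinations all match the paper's proof. The gap is in the non-vanishing step, which is precisely where the content of the lemma lies, and which you yourself flag as "the main obstacle" without carrying it out. The paper does \emph{not} prove non-degeneracy for the multimer model via the oscillation theory of Appendix~\ref{FDEsec}; that appendix is reserved for the discrete alloy-type case (equivalently the case $\Delta_b=\Delta$), and its lemmas are proved only for the standard discrete Laplacian under sign conditions of the form $\lambda<0$, $E_j-\lambda m>2$. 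For the model \eqref{fullopdis} the off-diagonal coefficients $b_n$ satisfy only $\inf|b_n|>0$ and may change sign, so the discrete Sturm comparison arguments you propose to invoke do not apply as stated, and no analogue of the condition $\omega\eta>F>G$ is available.

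What the paper does instead, after the specialization $\omega_{n-1}=\omega_n=\omega_{n+1}$, is exploit the fact that every entry of $T_{n,\bullet}^{\pm}$ is a genuine polynomial in $\omega_n$ (via \eqref{defmatP} and the choice of the basis $(e_{1,F}^n,e_{2,F}^n)$ converging to $\bigl(a_{N-1}^{-1/2}\delta_{N-1},\,a_0^{-1/2}\delta_0\bigr)$ as $\omega_n\to\infty$). One then computes the leading asymptotics of the six entries of the $4\times4$ matrix $A$: for instance $\Delta_1\sim \mathcal{B}_+^{-2}\mathcal{A}\mathcal{A}_0(F-G)\,\omega_n^{2N-1}$ and $\Delta_4\sim(\mathcal{A}/\mathcal{B}_-)^2\omega_n^{2N}$, while $\Delta_3=o(\omega_n^{2N-1})$, so that
\begin{displaymath}
\det A=(\Pi_+\Delta_3-\Pi_-\Delta_1)(\Pi_+\Delta_4-\Pi_-\Delta_2)+(\Delta_1\Delta_4-\Delta_2\Delta_3)^2\sim (F-G)^2\mathcal{A}_0^2\mathcal{A}^3(\mathcal{B}_+\mathcal{B}_-)^{-4}\omega_n^{8N-2},
\end{displaymath}
which is nonzero because $F\neq G$ and all $a_m>0$. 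This is exactly the simplification announced at the start of Subsection~\ref{subsecdis} ("the functions that were analytic in the continuum case are here polynomials"), and it sidesteps entirely the sign-pattern analysis you were planning. To complete your proof you would need either to supply this asymptotic computation or to extend the oscillation lemmas to Jacobi operators with sign-changing $b_n$, which is not done anywhere in the paper.
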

\begin{proof}
The proof follows the one of Lemma~\ref{closetan} and we will use its notation. As in the proof of Lemma~\ref{closetan}, it suffices to show that there exists $(\omega_{n-1},\omega_n,\omega_{n+1})$ such that the matrix
 \begin{equation}
A:=\begin{pmatrix}
\Delta_1 & 0 & \Delta_3 & 0\\
\Pi_+ & \Delta_1 & \Pi_- & \Delta_3\\
\Delta_2 & \Pi_+ & \Delta_4 & \Pi_-\\
0 & \Delta_2 & 0 & \Delta_4
\end{pmatrix}
\end{equation}
where we have defined 
\begin{displaymath}
\left\{\begin{aligned}
\Delta_1&:=\|T_{n,F}^+ (1,0)\|^2_a -\|T_{n,G}^+ (1,0)\|^2_a,\\
\Pi_+&:=\left\langle T_{n,F}^+ (1,0),T_{n,F}^+ (0,1)\right\rangle_a,\\
\Delta_2&:=\|T_{n,F}^+ (0,1)\|^2_a -\|T_{n,G}^+ (1,0)\|^2_q,\\
\Delta_3&:=\|T_{n,F}^- (1,0)\|^2_a -\|T_{n,G}^- (1,0)\|^2_q\\
\Pi_-&:=\left\langle T_{n,F}^- (1,0),T_{n,F}^- (0,1)\right\rangle_a,\\
\Delta_4&:=\|T_{n,F}^- (0,1)\|^2_a -\|T_{n,G}^- (1,0)\|^2_q.
\end{aligned}
\right. 
\end{displaymath}
has its determinant not equal to zero. This will be done under the assumption 
\begin{equation}\label{hyp}
\omega_{n+1}=\omega_{n}=\omega_{n-1} ,
\end{equation} 
which will be supposed from now on. We now compute an equivalent of the determinant when $\omega_n$ tends to infinity.

If $N\geq 3$ we have,
$\begin{pmatrix}
e_{1,F}^n(N-1)  & e_{2,F}^n(N-1)\\
e_{1,F}^n(N-2)  & e_{2,F}^n(N-2)
\end{pmatrix}=\begin{pmatrix}
\frac{1}{\sqrt{a_{N-1}}}  & 0  \\
0                 & 0
\end{pmatrix}+o(1)
$ and
$\begin{pmatrix}
e_{1,F}^n(1)  & e_{2,F}^n(1)\\
e_{1,F}^n(0)  & e_{2,F}^n(0)
\end{pmatrix}=\begin{pmatrix}
0 & 0   \\
0 & \frac{1}{\sqrt{a_0}}
\end{pmatrix}+o(1)$. If $N=2$ we have 
\begin{displaymath}
\begin{pmatrix}
e_{1,F}^n(N-1)  & e_{2,F}^n(N-1)\\
e_{1,F}^n(N-2)  & e_{2,F}^n(N-2)
\end{pmatrix}=
\begin{pmatrix}
e_{1,F}^n(1)  & e_{2,F}^n(1)\\
e_{1,F}^n(0)  & e_{2,F}^n(0)
\end{pmatrix}=
\begin{pmatrix}
\frac{1}{\sqrt{a_1}}  & 0  \\
0   & \frac{1}{\sqrt{a_0}}
\end{pmatrix}.
\end{displaymath}
Now, $\|T_{n,F}^+ (1,0)\|_a$ is the $a$-norm of the vector $y(.)$ satisfying \eqref{plan} and such that $\begin{pmatrix}
y(1)\\
y(0)
\end{pmatrix}
=A_n^F 
\begin{pmatrix}
e_{1,F}^n(N-1)  \\
e_{1,F}^n(N-2)  
\end{pmatrix}.$

In the following lines, we will keep the difference $(\omega_{n+1},\omega_{n},\omega_{n-1})$ in notations, although they are equal, for a better comprehension of all terms. When $\omega_{n}\to\infty$, we have
\begin{displaymath}
\begin{pmatrix}
y(1)\\
y(0)
\end{pmatrix}=\dfrac{1}{b_{N(n+1)+1}b_{N(n+1)}}\begin{pmatrix}
\dfrac{(F-a_{N-1}\omega_n)(F-a_{0}\omega_{n+1})}{\sqrt{a_{N-1}}}\\
\dfrac{b_{N(n+1)+1}(F-a_{N-1}\omega_n)}{\sqrt{a_{N-1}}}
\end{pmatrix}
+o(\omega_n).
\end{displaymath}
Hence, for all $m\in\llbracket1,N-2\rrbracket$ we have
\begin{displaymath}
\begin{pmatrix}
y(m+1)\\
y(m)
\end{pmatrix}=\left(\sqrt{a_{N-1}}\overset{m+1}{\underset{i=0}{\prod}} b_{N(n+1)+i}\right)^{-1}Y
+o(\omega_n^m)
\end{displaymath}
where
\begin{displaymath}
Y=\begin{pmatrix}
(F-a_{N-1}\omega_n)\overset{m}{\underset{i=0}{\prod}}\left(F-a_{i}\omega_{n+1}\right)\\
\\
b_{N(n+1)+m+1}(F-a_{N-1}\omega_n)\overset{m-1}{\underset{i=0}{\prod}}\left(F-a_{i}\omega_{n+1}\right)
\end{pmatrix}.
\end{displaymath}
Therefore, we obtain
\begin{displaymath}
\left\{\begin{aligned}
&\left\|T_{n,F}^+ (1,0)\right\|^2_a=o(\omega_n^{2N-3})\\
&~~+(F-a_{N-1}\omega_n)^2\overset{N-3}{\underset{i=0}{\prod}}\left(F-a_{i}\omega_{n+1}\right)^2\dfrac{(F-a_{N-2}\omega_{n+1})^2+b_{N(n+2)-1}^2}{\mathcal{B}_+^2}\\
&\left\|T_{n,F}^- (0,1)\right\|^2_a=o(\omega_n^{2N-3})\\
&~~+(F-a_{0}\omega_n)^2\overset{N-1}{\underset{i=2}{\prod}}\left(F-a_{i}\omega_{n-1}\right)^2\dfrac{(F-a_{1}\omega_{n+1})^2+b_{N(n-1)}^2}{\mathcal{B}_-^2}
\end{aligned}\right. 
\end{displaymath}
where $\mathcal{B}_+:=\overset{N-1}{\underset{i=0}{\prod}} b_{N(n+1)+i}$ and 
$\mathcal{B}_-:=\overset{N-1}{\underset{i=0}{\prod}} b_{N(n-1)+i}$.
 In the same way we obtain $\left\|T_{n,F}^- (1,0)\right\|^2_a=o(\omega_n^{2N-1})$.

 Now, define $\mathcal{A}:=\overset{N-1}{\underset{i=0}{\prod}} a_i$ and $\mathcal{A}_0:=\overset{N-1}{\underset{i=0}{\sum}} \underset{m\neq i}{\prod} a_m$. When $\omega_{n}\to\infty$, we compute
\begin{equation}
\left\{\begin{aligned}
\Delta_1&=\dfrac{1}{\mathcal{B}_+^2}\mathcal{A}\mathcal{A}_0(F-G) \omega_n^{2N-1}+o(\omega_n^{2N-1})\\
\Pi_+&=O\left(\omega_n^{2N-1}\right)\\
\Delta_2&=O\left(\omega_n^{2N}\right)\\
\Delta_3&=o\left(\omega_n^{2N-1}\right)\\
\Pi_-&=O\left(\omega_n^{2N-1}\right)\\
\Delta_4&=\left(\dfrac{\mathcal{A}}{\mathcal{B}_-}\right)^2\omega_n^{2N}+o(\omega_n^{2N-1})
\end{aligned}\right.~~~.
\end{equation}

Now, a simple calculation shows that
\begin{equation}
\det A=(\Pi_+\Delta_3-\Pi_-\Delta_1)(\Pi_+\Delta_4-\Pi_-\Delta_2)+\left(\Delta_1\Delta_4-\Delta_2\Delta_3\right)^2.
\end{equation}
Hence, when $\omega_n\to\infty$,
\begin{displaymath}
\det A \sim (\Delta_1\Delta_4)^2\sim (F-G)^2\mathcal{A}_0^2\mathcal{A}^3\left(\dfrac{1}{\mathcal{B}_+\mathcal{B}_-}\right)^4\omega_n^{8N-2}.
\end{displaymath}
Therefore, $\det A$ is not constantly zero. We conclude the proof of Lemma~\ref{closetan2} in the same way we conclude the proof of Lemma~\ref{closetan}, using Theorem~\ref{Loja}.
\end{proof}

Let $T_{n,G}^0:=\overset{N-1}{\underset{m=0}{\prod}} P_{n,G,m}$ be the $N$-step transfer matrix for \eqref{fullopdis} that goes from $\{Nn-1,Nn\}$ to $\{Nn+N,Nn+N+1\}$. The matrix $T_{n,G}^0$ is the product of all the matrices $P_{n,J,m}$ that depends on $\omega_n$. Now, define
\begin{displaymath}
\begin{aligned}
\mathcal{T}_{n,G}^0:\mathbb{T}&\to\mathbb{T}\\
\theta&\to \Upsilon^{-1}\left[\dfrac{T_{n,G}^0\begin{pmatrix}
\sin(\theta)\\
\cos(\theta)
\end{pmatrix}}{\left\|T_{n,G}^0\begin{pmatrix}
\sin(\theta)\\
\cos(\theta)
\end{pmatrix} \right\|}
\right]
\end{aligned}
\end{displaymath} 
where we recall that $\mathbb{T}:=\R/\pi\Z$ is endowed with the distance $d$ and $\Upsilon:\mathbb{T}\to S^1/{-1,1}$ is bi-Lipschitz. The number $\mathcal{T}_{n,G}^0(\theta)$ is the direction of $T_{n,G}^0X$ where $X\in\R^2$ is any non-zero vector of direction any representative of $\theta$.
  We now prove the
\begin{lem}\label{transtan2}
Fix $(\Theta_1,\Theta_2)\in \mathbb{T}^2$ and let $\mathbb{P}_\epsilon$ denote the probability that \\
$d\big(\mathcal{T}_{n,G}^0(\Theta_1),\Theta_2\big)\leq \epsilon$. Then, there exist $\epsilon_0>0$ and $M\in\N$ such that for all $0<\epsilon<\epsilon_0$, $\mathbb{P}_\epsilon\leq \left(\dfrac{\epsilon}{\epsilon_0}\right)^{1/M}$.
\end{lem}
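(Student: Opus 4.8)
The plan is to reduce the statement to the abstract Łojasiewicz-type estimate (Theorem~\ref{Loja2}, invoked in the continuous analogue Lemma~\ref{transtan}), exactly as in the proof of Lemma~\ref{transtan}. The single input one must verify is that the map $\omega_n\mapsto T_{n,G}^0 X$ is real-analytic — indeed polynomial — in $\omega_n$ for a fixed nonzero $X\in\R^2$ of direction a representative of $\Theta_1$, and that its direction in $\mathbb{T}$ is \emph{not} constant in $\omega_n$. Once both facts are in hand, Theorem~\ref{Loja2} applied to the (analytic, non-constant) function $\omega_n\mapsto d\big(\mathcal{T}_{n,G}^0(\Theta_1),\Theta_2\big)$ yields constants $\epsilon_0>0$ and $M\in\N$ with $\mathbb{P}_\epsilon\leq(\epsilon/\epsilon_0)^{1/M}$, using that $\omega_n$ has a bounded density and lies almost surely in $[-M,M]$.

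First I would record the polynomial dependence. By \eqref{defmatP}, each one-step transfer matrix $P_{n,G,m}$ has entries that are affine in $\omega_n$ (the only entry involving $\omega_n$ is $G-a_m\omega_n$, and $b_{Nn+m+1}^{-1}$ is a fixed nonzero scalar since $\inf_n|b_n|>0$); hence $T_{n,G}^0=\prod_{m=0}^{N-1}P_{n,G,m}$ is a $2\times2$ matrix whose entries are polynomials in $\omega_n$ of degree at most $N$. In particular $\omega_n\mapsto T_{n,G}^0 X$ is polynomial, and $\det T_{n,G}^0=\prod_{m=0}^{N-1}\frac{b_{Nn+m}}{b_{Nn+m+1}}\neq0$, so $T_{n,G}^0 X\neq 0$ for every $\omega_n$; thus $\mathcal{T}_{n,G}^0$ and the composite function are well defined, and $\Upsilon^{-1}$ being bi-Lipschitz makes the composite analytic in $\omega_n$ away from the (empty) zero set.

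Next I would establish non-constancy of the direction. This is the Sturm–Liouville / Prüfer argument already used in Lemma~\ref{transtan}, transported to the discrete setting: introduce discrete Prüfer variables for the solution $y$ of \eqref{plan} with prescribed initial direction $\Theta_1$ at site $Nn-1$, and observe that because all $a_m>0$, increasing $\omega_n$ strictly rotates the discrete Prüfer angle monotonically, so the terminal direction $\mathcal{T}_{n,G}^0(\Theta_1)$ genuinely varies with $\omega_n$ — alternatively, and more robustly against the discrete subtleties of Prüfer monotonicity, one may simply reuse the leading-order computation from the proof of Lemma~\ref{closetan2}: there it was shown that as $\omega_n\to\infty$ the relevant transfer data have a nonzero top-degree coefficient (e.g. $\det A\sim c\,\omega_n^{8N-2}$ with $c\neq0$), which forces the polynomial $\omega_n\mapsto T_{n,G}^0X$ to have a genuinely $\omega_n$-dependent direction. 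Either way the direction is a non-constant analytic function of $\omega_n$.

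The main obstacle is the non-constancy step: in the discrete setting the naive Prüfer-angle monotonicity can fail near the "turning" values of $\omega_n$ (this is precisely why $N=1$ is special and why the multimer structure with $N\geq2$ and all $a_m>0$ is used), so some care is needed to argue monotonicity over a full $N$-block, or to bypass it via the asymptotic-in-$\omega_n$ degree count inherited from Lemma~\ref{closetan2}. Once non-constancy is secured, the rest is a direct application of Theorem~\ref{Loja2} together with the boundedness of the density of $\omega_n$, and the proof of Lemma~\ref{transtan2} is complete.
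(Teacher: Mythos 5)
Your overall strategy is the same as the paper's: reduce to Proposition~\ref{Loja2} (using that the entries of $T_{n,G}^0$ are polynomials in $\omega_n$ and $\det T_{n,G}^0\neq 0$, so $\mathcal{T}_{n,G}^0(\Theta_1)$ is a well-defined analytic function of $\omega_n$), and then verify that the direction $\omega_n\mapsto\mathcal{T}_{n,G}^0(\Theta_1)$ is not constant. The preliminary polynomial-dependence step is fine. The problem is that the non-constancy step — which you yourself flag as the main obstacle — is not actually established by either of the two routes you offer. Route (a), discrete Pr\"ufer monotonicity, is asserted without proof and, as you concede, is exactly the point where the discrete setting is delicate; the paper does not use it. Route (b) is a non sequitur: the fact that $\det A\sim c\,\omega_n^{8N-2}$ with $c\neq0$ in the proof of Lemma~\ref{closetan2} concerns the resultant construction built from the matrices $T_{n,F}^\pm$, $T_{n,G}^\pm$ (which are conjugated by the basis-change matrices $M_n^\bullet$, $N_n^\bullet$ and involve \emph{both} energies $F$ and $G$), and no implication is given from that nonvanishing to the claim that the direction of $T_{n,G}^0X$ depends on $\omega_n$. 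These are different objects and the inference does not follow.

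For the record, the paper's actual argument is elementary and purely about the four polynomial entries $P_1,\dots,P_4$ of $T_{n,G}^0$: one computes $\deg P_1=N$, $\deg P_2=N-1$ with explicit nonzero leading coefficients, so as $\omega_n\to\infty$ the direction of $Z=T_{n,G}^0(\sin\theta_1,\cos\theta_1)^T$ approaches the axis $(1,0)$ for every $\theta_1$; then the identity $P_1P_4-P_2P_3=K_n$ (constant in $\omega_n$, since each $\det P_{n,G,m}$ is independent of $\omega_n$) together with the degree bounds $\deg P_3\leq N-1$, $\deg P_4\leq N-2$ forces $P_3$ and $P_4$ to be nonzero polynomials of distinct degrees, so the second component $\sin\theta_1\,P_3+\cos\theta_1\,P_4$ of $Z$ is not identically zero for any $\theta_1$. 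Hence the direction cannot be constantly equal to its limit $(1,0)$, and non-constancy follows. You would need to supply this (or an equivalent) argument to close the gap.
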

\begin{proof}
Using Proposition~\ref{Loja2}, as in the proof of Lemma~\ref{transtan}, it suffices to show that, for all $\theta_1$, the non zero vector $Z:=T_{n,G}^0\begin{pmatrix}
\sin(\theta_1)\\
\cos(\theta_1)
\end{pmatrix}$, which is analytic in $\omega_n$, do not have fixed direction.

We know from \eqref{defmatP}  that
\begin{displaymath}
T_{n,G}^0=\begin{pmatrix}
P_1(\omega_n) & P_2(\omega_n)\\
P_3(\omega_n) & P_4(\omega_n)
\end{pmatrix}
\end{displaymath}
 where $(P_i)_i$ are polynomials of $\omega_n$. In the limit $\omega_n\to\infty$, we compute
\begin{equation}\label{simPol}
\left\{\begin{aligned}
 P_1(\omega_n)&\sim \omega_n^{N}\prod_{m=0}^{N-1} \dfrac{(-a_m)}{b_{nN+m+1}}\\
 P_3(\omega_n)&=O(\omega_n^{N-1})
\end{aligned}\right.
\end{equation}
Now, we know from \eqref{defmatP} that $P_{n,G,1}\begin{pmatrix}
0\\
1
\end{pmatrix}=-b_{Nn+1}\begin{pmatrix}
1\\
0
\end{pmatrix}$. We then compute, in the same manner as \eqref{simPol}, when $\omega_n\to\infty$, 
\begin{equation}\label{simPol2}
\left\{\begin{aligned}
 P_2(\omega_n)&\sim -\omega_n^{N-1}b_{Nn+1}\overset{N-1}{\underset{m=1}{\prod}} \dfrac{(-a_m)}{b_{nN+m+1}}\\
 P_4(\omega_n)&=O(\omega_n^{N-2})
\end{aligned}\right.
\end{equation}
Besides, as $\det P_{n,G,m}=-\dfrac{b_{nN+m}}{b_{nN+m+1}}$, there exists $K_n\in\R^*$ such that, for all $\omega_n$, we have $\det T_{n,G}^0=K_n$. Thus, we obtain the equation 
\begin{displaymath}
P_1\cdot P_4 - P_2\cdot P_3=K_n.
\end{displaymath}
with $\deg(P_1)=N=\deg(P_2)+1\geq 2$, $\deg(P_3)\leq N-1$ and $\deg(P_4)\leq N-2$. This shows in particular that $\deg(P_3)>\deg(P_4)$, hence $P_3$ is not the zero polynomial. Therefore, $P_4$ cannot be the zero polynomial. Indeed, if it were we would then have $P_2\cdot P_3=K_n$.

Now, fix $\theta_1\in\mathbb{T}$. By definition, we have
\begin{displaymath}
Z=\begin{pmatrix}
\sin(\theta_1) P_1(\omega)+\cos(\theta_1) P_2(\omega_n)\\
\sin(\theta_1) P_3(\omega)+\cos(\theta_1) P_4(\omega_n)
\end{pmatrix}.
\end{displaymath}
Using \eqref{simPol} when $\theta_1\neq \pi\Z$ and \eqref{simPol2} when $\theta=\pi\Z$, the direction of the vector $Z$ get close to the axis directed by $(1,0)$. Therefore, it suffices to show that $Z$ is not constantly co-linear to $(1,0)$ and conclude. As $\deg(P_3)>\deg(P_4)\leq 0$, the second component of the vector $Z$ cannot be constantly equal to zero for any value of $\theta$. This concludes the proof of Lemma~\ref{transtan2}.
\end{proof}

Now, Lemma~\ref{probcoli} is proven in the same way as in Subsection~\ref{subseccont} using Lemma~\ref{closetan2} and Lemma~\ref{transtan2} instead of Lemma~\ref{closetan} and Lemma~\ref{transtan}.

\appendix
\section{Analytic functions of several real variables}
In this section, we prove two properties of analytic functions of several real variables. For $x\in\R^n$, we will write $x=(\hat{x},x_n)$. We will use the Weierstrass preparation theorem (see for instance \cite{L65}), which we recall now. 
\begin{theo}\label{Weierstrass}
Let $\mathcal{O}$ be an open subset of $\R^n$ that contains the origin and let $f:\mathcal{O}\to\R$ be an analytic function vanishing at the origin such that the analytic function $x_n\mapsto f(0,\cdots,0,x_n)$ has a zero of order $m\in\N^*$ at $0$. There exists a neighborhood $\mathcal{U}$ of the origin, a Weierstrass polynomial $P(\hat{x},x_n)=a_0(\hat{x})+ a_1(\hat{x})x_n+\cdots+a_{m-1}(\hat{x})x_n^{m-1}+x_n^m$, defined on $\mathcal{U}$, with $a_i(0)=0$ for all $i\in\llbracket 1,m-1\rrbracket$, and an analytic function $g:\mathcal{U}\to\R$ with $g(0)\neq 0$, such that, for all $x\in\mathcal{U}$, we have $f(x)=P(x)g(x)$.
\end{theo}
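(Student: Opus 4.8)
The natural route is complex-analytic. The plan is to extend $f$ to a holomorphic function on a polydisc in $\C^n$, count the zeros of the last variable by Rouch\'e's theorem, and reconstruct $P$ from those zeros via a residue formula. First I would use that $f$ is real-analytic near the origin to fix a holomorphic extension $\tilde f$ on a polydisc $D=\{|\hat z|<\delta\}\times\{|z_n|<r'\}\subset\C^n$. Since $z_n\mapsto f(0,\dots,0,z_n)$ has a zero of order exactly $m$ at $0$, shrinking $r'$ to some $r$ we may assume $\tilde f(0,z_n)\neq 0$ for $0<|z_n|\le r$; by compactness $|\tilde f(0,z_n)|\geq c>0$ on $|z_n|=r$, and after shrinking $\delta$ we obtain $|\tilde f(\hat z,z_n)-\tilde f(0,z_n)|<|\tilde f(0,z_n)|$ on $\{|\hat z|<\delta\}\times\{|z_n|=r\}$. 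By Rouch\'e's theorem, for each fixed $\hat z$ with $|\hat z|<\delta$ the function $z_n\mapsto\tilde f(\hat z,z_n)$ has exactly $m$ zeros $\alpha_1(\hat z),\dots,\alpha_m(\hat z)$ (counted with multiplicity) in $\{|z_n|<r\}$.

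Next I would reconstruct the coefficients of $P$. For $k\geq 0$ I set
\[
s_k(\hat z)=\frac{1}{2\pi i}\oint_{|z_n|=r}z_n^k\,\frac{\partial_{z_n}\tilde f(\hat z,z_n)}{\tilde f(\hat z,z_n)}\,dz_n=\sum_{j=1}^m\alpha_j(\hat z)^k,
\]
which is holomorphic in $\hat z$ (differentiation under the integral sign, or Morera's theorem, the integrand being jointly continuous and holomorphic in $\hat z$). By Newton's identities the elementary symmetric functions $e_k(\hat z)$ of the roots are polynomials in the $s_k$, hence holomorphic; I then put $a_{m-k}(\hat z)=(-1)^k e_k(\hat z)$ and $P(\hat z,z_n)=z_n^m+a_{m-1}(\hat z)z_n^{m-1}+\cdots+a_0(\hat z)=\prod_{j=1}^m\bigl(z_n-\alpha_j(\hat z)\bigr)$. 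Since all roots collapse to $0$ at $\hat z=0$, we get $e_k(0)=0$, hence $a_i(0)=0$ for all $i$ and $P(0,z_n)=z_n^m$, so $P$ is a Weierstrass polynomial. Setting $g:=\tilde f/P$, for each fixed $\hat z$ the map $z_n\mapsto g(\hat z,z_n)$ has only removable singularities in $\{|z_n|<r\}$, because there the zeros of $\tilde f(\hat z,\cdot)$ are exactly those of $P(\hat z,\cdot)$ with matching multiplicities; and the Cauchy representation
\[
g(\hat z,z_n)=\frac{1}{2\pi i}\oint_{|\zeta|=r}\frac{\tilde f(\hat z,\zeta)/P(\hat z,\zeta)}{\zeta-z_n}\,d\zeta,\qquad |z_n|<r,
\]
shows $g$ is jointly holomorphic on $D$, with $g(0,0)=\lim_{z_n\to 0}\tilde f(0,z_n)/z_n^m\neq 0$ since the zero has order exactly $m$.

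Finally I would descend to the reals. For real $\hat x$ with $|\hat x|<\delta$, the one-variable function $\tilde f(\hat x,\cdot)$ has real Taylor coefficients, so its zeros in $\{|z_n|<r\}$ are stable under complex conjugation; hence each $e_k(\hat x)$ is real, so the $a_i$ restrict to real-analytic functions and $g$ restricts to a real-analytic, real-valued function on $\mathcal{U}:=\{|\hat x|<\delta\}\times\{|x_n|<r\}$. On $\mathcal{U}$ we then have $f=Pg$ with $P$ a Weierstrass polynomial, $a_i(0)=0$, and $g(0)\neq 0$. (An alternative, more algebraic proof runs through Weierstrass division in the ring of convergent power series, but the analytic argument above is the cleaner one to carry out.)

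I expect the main technical point to be the verification that the symmetric functions of the roots $\alpha_j(\hat z)$ are holomorphic in $\hat z$: the individual roots are generally neither real nor holomorphic (their multiplicities can jump), so one should not track them one by one, but rather read off their power sums from the residue integral as above, which is manifestly holomorphic. A secondary point is the passage back from the complex picture to the real-analyticity of $P$ and $g$, which hinges on the conjugation symmetry of the zero set over real base points.
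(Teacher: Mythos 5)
The paper does not prove this statement at all: Theorem~\ref{Weierstrass} is the classical Weierstrass preparation theorem, which the author simply recalls and cites from the literature (the reference \cite{L65}) before using it in Propositions~\ref{Loja} and~\ref{Loja2}. So there is no proof in the paper to compare yours against. Your argument is the standard complex-analytic proof of the preparation theorem (holomorphic extension of the real-analytic germ, Rouch\'e to count the $m$ roots in $z_n$ over a small polydisc, holomorphy of the power sums via the logarithmic-derivative residue integral, Newton's identities to get the elementary symmetric functions, the Cauchy integral representation to see that $g=\tilde f/P$ is jointly holomorphic and nonvanishing at the origin, and conjugation symmetry of the root set over real base points to descend to real-analytic $P$ and $g$), and it is correct; you also correctly identify the one genuinely delicate point, namely that the individual roots need not be holomorphic so one must work with their symmetric functions. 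The only cosmetic remark is that you prove $a_i(0)=0$ for all $i\in\llbracket 0,m-1\rrbracket$, which is what a Weierstrass polynomial requires; the paper's statement writes the range as $\llbracket 1,m-1\rrbracket$, apparently a typo, and your stronger conclusion is the right one.
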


We now prove the
\begin{prop}\label{Loja}
Fix $\Omega\subset \R^n$ an open set and $f:\Omega\to \R$ a non zero analytic function. Fix $G$ a compact subset of $\Omega$. There exist $\epsilon_0>0$ and $m\in\N^*$ such that, for all $0<\epsilon<\epsilon_0$, we have $\left|\{x\in G, |f(x)|<\epsilon\}\right |\leq  \left(\frac{\epsilon}{\epsilon_0}\right)^{1/m}$.
\end{prop}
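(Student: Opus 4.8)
The idea is to prove a local version of the bound near each point of $G$ and then patch the local estimates together using compactness. Precisely, I will show that every $p\in G$ admits an open neighbourhood $V_p$, constants $C_p>0$, $m_p\in\N^*$ and $\epsilon_p\in(0,1]$ such that $\big|\{x\in V_p:|f(x)|<\epsilon\}\big|\le C_p\,\epsilon^{1/m_p}$ for all $\epsilon\in(0,\epsilon_p)$. Granting this, I extract a finite subcover $V_{p_1},\dots,V_{p_k}$ of $G$, put $m:=\max_i m_{p_i}$, and use that $\epsilon^{1/m_{p_i}}\le\epsilon^{1/m}$ for $\epsilon\in(0,1)$ to get $\big|\{x\in G:|f(x)|<\epsilon\}\big|\le\big(\sum_i C_{p_i}\big)\,\epsilon^{1/m}$ whenever $\epsilon<\min_i\epsilon_{p_i}$. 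Setting $\epsilon_0:=\min\big(1,\min_i\epsilon_{p_i},(\sum_i C_{p_i})^{-m}\big)$, the right-hand side is $\le(\epsilon/\epsilon_0)^{1/m}$ for $\epsilon<\epsilon_0$, which is the claim. (Here the hypothesis must be read as: $f$ does not vanish identically on any connected component of $\Omega$ meeting $G$, which is automatic when $\Omega$ is connected; then $f$ is not identically zero near any point of $G$, by the identity theorem.)

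For the local claim, the easy case is $f(p)\ne 0$: by continuity $|f|\ge\delta>0$ on some neighbourhood $V_p$, so $\{x\in V_p:|f(x)|<\epsilon\}=\emptyset$ once $\epsilon<\delta$, and one may take $\epsilon_p=\delta$, $m_p=C_p=1$. The substantive case is $f(p)=0$; after a translation, assume $p=0$. Since $f$ is not identically zero near $0$, its Taylor expansion at $0$ has a lowest-order homogeneous part $P_{m_0}$, a non zero form of some degree $m_0\in\N^*$; choosing a unit vector $v$ with $P_{m_0}(v)\ne 0$ and performing an orthogonal change of coordinates taking $v$ to $e_n$ (this preserves Lebesgue measure, so it is harmless), we arrange that $x_n\mapsto f(0,\dots,0,x_n)=P_{m_0}(e_n)x_n^{m_0}+O(x_n^{m_0+1})$ has a zero of order exactly $m_0$ at $0$, so the hypotheses of Theorem~\ref{Weierstrass} are met.

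Applying Theorem~\ref{Weierstrass} at the origin and shrinking to a bounded product neighbourhood $\mathcal{U}=\hat{\mathcal{U}}\times(-a,a)$ of $0$, we write $f=P\,g$ on $\mathcal{U}$ with $g$ analytic, $g(0)\ne 0$, and $P(\hat x,x_n)=x_n^{m_0}+a_{m_0-1}(\hat x)x_n^{m_0-1}+\cdots+a_0(\hat x)$ a Weierstrass polynomial; shrinking $\mathcal{U}$ further to $V_p$ on which $|g|\ge c>0$, we get $|f(x)|<\epsilon\Rightarrow|P(\hat x,x_n)|<\epsilon/c$ on $V_p$. For fixed $\hat x$, $P(\hat x,\cdot)$ is a monic polynomial of degree $m_0$ with complex roots $\lambda_1,\dots,\lambda_{m_0}$; for real $x_n$, $|x_n-\lambda_j|\ge|x_n-\operatorname{Re}\lambda_j|$, hence $|P(\hat x,x_n)|\ge\prod_{j=1}^{m_0}|x_n-\operatorname{Re}\lambda_j|$, so $\{x_n:|P(\hat x,x_n)|<\eta\}\subset\bigcup_{j=1}^{m_0}\{x_n:|x_n-\operatorname{Re}\lambda_j|<\eta^{1/m_0}\}$, a set of measure at most $2m_0\,\eta^{1/m_0}$. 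Integrating over $\hat x\in\hat{\mathcal{U}}$ (Fubini) yields $\big|\{x\in V_p:|f(x)|<\epsilon\}\big|\le|\hat{\mathcal{U}}|\cdot 2m_0\,(\epsilon/c)^{1/m_0}$, the desired local bound with $m_p=m_0$, $C_p=2m_0|\hat{\mathcal{U}}|\,c^{-1/m_0}$, $\epsilon_p=1$. I do not expect a genuine obstacle: the only nontrivial ingredient is the Weierstrass preparation theorem, already available as Theorem~\ref{Weierstrass}, and everything else is elementary; the two points needing a little care are selecting the coordinate direction so that Theorem~\ref{Weierstrass} applies, and the bookkeeping in the patching step that produces the precise normalised form $(\epsilon/\epsilon_0)^{1/m}$.
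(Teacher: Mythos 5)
Your proof is correct and follows essentially the same route as the paper's: reduce to a local statement by compactness, handle the nonvanishing case trivially, rotate coordinates so that Weierstrass preparation (Theorem~\ref{Weierstrass}) applies at a zero, bound the sublevel set of the monic Weierstrass polynomial by a union of $m$ intervals around the real parts of its roots, and integrate in the remaining variables. Your write-up is somewhat more careful than the paper's on two points the paper leaves implicit --- why a suitable coordinate direction exists (via the lowest-order homogeneous Taylor part) and the bookkeeping that converts the patched local bounds $C\epsilon^{1/m_p}$ into the normalised form $(\epsilon/\epsilon_0)^{1/m}$ --- but these are refinements, not a different argument.
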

\begin{proof}
Since $G$ is compact, it suffices to prove that there exists a neighborhood of every point on which the result holds. Therefore, we fix a point $x\in G$. If $f(x)\neq 0$, the result is clear; now suppose $f(x)=0$. Without loss of generality, we can suppose that $x=0$. As $f$ is not constantly equal to zero, we can suppose that $h_n\mapsto f(0,\dots,0,h_n)$ is not constantly equal to zero near in a neighborhood of $0$ (if this is not the case, we do a rotation on the arguments) and therefore has a zero of order $m$ at $0$ for some $m\in\N^*$. The Theorem~\ref{Weierstrass} shows that there exist a neighborhood $\mathcal{U}$ of the origin, a Weierstrass polynomial $P$ as in Theorem~\ref{Weierstrass} and an analytic function $g$ not vanishing at the origin, such that, for all $x\in\mathcal{U}$, we have $f(x)=P(x)g(x)$. Now, take $\mathcal{V}:=]-\delta,\delta[^n$ included in $\mathcal{U}$ such that $\underset{x\in\mathcal{V}}{\inf} \,g(x)=:\frac{1}{C}>0$. Thus, if $x\in\mathcal{V}$ satisfies $|f(x)|\leq \epsilon$, then $|P(x)|\leq C\epsilon$. Therefore, it remains to prove the result for the Weierstrass polynomial $P$.

Fix $\epsilon>0$, $\hat{x}\in]-\delta,\delta[^{n-1}$ and suppose $|P(\hat{x},x_n)|\leq\epsilon$. As the polynomial $X_n\mapsto P(\hat{x},X_n)$ is unitary, let $(z_i(\hat{x}))_{i\in\llbracket 1,m\rrbracket}$ be its complex roots, repeated according to multiplicity. As $|P(x)|\leq\epsilon$, there exists $i\in\llbracket 1,m\rrbracket$ such that $|x_n-z_i(\hat{x})|\leq \epsilon^{\frac{1}{m}}$. Furthermore, $x_n$ is real number, hence we have $|x_n-\Re\left(z_i(\hat{x})\right)|\leq \epsilon^{\frac{1}{m}}$ and
\begin{displaymath}
x_n\in \mathcal{A}_\epsilon(\hat{x}):=\bigcup_{i\in\llbracket 1,m\llbracket} \left]\Re\left(z_i(\hat{x})\right)-\epsilon^{\frac{1}{m}},\Re\left(z_i(\hat{x})\right)+\epsilon^{\frac{1}{m}}\right[\,.
\end{displaymath}
The size of $\mathcal{A}_\epsilon(\hat{x})$ is smaller than $2m\epsilon^{\frac{1}{m}}$. Therefore, we have
\begin{equation*}
\left|\{x\in \mathcal{V}, |P(x)|< \epsilon\}\right|\leq \int_{\hat{x}\in]-\delta,\delta[^{n-1}} \left(\int_{\mathcal{A}_\epsilon(\hat{x})} dx_n\right) d\hat{x} \leq (2\delta)^{n-1}2m\epsilon^{\frac{1}{m}}.
\end{equation*}
This conclude the proof of Proposition~\ref{Loja}.
\end{proof}

We now prove an other proposition, connected to Proposition~\ref{Loja}.

\begin{prop}\label{Loja2}
Fix $\Omega\subset \R^n$ an open set containing the origin and $f:\Omega\to \R$ an analytic function such that, for all $(\hat{x},x_n)\in\Omega$, the function $h_n\mapsto f(\hat{x},x_n+h_n)$ is not constantly equal to zero in a neighborhood of the $0$. Fix $G:=[-M,M]^n$ a compact subset of $\Omega$. There exists $\epsilon_0>0$ and $m\in\N^*$ such that, for all $0<\epsilon<\epsilon_0$ and $\hat{x}\in[-M,M]^{n-1}$, we have $\left|\{x_n\in[-M,M], |f(\hat{x},x_n)|<\epsilon\}\right |\leq  \left(\frac{\epsilon}{\epsilon_0}\right)^{1/m}$.
\end{prop}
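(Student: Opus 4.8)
The plan is to follow closely the proof of Proposition~\ref{Loja}, the one genuinely new point being that the hypothesis on $f$ is exactly what is needed to apply the Weierstrass preparation theorem in the $x_n$-variable at every point of $G$ \emph{without} the coordinate rotation used in Proposition~\ref{Loja} (a rotation would destroy the slice structure that must be preserved here). Uniformity in $\hat x$ will then follow from a finite subcover of the compact set $G$.

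First I would fix a point $a=(\hat a,a_n)\in G$ and translate it to the origin, setting $\tilde f(x):=f(x+a)$. If $\tilde f(0)\neq 0$, then $|\tilde f|\geq c>0$ on a small box $\mathcal{V}_a$ about the origin, and for $\epsilon<c$ the relevant slice set is empty there. If $\tilde f(0)=0$, the hypothesis applied at $a$ says that $x_n\mapsto\tilde f(0,\dots,0,x_n)$ is not identically zero near $0$, hence has a zero of some finite order $m_a\in\N^*$; Theorem~\ref{Weierstrass} then provides a neighborhood $\mathcal{U}_a$ of the origin, a Weierstrass polynomial $P_a(\hat x,x_n)$ monic of degree $m_a$ in $x_n$, and an analytic $g_a$ with $g_a(0)\neq 0$, with $\tilde f=P_a g_a$ on $\mathcal{U}_a$. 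Shrinking to a box $\mathcal{V}_a:=(-\delta_a,\delta_a)^n\subset\mathcal{U}_a$ on which $|g_a|\geq 1/C_a>0$, the inequality $|\tilde f|<\epsilon$ forces $|P_a(\hat x,x_n)|<C_a\epsilon$.

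Next, for fixed $\hat x$ the polynomial $X_n\mapsto P_a(\hat x,X_n)$ is monic of degree $m_a$, so if $z_1(\hat x),\dots,z_{m_a}(\hat x)$ are its complex roots (with multiplicity) then $|P_a(\hat x,x_n)|=\prod_i|x_n-z_i(\hat x)|$; hence $|P_a(\hat x,x_n)|<C_a\epsilon$ implies $|x_n-z_i(\hat x)|<(C_a\epsilon)^{1/m_a}$, and a fortiori $|x_n-\Re z_i(\hat x)|<(C_a\epsilon)^{1/m_a}$, for some $i$. Translating back, for every $\hat x$ the set $\{x_n:(\hat x,x_n)\in\mathcal{V}_a,\ |f(\hat x,x_n)|<\epsilon\}$ is covered by $m_a$ intervals of length $2(C_a\epsilon)^{1/m_a}$, so has measure at most $2m_a(C_a\epsilon)^{1/m_a}$.

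Finally I would extract a finite subcover $\mathcal{V}_{a_1},\dots,\mathcal{V}_{a_K}$ of $G$ by such boxes, each of product form $\hat B_j\times I_j$. For fixed $\hat x\in[-M,M]^{n-1}$ the slice $\{\hat x\}\times[-M,M]$ is covered by those $I_j$ with $\hat x\in\hat B_j$, so summing the previous bound over $j=1,\dots,K$ gives
\[
\left|\{x_n\in[-M,M]:|f(\hat x,x_n)|<\epsilon\}\right|\leq\sum_{j=1}^{K}2m_{a_j}(C_{a_j}\epsilon)^{1/m_{a_j}}.
\]
Setting $m:=\max_j m_{a_j}$, each summand is $O(\epsilon^{1/m_{a_j}})\leq O(\epsilon^{1/m})$ for $\epsilon<1$, so the right-hand side is at most $C'\epsilon^{1/m}$ for some $C'$ depending only on $f$ and $G$; taking $\epsilon_0:=\min(1,(C')^{-m})$ makes this $\leq(\epsilon/\epsilon_0)^{1/m}$ for all $0<\epsilon<\epsilon_0$, which proves the proposition. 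The only mildly delicate point is organizing the finite cover so the various exponents and constants can be merged into a single pair $(\epsilon_0,m)$ valid uniformly in $\hat x$; everything else is the one-variable root-counting argument already used for Proposition~\ref{Loja}.
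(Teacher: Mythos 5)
Your proof is correct and follows essentially the same route as the paper: the paper's own (much terser) argument likewise reduces by compactness to a local statement, invokes the hypothesis precisely so that Theorem~\ref{Weierstrass} can be applied in the $n$th coordinate without any rotation, and then concludes by the one-variable root-counting argument of Proposition~\ref{Loja}. You have merely written out the finite-subcover bookkeeping and the merging of the exponents $m_{a_j}$ and constants $C_{a_j}$ into a single pair $(\epsilon_0,m)$, which the paper leaves implicit.
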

\begin{proof}
As $G$ is compact, it suffices to proce that, for all $(\hat{x},x_n)\in[-M,M]^n$, there exist $\delta>0$, $\epsilon_0>0$ and $m\in\N^*$ such that, for all $\hat{y}\in\hat{x}+]-\delta,\delta[^{n-1}$, we have 
\begin{displaymath}
\left|\{h_n\in [-\delta,\delta], |f(\hat{y},x_n+h_n)|<\epsilon\}\right |\leq  \left(\frac{\epsilon}{\epsilon_0}\right)^{1/m}.
\end{displaymath}
As the function $h_n\mapsto f(\hat{y},x_n+h_n)$ is not constantly equal to zero in a neighborhood of $0$, we can apply the Theorem~\ref{Weierstrass} according to the $n$th coordinate and conclude as in the proof of Proposition~\ref{Loja}.
\end{proof}

\section{Non oscillating solutions of Sturm-Liouville equations}\label{sturm}
In this section we prove two lemmas that are used in subsection~\ref{subseccont}. Let $q$ be a positive continuous function on $(0,1)$, $\tilde{q}$ be the one-periodic function that is equal to $q$ on $(0,1)$ and $E_2<E_1$. Suppose $m:=\underset{x(0,1)}\inf q(x)>0$. We define the ODEs
\begin{equation}\label{ODEapp}
(\mathcal{E}_\lambda^j) : \forall x\in\R,y''(x)=\left(\lambda \tilde{q}(x)-E_j\right) y(x).
\end{equation}

The purpose of this section is to prove that for $\lambda$ large enough, the solutions of \eqref{ODEapp} don't oscillate and therefore can be easily compared one another.
\begin{lem}\label{onezero}
Fix $\lambda$ such that $\lambda m > E_j$. Then, any non-zero solution of $(\mathcal{E}_\lambda^j)$ has at most one zero.
\end{lem}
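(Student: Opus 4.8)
The plan is to exploit the fact that, under the hypothesis $\lambda m > E_j$, the coefficient appearing in $(\mathcal{E}_\lambda^j)$ is everywhere strictly positive, which forces any solution to be convex wherever it is positive and concave wherever it is negative; such a function cannot oscillate.

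First I would record the elementary observation that, since $\tilde{q}$ is the $1$-periodic extension of $q$ and $\inf_{(0,1)} q = m$, one has $\tilde{q}(x) \ge m$ for every $x \in \R$ (outside the null set of integers, which is irrelevant for the differential equation). Consequently the potential $W(x) := \lambda \tilde{q}(x) - E_j$ satisfies $W(x) \ge \lambda m - E_j > 0$ for all $x \in \R$, and in the situations of interest $W$ is also locally bounded, so the Cauchy problem for the linear equation $(\mathcal{E}_\lambda^j)$ is well posed.

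Next, arguing by contradiction, I would suppose that some non-zero solution $y$ of $(\mathcal{E}_\lambda^j)$ vanishes at two distinct points $a < b$. Restricting $y$ to the compact interval $[a,b]$, it attains there a maximum and a minimum. If the maximum value is positive, it is attained at some interior point $c \in (a,b)$; at such a point $y'(c) = 0$ while the equation gives $y''(c) = W(c)\,y(c) > 0$, contradicting the fact that an interior maximum of a $C^2$ function has non-positive second derivative. Symmetrically, if the minimum value is negative it is attained at an interior point where $y'' = W y < 0$, again impossible. Hence $y$ vanishes identically on $[a,b]$; in particular $y(a) = y'(a) = 0$, and uniqueness for the Cauchy problem associated to $(\mathcal{E}_\lambda^j)$ forces $y \equiv 0$ on $\R$, contradicting the choice of $y$. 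Therefore $y$ has at most one zero.

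I do not expect any real obstacle here: this is a one-line maximum-principle (or convexity) argument once positivity of the coefficient is in place. The only points deserving a word of care are that $\tilde{q} \ge m$ holds on the whole line (immediate from periodicity) and that the linear ODE has a well-posed Cauchy theory, so that vanishing to second order at a single point propagates to vanishing everywhere; both are standard.
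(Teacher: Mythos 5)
Your proof is correct and rests on exactly the same mechanism as the paper's: under $\lambda m > E_j$ the coefficient $\lambda\tilde q - E_j$ is positive, so the solution is convex where positive and concave where negative, which rules out a positive interior maximum or negative interior minimum between two zeros. The paper phrases this by fixing one zero $x_0$, noting $u'(x_0)\neq 0$, and showing the sign of $u$ is then determined on each side of $x_0$, while you assume two zeros and combine the maximum principle on $[a,b]$ with Cauchy uniqueness — a cosmetic difference only.
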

\begin{proof}
Let $u$ be a non-zero solution of $(\mathcal{E}_\lambda^j)$ with a zero at point $x_0$. Without loss of generality we can suppose that $u'(x_0)>0$. Then, for any $x>x_0$, $u(x)>0$. If not, there would exists $x_1$ such that $u''(x_1)<0$ and $u(x_1)>0$, which is in contradiction with the fact that $\lambda m>E_j$.
In the same way, we prove that for any $x<x_0$, $u(x)<0$.
\end{proof}

\begin{lem}\label{threezero}
Fix $\lambda$ such that $\lambda m > E_1>E_2$. Let $u$ (respectively $v$) be a non-zero solution of $(\mathcal{E}_\lambda^1)$, positive near $+\infty$ (respectively $(\mathcal{E}_\lambda^2)$, positive near $+\infty$). Then $v-u$ changes of sign at most three times.
\end{lem}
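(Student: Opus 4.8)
The plan is to work with $w:=v-u$ and the two coefficient functions $p_j:=\lambda\tilde q-E_j$, which are both \emph{strictly positive} on $\R$ since $\lambda m>E_1>E_2$. Put $\delta:=E_1-E_2>0$. Expanding $w''=v''-u''$ and rewriting the potential term in two ways yields the identities
\begin{equation*}
w''=p_1w+\delta v=p_2w+\delta u,
\end{equation*}
and these, together with Lemma~\ref{onezero}, are essentially all I need. First I would record the sign patterns of $u$ and $v$: by Lemma~\ref{onezero} and the assumption that both are positive near $+\infty$, each is either positive throughout $\R$ or changes sign exactly once, say at $x_0^u$ (resp.\ $x_0^v$), being negative to the left and positive to the right. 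Write $a:=\min(x_0^u,x_0^v)\le b:=\max(x_0^u,x_0^v)$, with the convention $x_0^u=-\infty$ if $u>0$ everywhere, and likewise for $v$. On the middle interval $(a,b)$ one of $u,v$ is $>0$ and the other $<0$, so $w$ has constant sign there, in particular no zero; and the value of $w$ at $a$ (resp.\ $b$) is $\pm$ the value of the non-vanishing one of $u,v$ at that point, which fixes the sign of $w$ near $a$ and near $b$.

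The main step is a maximum-principle observation drawn from the identities. At an interior local minimum $x^{\ast}$ of $w$ with $w(x^{\ast})<0$ the first identity gives $w''(x^{\ast})=p_1(x^{\ast})w(x^{\ast})+\delta v(x^{\ast})$; if moreover $v(x^{\ast})\le 0$ this is negative, contradicting $w''(x^{\ast})\ge 0$. Hence on $\{v\le 0\}$, and (by the second identity) on $\{u\le 0\}$, $w$ has no interior negative local minimum, and symmetrically on $\{v\ge 0\}\cup\{u\ge 0\}$ it has no interior positive local maximum. Since $\{v\le 0\}\cup\{u\le 0\}\supseteq(-\infty,b)$ and $\{v\ge 0\}\cup\{u\ge 0\}\supseteq(a,\infty)$, this says: $w$ has no interior negative local minimum on $(-\infty,b)$, and no interior positive local maximum on $(a,\infty)$. (One also notes $w\not\equiv 0$ on any subinterval, since $u$ and $v$ solve distinct equations, so these extrema are strict.) Consequently, between two consecutive sign changes of $w$ lying in $(-\infty,b)$ the function $w$ must be positive — otherwise it would attain an interior negative minimum — and three sign changes in $(-\infty,b)$ would force $w$ to be positive on both sides of the middle one, which is impossible; so $w$ has at most two sign changes in $(-\infty,b)$, at most two in $(a,\infty)$, and none in $(a,b)$.

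Finally I would glue these counts using the boundary values to exclude a $2+2$ situation. If $x_0^u<x_0^v$ then $w(a)=v(a)<0$ and $w(b)=-u(b)<0$, so $w<0$ immediately to the right of $b$; combined with the rule that $w<0$ between consecutive sign changes in $(a,\infty)$, this leaves at most one sign change in $(b,\infty)$, which with at most two in $(-\infty,a]$ gives at most three in total. The case $x_0^v<x_0^u$ is the mirror image (now $w(a),w(b)>0$, so at most one sign change in $(-\infty,a)$ and at most two in $(b,\infty)$), and the degenerate cases $x_0^u=x_0^v$, or one/both of them $=-\infty$, are easier and give at most two. I expect this last bookkeeping --- tracking the two orderings of $x_0^u,x_0^v$ and the boundary degeneracies --- to be the only slightly fiddly part, everything else being the one-line identities above. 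As a consistency check, the Wronskian $W:=uv'-u'v$ satisfies $W'=\delta\,uv$, which is nonnegative on $(-\infty,a)\cup(b,\infty)$ and nonpositive on $(a,b)$, so $W$ is increasing--decreasing--increasing and has at most three zeros, matching the claimed bound.
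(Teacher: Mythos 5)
Your proof is correct and follows essentially the same route as the paper: the same auxiliary function $w=v-u$, the same identity $w''=(\lambda\tilde q-E_1)w+(E_1-E_2)v$ (you add its twin involving $u$), and the same maximum-principle style sign-counting, written up more systematically than in the paper. One small point: in the deferred degenerate case $x_0^u=x_0^v=c$ your machinery yields at most \emph{three} sign changes rather than the ``at most two'' you assert (the sub-case with $w>0$ just left of $c$ and $w<0$ just right of it allows $1+1+1$), but this still lies within the claimed bound, so nothing is lost.
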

\begin{proof}
Indeed, if $w:=v-u$, then $w$ satisfies the equation 
\begin{equation}\label{ODEdiff}
w''=(\lambda \tilde{q}-E_1)w+(E_1-E_2)v.
\end{equation}
Now, suppose there exists $x_+$ such that $v(x_+)\geq 0$ and such that $w(x_+)=0$ and $w$ is negative on the left side, positive on the right side of $x_+$. Then, for $x>x_0$, $w(x)>0$ and $v(x)>0$. If not, there would exists $x_1>x_0$ such that $v(x_1)>0$, $w(x_1)>0$ and $w''(x_1)\leq 0$, which would be in contradiction with \eqref{ODEdiff}. 

In the same way we prove that if there exists $x_-$ such that $v(x_-)\leq 0$, $w(x_-)=0$ and $w$ is negative on the left side, positive on the right side of $x_-$, then for $x<x_-$, $w(x)<0$ and $v(x)<0$. Now, as $w$ takes both sign in $(x_-,x_+)$ there exists one and only one $x_c\in(x_-,x_+)$ such that $w$ changes of sign at $x_c$.
\end{proof}

\begin{rem}\label{remzero}\normalfont
One can prove the results of this section with a potential $q$ that has isolated zero. Indeed, in an interval that contain only one zero of q, any solution of \eqref{ODEapp} oscillate with a pulsation close to $E$. But as $\lambda$ tends to infinity, the length of such an interval tends to zero. Thus, for $\lambda$ large enough, the function are still strictly increasing and one can prove Lemma~\ref{onezero} and Lemma~\ref{threezero} in the same way as above.
\end{rem}

\section{Non oscillating solutions of finite difference equations of order two}\label{FDEsec}
 There are discrete equivalent of the results of Appendix~\ref{sturm}.  Fix $k\in\N$ and $E_2>E_1$. Let $(a_n)_{n\in\llbracket 1,k\rrbracket}\in\R^k$ and $(\tilde{a}_m)_{m\in\Z}$ its $k$-periodic extension. We suppose there exists $m>0$ such that

\begin{equation}
\min(a_n)_{n\in\llbracket 1,k\rrbracket}>m.
\end{equation}
We define the finite-difference equation :
\begin{equation}\label{FDEapp}
(\mathcal{E}_\lambda^j) : \forall m\in\Z,u(m+1)+u(m-1)=\left(E_j-\lambda \tilde{a}_m\right) u(m).
\end{equation}

\begin{lem}\label{onezero2}
Fix $\lambda<0$ such that $E_j-\lambda m>2$. Then, any non-zero solution of $(\mathcal{E}_\lambda^j)$ changes of sign at most one time.
\end{lem}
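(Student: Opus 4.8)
The plan is to follow the proof of Lemma~\ref{onezero}, replacing the convexity argument available for the ODE by an elementary monotonicity argument for the recurrence. First I would observe that, since $\lambda<0$ and $\tilde a_m>m$ for every $m\in\Z$, the coefficient $c_m:=E_j-\lambda\tilde a_m$ satisfies $c_m=E_j+(-\lambda)\tilde a_m>E_j-\lambda m>2$, so that \eqref{FDEapp} reads $u(m+1)=c_mu(m)-u(m-1)$ with $c_m>2$ throughout; equivalently $u(m+1)-2u(m)+u(m-1)=(c_m-2)u(m)$, the discrete analogue of the convexity $u''=(\lambda\tilde q-E_j)u$ with positive coefficient used in Lemma~\ref{onezero}.

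The heart of the matter is a one-sided monotonicity principle: if $u(m_0)>0$ and $u(m_0)\ge u(m_0-1)$, then $u(m)\ge u(m_0)>0$ for all $m\ge m_0$. This follows by induction, since $u(m_0+1)=c_{m_0}u(m_0)-u(m_0-1)>2u(m_0)-u(m_0-1)\ge u(m_0)>0$ re-establishes the hypothesis at $m_0+1$. The reflection $m\mapsto-m$ gives the mirror statement toward $-\infty$ (if $u(m_0)>0$ and $u(m_0)\ge u(m_0+1)$, then $u(m)\ge u(m_0)>0$ for all $m\le m_0$), and applying the principle to $-u$ gives: if $u(m_0)<0$ and $u(m_0)\le u(m_0-1)$, then $u(m)<0$ for all $m\ge m_0$.

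Then I would argue by contradiction. Suppose $u\not\equiv0$ changes sign at least twice; after possibly replacing $u$ by $-u$ there are $p<q<r$ with $u(p)>0$, $u(q)<0$, $u(r)>0$. Let $s$ be the largest index in $[p,q]$ with $u(s)>0$; then $s\le q-1$, hence $r\ge q+1\ge s+2$, and by maximality $u(s+1)\le 0$. If $u(s+1)<0$, the third form of the monotonicity principle applied at $m_0=s+1$ (using $u(s+1)<0<u(s)=u((s+1)-1)$) forces $u(m)<0$ for all $m\ge s+1$, contradicting $u(r)>0$. If $u(s+1)=0$, then $u(s+2)=c_{s+1}u(s+1)-u(s)=-u(s)<0$, and the same principle at $m_0=s+2$ forces $u(m)<0$ for all $m\ge s+2$, again contradicting $u(r)>0$. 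Hence $u$ changes sign at most once.

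The main obstacle, such as it is, is purely bookkeeping: keeping track of strict versus non-strict inequalities so that the inductive hypothesis in the monotonicity principle genuinely propagates, and dealing with the degenerate possibility that $u$ vanishes exactly at an integer (the point $s+1$) rather than merely flipping sign between two consecutive integers — which is precisely why the extra identity $u(s+2)=-u(s)$ is needed. No deeper input is required; the argument is a direct transcription of the Sturm-type reasoning of Lemma~\ref{onezero} into the finite-difference setting.
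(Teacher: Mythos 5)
Your proof is correct and follows essentially the same route as the paper: the key mechanism in both is that the coefficient $E_j-\lambda\tilde a_m>2$ turns the recurrence into a discrete convexity statement, so that once $u$ is positive and non-decreasing at one step it stays positive (and increasing) forever to the right, which rules out a second sign change. Your write-up is somewhat more careful than the paper's (explicit induction, and the degenerate case $u(s+1)=0$ handled via $u(s+2)=-u(s)$), but the underlying argument is the same.
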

\begin{proof}
Let $u$ be a non-zero solution of $(\mathcal{E}_\lambda^j)$ that changes sign at $(n_0,n_0+1)$. Without loss of generality we can suppose that $u(n_0+1)>u(n_0)$. Then, for any $n\geq n_0+1$, $u(n)>0$. If not, there would exists $n_1$ such that $u(n_1)>0$ and $u(n_1+1)+u(n_1-1)<2u(n_1)$. Therefore, we would have
\begin{displaymath}
u(n_1+1)+u(n_1-1)<2u(n_1)
\end{displaymath} which is in contradiction with the fact that $E_j-\lambda m>2$.
In the same way, we prove that for any $n\leq n_0$, $u(n_0)<0$.
\end{proof}

\begin{lem}\label{threezero2}
Fix $\lambda<0$ such that $E_2-\lambda m > E_1-\lambda m>2$. Let $u$ (respectively $v$) be a non-zero solution of $(\mathcal{E}_\lambda^1)$ (respectively $(\mathcal{E}_\lambda^2)$). Then $v-u$ changes of sign at most three times.
\end{lem}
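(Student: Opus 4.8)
The plan is to transcribe the proof of Lemma~\ref{threezero} to the discrete setting, replacing second derivatives by second differences, the continuous sign‑change arguments by their discrete analogues, and Lemma~\ref{onezero} by Lemma~\ref{onezero2}. Replacing $(u,v)$ by $(-u,-v)$ turns $w:=v-u$ into $-w$ and leaves the number of sign changes of $w$ unchanged; since by Lemma~\ref{onezero2} the non‑zero solution $v$ of $(\mathcal{E}_\lambda^2)$ changes sign at most once, after this reduction we may assume $v$ is positive near $+\infty$, so there is $a\in\Z\cup\{-\infty\}$ with $v(n)>0$ for $n>a$ and $v(n)\le 0$ for $n\le a$ (taking $a=-\infty$ if $v>0$ everywhere). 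Subtracting the recurrence $(\mathcal{E}_\lambda^2)$ for $v$ from $(\mathcal{E}_\lambda^1)$ for $u$ gives, for all $m\in\Z$,
\[
w(m+1)+w(m-1)=(E_1-\lambda\tilde a_m)\,w(m)+(E_2-E_1)\,v(m),
\]
and since $\lambda<0$ and $\tilde a_m\ge m$ we have $E_1-\lambda\tilde a_m\ge E_1-\lambda m>2$, while $E_2-E_1>0$; note also $w\not\equiv 0$ because $u$ and $v$ solve distinct recurrences.

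Next I would establish the discrete counterpart of the monotonicity mechanism used in Lemmas~\ref{onezero2} and \ref{threezero}: if $w$ undergoes an \emph{upward} sign change at an index $n_+$ (i.e.\ $w\le 0$ just to the left and $w>0$ just to the right of $n_+$) and $v\ge 0$ there, then $w(n)>0$ for every $n$ to the right of $n_+$ and $n\mapsto w(n)$ is strictly increasing there; symmetrically, a \emph{downward} sign change of $w$ at an index $n_-$ with $v\le 0$ there forces $w(n)<0$ for every $n$ to the left of $n_-$. This is a one‑line induction on the displayed recurrence: as soon as $w(n)>0$, $w(n)\ge w(n-1)$ and $v(n)\ge 0$, one gets $w(n+1)=(E_1-\lambda\tilde a_n)w(n)-w(n-1)+(E_2-E_1)v(n)>2w(n)-w(n-1)\ge w(n)>0$; the positivity of $v$ on the right is carried along by Lemma~\ref{onezero2}, and the same computation shows $w$ is eventually monotone near $\pm\infty$, so it has only finitely many sign changes.

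Finally I would count the sign changes. By the monotonicity step, an upward sign change of $w$ occurring at an index $n\le a$ (where $v\le 0$) forces $w<0$ on the whole half‑line to its left, hence it is the \emph{first} sign change of $w$; an upward sign change at an index $n>a$ (where $v>0$) forces $w>0$ to its right, hence it is the \emph{last} sign change of $w$. Thus $w$ has at most two upward sign changes, and since upward and downward sign changes of a sequence necessarily alternate, a short case distinction on whether the first sign change is upward or downward — exactly as in the concluding lines of the proof of Lemma~\ref{threezero} — gives at most three sign changes in total; the extremal configuration is one upward sign change in the region $\{v\le 0\}$, one downward sign change in between, and one upward sign change in the region $\{v>0\}$, with no sign change outside the interval they bound.

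I expect the only genuinely delicate point to be the bookkeeping attached to the discrete notion of a sign change: the statements above and their proofs must be phrased so as to also cover the degenerate situations in which $w$ or $v$ vanishes at one of the relevant indices, which is harmless since two consecutive zeros of a solution of one of these recurrences (or of $w$) force that sequence to vanish identically. Apart from that, the argument is a routine transcription of the proofs of Lemma~\ref{onezero2} and Lemma~\ref{threezero}.
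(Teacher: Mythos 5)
Your proposal is correct and follows the paper's proof of Lemma~\ref{threezero2} (itself a transcription of Lemma~\ref{threezero}) essentially verbatim: the same recurrence for $w=v-u$, the same one-step induction exploiting $E_1-\lambda\tilde{a}_m>2$ and $E_2-E_1>0$, and the same conclusion that every upward sign change of $w$ must be either the first or the last one, whence at most three in total. The only blemish is the word ``downward'' in your statement of the symmetric claim: as your own final counting paragraph (and the paper) makes clear, the claim needed is that an \emph{upward} sign change of $w$ at an index where $v\le 0$ forces $w<0$ on the whole half-line to its left.
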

\begin{proof}
Indeed, if $w:=v-u$, then $w$ satisfies the equation 
\begin{equation}\label{FDEdiff}
w(n+1)+w(n-1)=(E_1-\lambda a_n) w(n)+(E_2-E_1)v.
\end{equation}
Now, suppose there exists $n_+$ such that $v(n_+)\geq 0$, $w(n_+)\leq 0$ and \\
$w(n_+ +1)\geq 0$. Then, for $n\geq n_+ +1$, $w(n)>0$ and $v(n)>0$. If not, there would exists $n_1\geq n_++1$ such that $v(n_1)>0$, $w(n_1)>0$ and $w(n_1-1)+w(n_1+1)<2w(n_1)$, which would be in contradiction with \eqref{FDEdiff}.

In the same way we prove that if there exists $n_-$ such that $v(n_-)\leq 0$, $w(n_- -1)\leq 0$ and $w(n_-)\geq 0$ then for $n\leq n_- -1$, $w(n)<0$ and $v(n)<0$. Now, as $w$ takes both sign in $(n_-,n_+)$ there exists one and only one $n_c\in\llbracket n_-,n_+ \rrbracket$ such that $w$ changes of sign at $x_c$.
\end{proof}

\bibliographystyle{plain}
\bibliography{biblio2}

\begin{thebibliography}{10}

\bibitem{CL90}
Ren{\'e} Carmona and Jean Lacroix.
\newblock {\em Spectral theory of random {S}chr\"odinger operators}.
\newblock Probability and its Applications. Birkh\"auser Boston, Inc., Boston,
  MA, 1990.

\bibitem{CGK09}
Jean-Michel Combes, Fran\c{c}ois Germinet, and Abel Klein.
\newblock Generalized eigenvalue-counting estimates for the anderson model.
\newblock {\em Journal of Statistical Physics}, 135:201--216, 2009.
\newblock 10.1007/s10955-009-9731-3.

\bibitem{CHK07}
Jean-Michel Combes, Peter~D. Hislop, and Fr{\'e}d{\'e}ric Klopp.
\newblock An optimal {W}egner estimate and its application to the global
  continuity of the integrated density of states for random {S}chr\"odinger
  operators.
\newblock {\em Duke Math. J.}, 140(3):469--498, 2007.

\bibitem{DG00}
Stephan De~Bi{\`e}vre and Fran{\c{c}}ois Germinet.
\newblock Dynamical localization for the random dimer {S}chr\"odinger operator.
\newblock {\em J. Statist. Phys.}, 98(5-6):1135--1148, 2000.

\bibitem{GK10}
F.~{Germinet} and F.~{Klopp}.
\newblock {Spectral statistics for random Schr\"{o}dinger operators in the
  localized regime}.
\newblock {\em ArXiv e-prints}, November 2010.

\bibitem{GK01}
Fran{\c{c}}ois Germinet and Abel Klein.
\newblock Bootstrap multiscale analysis and localization in random media.
\newblock {\em Comm. Math. Phys.}, 222(2):415--448, 2001.

\bibitem{K95}
Fr{\'e}d{\'e}ric Klopp.
\newblock Localization for some continuous random {S}chr\"odinger operators.
\newblock {\em Comm. Math. Phys.}, 167(3):553--569, 1995.

\bibitem{K11}
Fr{\'e}d{\'e}ric Klopp.
\newblock Decorrelation estimates for the eigenlevels of the discrete
  {A}nderson model in the localized regime.
\newblock {\em Comm. Math. Phys.}, 303(1):233--260, 2011.

\bibitem{K14}
Fr{\'e}d{\'e}ric Klopp.
\newblock Inverse tunneling estimates and applications to the study of spectral
  statistics of random operators on the real line.
\newblock {\em J. Reine Angew. Math.}, 690:79--113, 2014.

\bibitem{L65}
S.~{\L}ojasiewicz.
\newblock {\em Ensembles semi-analytiques}.
\newblock Institut des Hautes Etudes Scientifiques, 1965.

\bibitem{S13}
C.~{Shirley}.
\newblock {Decorrelation estimates for random discrete Schr\"{o}dinger
  operators in dimension one and applications to spectral statistics}.
\newblock {\em ArXiv e-prints}, November 2013.

\bibitem{TV14}
Martin Tautenhahn and Ivan Veseli{\'c}.
\newblock Minami's estimate: beyond rank one perturbation and monotonicity.
\newblock {\em Ann. Henri Poincar\'e}, 15(4):737--754, 2014.

\bibitem{T14}
Tuan~Phong Trinh.
\newblock Decorrelation estimates for a 1{D} tight binding model in the
  localized regime.
\newblock {\em Ann. Henri Poincar\'e}, 15(3):469--499, 2014.

\bibitem{V10}
Ivan Veseli{\'c}.
\newblock Wegner estimate for discrete alloy-type models.
\newblock {\em Ann. Henri Poincar\'e}, 11(5):991--1005, 2010.

\bibitem{Z05}
Anton Zettl.
\newblock {\em Sturm-{L}iouville theory}, volume 121 of {\em Mathematical
  Surveys and Monographs}.
\newblock American Mathematical Society, Providence, RI, 2005.

\end{thebibliography}

\end{document}